\newif\iffinal
\else \usepackage[notref,notcite]{showkeys}\fi
\newtheorem{theorem}{Theorem}[section]
\newtheorem{lemma}[theorem]{Lemma}
\newtheorem{assumption}[theorem]{Assumption}
\crefname{theorem}{Theorem}{Theorems}
\crefname{lemma}{Lemma}{Lemmas}
\crefname{proposition}{Proposition}{Propositions}
\crefname{section}{Section}{Sections}
\crefname{assumption}{Assumption}{Assumptions}
\crefname{equation}{Eq.}{Eqs.}
\newcommand{\ba}{\begin{align}}
\newcommand{\ea}{\end{align}}
\newcommand{\E}{\mathbf{E}}
\renewcommand{\Re}{\operatorname{Re}}
\newcommand{\ket}[1]{|{#1}\rangle}
\newcommand{\bra}[1]{\langle{#1}|}
\renewcommand{\d}[1]{\nabla_{#1}}
\renewcommand{\Re}{\mathrm{Re}\, }
\newcommand\otimesal{\mathop{\hbox{\raise 1.6 ex
  \hbox{$\scriptscriptstyle\mathrm{al}$}
\kern -0.92 em \hbox{$\otimes$}}}}
\newcommand\oplusal{\mathop{\hbox{\raise 1.6 ex
  \hbox{$\scriptscriptstyle\mathrm{al}$}
\kern -0.92 em \hbox{$\oplus$}}}}
\newcommand\Gammal{\hbox{\raise 1.7 ex
\hbox{$\scriptscriptstyle\mathrm{al}$}\kern -0.50 em $\Gamma$}}
\newcommand{\caM}{{\mathcal M}}
\newcommand{\caN}{{\mathcal N}}
\newcommand{\caP}{{\mathcal P}}
\newcommand{\caX}{{\mathcal X}}
\newcommand{\bbE}{{\mathbb E}}
\newcommand{\opunit}{\text{1}\kern-0.22em\text{l}}
\renewcommand{\d}{{\mathrm d}}
\newcommand{\beq}{ \begin{equation} }
\newcommand{\beqs}{ \begin{equation*} }
\newcommand{\eeqs}{ \end{equation*} }
\newcommand{\eeq}{ \end{equation} }
\newcommand{\bet}{ \begin{theorem} }
\newcommand{\eet}{ \end{theorem} }
\newcommand{\adjoint}{\mathrm{ad}}
\newcommand{\ad}{\adjoint}
\newcommand{\tr }{\mathrm{tr}}
\newcommand{\n}{\hat{\mathcal{N}}}
\newcommand{\id}{\mathbbm{1}}
\newcommand{\lin}{\mathcal{L}}
\newcommand{\adj}{\mathrm{ad}}
\definecolor{lila}{rgb}{0.6, 0.4, 0.8}
\begin{document}

\title{Perturbation Theory for Weak Measurements in Quantum Mechanics, I -- Systems with Finite-Dimensional State Space}
\author[1]{ \small M. Ballesteros}
\author[2]{N. Crawford}
\author[3]{M. Fraas}
\author[4]{J. Fr\"ohlich}
\author[5]{B. Schubnel}
\affil[1]{Department of Mathematical Physics, Applied Mathematics and Systems Research Institute (IIMAS),  National Autonomous University of Mexico (UNAM)}
\affil[2]{Department of Mathematics, Technion}
\affil[3]{Department of Mathematics; Virginia Tech}
\affil[4]{Institut f{\"u}r Theoretische Physik,   ETH Zurich}
\affil[5]{Swiss Federal Railways (SBB)}

\renewcommand\Authands{ and }

\normalsize 
\date \today

\maketitle

\begin{abstract} 
The quantum theory of indirect measurements in physical systems is studied. The example of an indirect measurement of an observable represented by a self-adjoint operator $\mathcal{N}$ with finite spectrum is analysed in detail. The Hamiltonian generating the time evolution of the system in the absence of direct measurements is assumed to be given by the sum of a term commuting with $\mathcal{N}$ and a small perturbation not commuting with $\mathcal{N}$. The system is subject to repeated direct (projective) measurements using a single instrument whose action on the state of the system commutes with $\mathcal{N}$. If the Hamiltonian commutes with the observable $\mathcal{N}$ (i.e., if the perturbation vanishes) the state of the system approaches an eigenstate  of 
$\mathcal{N}$, as the number of direct measurements tends to $\infty$. If the perturbation term in the Hamiltonian does \textit{not} commute with $\mathcal{N}$ the system exhibits ``jumps'' between different eigenstates of 
$\mathcal{N}$. We determine the rate of these jumps to leading order in the strength of the perturbation and show that  if time is re-scaled appropriately a maximum likelihood estimate of $\mathcal{N}$ approaches a Markovian jump process on the spectrum of $\mathcal{N}$, as the strength of the perturbation tends to $0$.
\end{abstract}

\section{Introduction}

Quantum-mechanical models of physical systems interacting with long sequences of probes that are subsequently subject to direct (i.e., projective) measurements are of fundamental interest in studies of quantum filtering and control (see, e.g., \cite{Holevo, Wiseman}) and of the foundations of quantum mechanics (see \cite{vonNeumann, Gisin1984, Diosi1988, Barchielli1991, BaBe, BaBeBe2, FrSchu, BaBeTi, BaBeTi2015, BaBeTi2016}, among others). There is extensive literature on such models, and one may wonder whether something new about these matters can still be added. Put briefly, we propose to study a model of indirect measurements of a weakly \textit{time-dependent} physical quantity that is simple enough that it can be analysed with mathematical precision and, yet, retains realistic features. 

The time evolution of the state of a system interacting with a sequence of probes that are subsequently measured projectively is given by unitary (Schr\"{o}dinger) evolution interrupted by ``state jumps'' triggered by projective measurements of the probes whose out-states are entangled with the state of the system; see Eq. \eqref{jump} below. Recordings of the frequencies of protocols of probe measurements endow the mathematical description of such a system with probabilistic (measure-theoretic) structures. Questions of primary interest to us concern these structures, as well as the resulting stochastic time evolution of states and observables of the system.

Following various technological breakthroughs in the manipulation of small quantum systems \cite{HarocheNobel, WinelandNobel}, experiments approximately described by models of the kind studied in this paper have become possible and are carried out quite routinely, \cite{guerlin, Murch}. An interesting example is a cavity QED experiment \cite{guerlin}: Nearly independent Rydberg atoms, all prepared in the same initial state, are sent, one at a time, through a cavity filled with stationary electromagnetic radiation. The atoms are out of resonance with the modes of the electromagnetic field inside the cavity, so that the probability for emission or absorption of a cavity photon by an atom traveling through the cavity is negligibly small. Yet, during its passage through the cavity the state of the atom is affected by the electromagnetic field inside the cavity, so that it becomes entangled with the state of the field. As a consequence of entanglement a consecutive direct (projective) measurement of an observable associated with the atom then induces a change of the state of the radiation field in the cavity and provides crude information about this state. In what follows, experimental protocols of this type will be referred to as indirect measurements. A detailed analysis of a simple model of a related (solid-state) experiment has been carried out in \cite{BFFS}.

A general theoretical framework for the description of indirect measurements, based on the formalism of ``quantum operations'' developed by Kraus \cite{Kraus}, was introduced by Davies \cite{Davies}. The change of state of the subsystem of interest -- the cavity field in the example discussed above -- induced by the measurement of an observable associated with the probe is encoded in ``jump operators'', $V_\xi$.  Here, the observable being measured has a spectrum denoted by $\caX$ and $\xi \in \mathcal{X}$ is the outcome of the direct measurement of this observable. Corresponding to the probe measurement outcome $\xi$,  the density matrix of the subsystem of interest, $\rho$, undergoes a change described by 
\begin{equation}\label{jump}
\rho \quad \mapsto \quad \frac{V_\xi^* \rho V_\xi}{\tr (V_\xi^* \rho V_\xi)}.
\end{equation} 
When many probes interact with the subsystem of interest, one after another, a protocol, $\underline \xi = \xi_1,\,\xi_2,\dots$, of measurement data recorded at times $t_1,\, t_2, \dots$ results. The measurement times can be deterministic or random. In the model studied in this paper, the probe measurements are made at randomly chosen times and are supposed to take place instantaneously.  Between two consecutive probe measurements the time evolution of the state of the subsystem of interest is unitary and is generated by a Hamiltonian, $\varepsilon H$, where $\varepsilon>0$ is a measure of the strength of the Hamiltonian; i.e.,
\begin{equation}\label{hamiltonian}
\rho \quad \mapsto \quad  e^{-i \varepsilon (t - t_j) H } \rho \,e^{i \varepsilon (t - t_j) H }, \qquad t_j < t < t_{j+1},
\end{equation}
and $\rho$ is the state of the system immediately after the $j^{th}$ probe measurement. We denote by $ \tau_{j+1} = t_{j+1} - t_j$ (for $ j > 1$) and $\tau_1 = t_1$ the interarrival times.
With an initial state $\rho_0$, a measurement protocol $ \underline \xi$, interarrival times $ \underline{\tau} := (\tau_1, \tau_2, \cdots)$ and a positive time $t $, we can thus associate a time-evolved state, $\rho_t(\underline{\tau}, \underline \xi)$, determined by alternatively using \eqref{jump} and \eqref{hamiltonian}. Various discrete and continuous variants of this model have been studied in \cite{Gisin1984, Diosi1988, Adler2001, Barchielli1991, vanHandel2006}. A comprehensive overview is provided in Holevo's book \cite{Holevo}.

If the interarrival times $\tau_{j+1} = t_{j+1}-t_j $ are independent and exponentially distributed with mean $1/\gamma$ then the time evolution of the state obtained by averaging over all possible measurement times and outcomes is a \textit{Lindblad evolution} with generator, $\mathcal{L}_{\varepsilon}$, given by
\begin{equation}
\label{i:1}
\lin_{\varepsilon} \rho = -\frac{i}{\hbar} \varepsilon[H, \rho] + \gamma( \int V_\xi^* \rho V_\xi \d \mu(\xi) - \rho).
\end{equation} 
Here the probability measure $\mu$ is a fixed \textit{a priori} distribution on the configuration space of single-probe measurement outcomes $\xi \in \mathcal{X}$, and the jump operators $V_{\xi}$ are normalized such that
$\int V_{\xi}\,V_{\xi}^*\text{d}\mu(\xi) = {\bf{1}}.$ 
Equation \eqref{i:1} has a natural operational interpretation \cite{AFGG}: With probability $\gamma \d t$, a completely positive operation, $\Phi ( \rho_t) := \int  V_\xi^* \,\rho_t V_\xi \d \mu(\xi)$ is applied on the state, $\rho_t$, of the system during the time interval $[t, t+ \d t)$. The process corresponding to the states $\rho_t(\underline{\tau}, \underline \xi)$ is called an ``unravelling'' of the Lindblad evolution generated by the Lindbladian given in Eq. (\ref{i:1}). The states $\rho_t(\underline {\tau}, \underline \xi)$ appear as integrands in the Dyson-series expansion of 
$\text{exp}(t\mathcal{L}_{\varepsilon}) \rho_{0}$, with $\gamma \int V_{\xi}^{*}\, (\cdot) V_{\xi} \text{d}\mu(\xi)$  viewed as the perturbation.

Motivated by the 2012 Nobel Prize of S. Haroche, a special class of such models, originally introduced in \cite{Maassen}, has recently attracted considerable attention, \cite{BaBe, BabeBe1, BaBeBe2, BFFS, BePel, Amini2011}: `Non-demolition' measurement of a certain observable $\mathcal{N}$ -- in the experiment described in \cite{guerlin} the number of photons trapped in the cavity -- refers to the idealized setting in which \textit{both} the Hamiltonian $H$ and the jump operators $V_\xi$  commute with $\mathcal{N}$. Under repeated non-demolition measurements, it has been observed experimentally, see \cite{guerlin}, and explained theoretically, see \cite{Maassen, vanHandel2004, BaBe, BFFS}, that the state of the system approaches an eigenstate of $\mathcal{N}$.  Moreover, the corresponding eigenvalue of $\mathcal{N}$ is uniquely determined by the measurement protocol $(\underline{ {\tau} }, \underline{\xi})$. 

These results are recalled and generalised to observables with arbitrary (including continuous) spectra in \cite{BCFFS17}. In earlier work \cite{BFFS}, we derived these results using the framework of maximum-likelihood estimates (MLEs).  In this language, the MLEs $\hat{\mathcal{N}}_k$, of $\mathcal{N}$ are constructed from the protocols
$\xi_1,\cdots, \xi_k, k=1,2,3, \dots$ of the  first $k$ measurement outcomes. Thus, from the perspective of \cite{BFFS,BCFFS17}, the aforementioned approach to repeated non-demolition measurements and the phenomenon of purification is a consequence of almost-sure convergence of 
$\hat{\mathcal{N}}_k$, as $k\rightarrow \infty$.

Coming back to the experiment of the Haroche group \cite{guerlin}, the electromagnetic field in the cavity very slowly relaxes to the vacuum state. If $\mathcal{N}$ is identified with the number operator counting photons in the cavity then, apparently, this observable is not strictly time-independent, but shows a slow variation in time. Thus, an indirect measurement of 
$\mathcal{N}$, using a sequence of probes consisting of Rydberg atoms traveling through the cavity is not really a non-demolition experiment. In fact, the value of the photon number estimated on the basis of long sequences of probe measurements will occasionally jump from one value to another one, contrary to the behavior observed in strict non-demolition experiments. In the experiment described in \cite{guerlin} it tends to decrease and approach $0$, as time $t$ tends to infinity.

A natural theoretical strategy for investigating the origin of the behavior seen in this and other related experiments, e.g. \cite{Sayrin}, is to carry out a perturbative analysis of the actual time-evolution around the one corresponding to non-demolition measurements,  with $\varepsilon $ the parameter measuring the strength of the perturbation not commuting with the system obeservable to be measublack indirectly. In this paper we consider Hamiltonian perturbations of non-demolition measurements.  This means that the commutator, 
$\varepsilon [H, \mathcal{N}]$, of the Hamiltonian of the system and the observable $\mathcal{N}$ to be measublack indirectly is taken to be non-zero; see Eq. \eqref{i:1}. In this situation one expects that the density matrix of the system remains close to a rank-one spectral projection onto an eigenstate of $\mathcal{N}$, during most of the time, with occasional jumps from one eigenstate to another one. In the following, we will make this picture precise for a simple model and determine the distribution of such jumps  in terms of physical parameters of the system. Similar results in a continuous-measurement setting have been presented by Bauer, Bernard and Tilloy in a series of papers; see \cite{BaBeTi, BaBeTi2015, BaBeTi2016}. 

Next, we describe the setting of our analysis and explain the main results established in this paper. We will always assume that the state space of the subsystem of interest is finite-dimensional; in the example of cavity QED, this means that only finitely many photons can be confined inside the cavity.  Given a time $t$, let 
$\hat{\mathcal{N}}_t$ be the MLE determined by the probe measurement outcomes obtained in the time interval $(t, t+T)$, for some $T>0$ to be chosen. Recall that 
$\varepsilon$ denotes the strength of the Hamiltonian perturbation, see \eqref{i:1}. We show that,  for $T =  \alpha | \log  \varepsilon |$- for sufficiently large $\alpha$, and after re-scaling time by 
$\varepsilon^{-2}$, the process  $\hat{\mathcal{N}}_{\varepsilon^{-2}t}$ converges in law to a Markov jump process on the spectrum of $\mathcal{N}$, in the natural Skorokhod topology, as $\varepsilon $ tends to zero. 

In the limiting Markov process, the rates of transitions between different eigenstates of $\mathcal{N}$ are given as follows: Let $\mathcal{N} = \sum_{\nu\in \textrm{Spec}(\caN)} \nu P_\nu$ be the spectral decomposition of  the system-observable and assume that all its eigenvalues $\nu$ are non-degenerate.  Then the matrix elements of the generator, $Q$, of the limiting process are given by the formula 
\begin{equation}
\label{i:2}
Q(\nu',\nu)=\frac{2}{\gamma \hbar^2} \Re  \left( - \frac{|\bra{\nu'} H \ket{\nu}|^2}{ \int \d \mu(\xi) V_\xi(\nu') \overline{V}_\xi(\nu) - 1} \right), \quad \text{  for   }\,\nu' \not= \nu,
\end{equation}
with $\bra{\nu} V_\xi \ket{\nu'} = \delta_{\nu \nu'} V_{\xi}(\nu)$. Among our results concerning the convergence of the quantum evolution towards a Markov jump process is the following theorem: 
\begin{equation*}
\lim_{\varepsilon \rightarrow 0} \,\langle \nu \vert e^{\varepsilon^{-2}t \mathcal{L}_{\varepsilon}} \rho_{0} \vert \nu \rangle = \big(e^{tQ} \pi_{\rho_0}\big)(\nu),
\end{equation*}
where $\pi_{\rho_0}(\nu):= \langle \nu\vert \rho_{0} \vert \nu \rangle$, with $\vert \nu \rangle$ the eigenstate of 
$\mathcal{N}$ corresponding to the eigenvalue $\nu$; (see Theorem \ref{thm:1}).
 
 There is an interesting technical caveat to be noted here:  The strongest sense imaginable in which a limit law may hold is that the stochastic process defined by posterior density matrices converges in Skorohod space to a Markov process on rank-one projections onto the eigenstates of $\mathcal{N}$, with transistion rates given by \Cref{i:2}.  It has been argued in \cite{BaBeTi} that this strong convergence cannot hold in general.  Our result circumvents this (technical) problem in that it is claimed that if the process of posterior density matrices is averaged over mesoscopic time windows convergence does in fact hold.  In a forthcoming paper, we will characterise circumstances under which the ``spiky'' behaviour observed in \cite{BaBeTi} occurs. 
 
 In future work, we plan to generalize our results by replacing an observable $\mathcal{N}$ with a finite point spectrum by a $d$-tuple, $\vec{\mathcal{Q}}$, of commuting system-observables with continuous spectra, for example
 $\sigma(\vec{\mathcal{Q}}) = \mathbb{R}^{d}$, (see \cite{BCFFS17}). Assuming that the Hamiltonian $H$ of the system does not commute with the operators $\vec{\mathcal{Q}}$, we may expect that the quantum-mechanical evolution of the state of the system approaches one corresponding to a stochastic process on $\mathbb{R}^{d}$ with non-vanishing drift given by a vector field on $\mathbb{R}^{d}$. Insights of this type are of interest in connection with attempts to render Mott's analysis of particle tracks mathematically respectable. Results  relevant for our purposes have been proven in \cite{BaDuKo, BaDu}.

\subsection{Summary of contents}
In the next section we describe the setting which this paper is based on in some detail, and we summarize our main results. In particular, we construct the measure space of measurement protocols $(\underline{\tau}, \underline \xi)$ of direct probe measurements and equip it with a measure pblackicting the frequencies of such protocols. We introduce the maximal likelihood estimate, $\hat{\mathcal{N}}$, to be used to prove convergence of $\hat{\mathcal{N}}$ to a jump process with the rate given in Eq.~(\ref{i:2}). 

In Sect. \ref{sec:prelim}, we present the proofs of our results. Some auxiliary estimates are deferblack to Appendix \ref{App:A}. In Appendix \ref{App:D} we list all relevant notation.

We will work in units in which $\hbar =1$ and the rate $\gamma$ of the Poisson process of interarrival times is unity. 
\\

{\bf{Acknowledgements.}} J. Fr\"ohlich thanks M. Bauer, D. Bernard and A. Tilloy for useful information about their results.  M. Ballesteros is a fellow of the Sistema Nacional de Investigadores (SNI). His research is partially supported by the projects PAPIIT-DGAPA UNAM IN102215 and SEP-CONACYT 254062.  

\section{Main result}
\subsection{Setup and notation}
\subsubsection{Hilbert space and operators}
Let $\mathcal{H}$ be a finite-dimensional Hilbert space. We denote the algebra of bounded linear maps from $\mathcal{H}$ to itself by $\mathcal{B}(\mathcal{H})$.  The usual operator norm on $\mathcal{B}(\mathcal{H})$ is denoted by $\| .\|$, and the Hilbert-Schmidt norm is denoted by $\|. \|_2$. For ``super-operators'' $ \mathcal{O}:\mathcal{B}(\mathcal{H}) \rightarrow \mathcal{B}(\mathcal{H})$, we use the super-operator norm $\|. \|_{2, \text{op}}$ defined by 
\begin{equation} \label{nomame}
\| \mathcal{O} \|_{2, \text{op}} := \underset{\| X \|_{2} =1 }{\sup} \| \mathcal{O}( X )\|_{2}. 
\end{equation}
 A natural example of a super operator which figures prominently below is  $\adj_H(X):=HX-XH$, for $X \in \mathcal{B}(\mathcal{H})$.  Here,  $H$ is an arbitrarily chosen self-adjoint operator acting on $\mathcal{H}$.

\subsubsection{Probe measurements}
Given a measure space $(\mathcal{X}, \sigma)$ and a probability measure $\mu$, we consider a measurable family of bounded complex-valued functions $V_\xi(\nu) : \nu \in \sigma(\mathcal{N})  \to \mathbb{C}$ satisfying the normalisation condition
\begin{equation}
\label{normalis}
\int_{\mathcal{X}} |V_\xi(\nu)|^2 \d \mu(\xi) = 1, \quad \mbox{for all} \quad \nu \in \sigma(\mathcal{N}).
\end{equation}
The set $\mathcal{X}$ represents all possible outcomes of direct measurements taken of a probe that has previously interacted with the system of interest.  The time evolution of the probe during its interaction with the system is affected by the value of an observable, henceforth denoted by $\mathcal{N}$, represented by a self-adjoint operator that we also denote by $\mathcal{N}$.  
Following \cite{Maassen}, we  define a family of operators $V_{\xi}, \xi \in \mathcal{X},$ acting on the Hilbert space 
$\mathcal{H}$ of the system, with $V_\xi $ describing the effect of a probe measurement with outcome $\xi$ on a state of the system. The operators $V_{\xi} \equiv V_{\xi}(\mathcal{N})$ depend on the observable $\mathcal{N}$. Thus, performing direct measurements on a sequence of such probes may yield information on the value of 
$\mathcal{N}$. It will always be assumed that the interaction of a probe with the system does not affect the value of $\mathcal{N}$.  

Let $P_\nu$ denote the spectral projection of $\mathcal{N}$ corresponding to the eigenvalue $\nu$.  The operators $V_{\xi}$ have the form
\begin{equation}
\label{eq:spectral_decomposition_of_C}
V_\xi = \sum_{\nu \in \sigma(\mathcal{N})} V_\xi(\nu) P_\nu.
\end{equation}
We introduce the random super-operators 
$$
\Phi_\xi :   B(\mathcal{H}) \to   B(\mathcal{H}),  \,\,  \,\,\Phi_\xi(X) :=  V_\xi^* X V_{\xi} . 
$$ 
The map $\Phi_\xi$ encodes the effect of a probe measurement with outcome $\xi \in \mathcal{X}$ on the state of the system. 

\subsubsection{Time evolution} \label{ET}

To describe the effects of repeated measurements on the state of the system, we introduce the space $\Xi \equiv [0, \infty)^\mathbb{N} \times \mathcal{X}^\mathbb{N}$ of infinite sequences \mbox{$(\underline\tau, \underline\xi)$} of outcomes of direct probe measurements $\underline{\xi} \equiv \xi_1, \xi_2, \dots$ separated by times $\underline{\tau}\equiv \tau_1,\tau_2,\dots$. Here $\tau_j$ is the time between the $ (j-1) $st measurement and the
$j$th measurement, for $ j >1$; if $j=1$ it is the time when the first measurement happens.  We equip this space with the standard sigma algebra, $\mathcal{F}$, generated by cylinder sets. Let $\mathbb{P}$ be a probability measure on $(\Xi, \mathcal{F})$ for which the coordinate functions $\{\tau_j, \xi_k\}_{j,k=1}^\infty$ are independent, the times $\tau_j$ are exponential random variables with mean $1$, and the measurement outcomes $\xi_k$ are distributed according to the measure $\mu$. 
 We denote by $\mathbb{E}$ the expectation value associated to $\mathbb{P}$.  In Appendix \ref{App:0} we give precise definitions and present additional properties of the measure space that  we have just defined. 

The random variables $(t_j)_{j \in \mathbb{N}}$, defined by 
$$
t_j(\underline{\tau}) \equiv t_j := \sum_{i = 1}^j \tau_i,
$$    
represent the measurement times: at time $ t_j $ the $j$th measurement takes place.  The process of counting the number of measurements up to time $s$,
$$
N_s(\underline{\tau}) \equiv N_s : = \sup \big \{ n :  t_n(\underline{\tau}) \leq s   \big \}, 
$$ 
is a rate-one Poisson process on $[0, \infty)$.\\
 Let $0 < \varepsilon \ll 1$ and $\varepsilon H$ be the Hamiltonian of the system, assumed to be a Hermitian operator. 
The time evolution of the system is represented by the operator valued random variable (see also \eqref{misto})
\begin{equation}
\label{prop}
{\sigma}_\varepsilon^{(s,u)}(\underline{\tau}, \underline \xi) :=  e^{-i \varepsilon (s-t_{N_s}) \adj_H}  \Phi_{\xi_{N_s}}   \dots   e^{-i \varepsilon (t_{N_{u}  +2} - t_{N_{u}+1}) \adj_H} \Phi_{\xi_{N_{u}+1}} e^{-i \varepsilon (t_{N_{u}+1}-u) \adj_H}. 
\end{equation}  
Up to a normalization, the super-operator ${\sigma}_{\varepsilon}^{(s,u)}\equiv {\sigma}_{\varepsilon}^{(s,u)}
(\underline{\tau}, \underline{\xi})$ maps the state of the system at time $u$ to its state at time $s$. The following property is a consequence of the definition above 
\begin{equation}
\label{pgp}
 {\sigma}_\varepsilon^{(s, u)} {\sigma}_\varepsilon^{(u,v )} = {\sigma}_\varepsilon^{(s, v)},
\end{equation}
for times  $0\leq v< u < s$.

 For a measurable set $E$, we set
\begin{equation}\label{para}
 {\sigma}_\varepsilon^{(s, u)}(E) : = \int_E  {\sigma}_\varepsilon^{(s, u)} \d \mathbb{P}.    
\end{equation}
 In Appendix \ref{App:0} we show some important properties of the super-operators in \eqref{para}, in particular the factorization property in Eqs. \eqref{Factor} and \eqref{factorization}  manifesting Markovianity of the process.  Informally, Eq.~\eqref{factorization} states that for sets $E_1$, $E_2$ depending on measurement results in interval $(0,u]$ resp. $(u,s]$ it holds,
 \begin{equation}
 \label{factorization}
 {\sigma}_\varepsilon^{(0, s)}(E_1 \cap E_2) = {\sigma}_\varepsilon^{(u, s)}(E_1) {\sigma}_\varepsilon^{(0, u)}(E_2).
 \end{equation} 
 
Every state (density matrix) $\rho$ of the system gives rise to a probability measure $\mathbb{P}^\varepsilon_\rho$ on $(\Xi, \mathcal{F})$ defined by
\begin{equation}
\label{eq:2p}
\mathbb{P}^\varepsilon_\rho(E) := \tr({\sigma}_\varepsilon^{(s,0)}(E)[\rho] ).
\end{equation}
We denote by $\mathbb{E}_\rho^\varepsilon[\cdot]$ the associated expectation. Given a point $(\underline{\tau}, \underline{\xi}) \in \Xi$, the posterior state at time $s$ is defined by
\beq
\label{E:rhos}
\rho_s(\underline{\tau},\, \underline \xi):= \frac{{\sigma}_\varepsilon^{(s,0)}(\underline{\tau},\, \underline \xi)[ \rho] }{\tr({\sigma}_\varepsilon^{(s,0)}(\underline{\tau},\, \underline \xi)[ \rho])}.
\eeq
This posterior state represents the state of the system, given the measurement outcomes  $\underline \xi$ at times $ t_1, t_2, \cdots $.

\subsubsection{Log-likelihood function and maximum likelihood estimator}
\label{loglike}
In the non-demolition situation, i.e., for $\varepsilon = 0$, the time-evolution of the system is trivial between any two consecutive probe  measurements. The function $f(\cdot \vert \nu)$, defined by
\begin{equation} \label{pto}
f(\xi | \nu) := |V_\xi(\nu)|^2,
\end{equation} 
then has the meaning of a conditional probability distribution, and the equation
\begin{equation}
\mathbb P^{0}_\rho(F_s) = \frac{e^{-s} s^{k}}{k!} \sum_{\nu \in \sigma(\mathcal{N})}  \mu_\nu(\Delta_1) \dots \mu_\nu(\Delta_k) \tr(P_\nu \rho),
 \label{eq:5}
\end{equation}
with 
\beq
\tag{14'}\label{E:munu}
\mu_\nu(\Delta_j) := \int_{\Delta_j}  f(\xi|\nu) \d \mu(\xi)
\eeq
holds for any product set 
$F_s = \{ N_s =k \} \times \Delta_1\times \Delta_2 \dots \times \Delta_k $; (we recall that the operators $P_{\nu}$ are the spectral projections of the observable $\mathcal{N}$).
Equation~(\ref{eq:5}) may be interpeted as the \mbox{de Finetti} decomposition \cite{Finetti} of the measure $\mathbb P^{0}_\rho$.\\

The theory of indirect measurements is  closely linked to parameter estimation in statistics \cite{BCFFS17}. In particular, if 
$\varepsilon = 0$ it is natural to regard $\nu$ as an unknown quantity to be estimated based on measured data, namely the measurement outcomes $\xi_i$.  We will therefore introduce a consistent estimator.  The  log-likelihood, or maximum likelihood, estimator is a natural choice, it is well known that if the measures $\mu_\nu$  are ``identifiable'', i.e., if $\mu_{\nu} \neq \mu_{\nu'}$, for $\nu \neq \nu'$, then it converges to the true value of the parameter $\nu$, as the number of data points tends to infinity. 

One idea used in this paper  is that this estimator can also be used for small but non-zero $\varepsilon$.   Choosing a ``sampling time'' $T>0$, we  introduce the log-likelihood function
\begin{equation}
\label{eq:likelihood}
l_{s}^T(\nu | \underline{\xi} )  := \frac{1}{   N_{s + T} - N_s  }  
 \sum_{j=N_s+1}^{N_{s + T}} \log f(\xi_{j}| \nu ).
\end{equation}
Further, let
\begin{equation}
\label{estin}
\n_s(T) := \underset{\nu \in \sigma(\mathcal{N})}{\mathrm{argmax}}  \text{ }l_{s}^{T}(\nu | \underline{\xi} ).
\end{equation}

Note that, for a given sequence $\underline \xi$ of probe measurement outcomes, there may be more than one point $\nu$ in the spectrum of $\mathcal{N}$ for which the right side of Eq. \eqref{estin} is maximized. If such an ambiguity arises we define $\n_s(T)$ according to some agreed-upon rule.  However, with respect to $\mathbb{P}_\rho^0$, the probability of an ambiguous sample  tends to $0$ exponentially fast in $T$. 

In order to avoid to have to cope with short time fluctuations, which appear to be inherent in the process studied here,
it turns out to be convenient to consider times $s$ that are multiples of the sampling time T, and we therefore introduce the process
\begin{equation}\label{proc}
\mathcal{M}_{j T} = \n_{j T}(T), \quad \mbox{for} \quad j \in \mathbb{N},
\end{equation}
and extend the definition of $\mathcal{M}_t$ to all $t \geq 0$ by declaring it to be  constant on the intervals $[j T, (j+1) T)$. 

\subsection{Statement of the result}
In this section we state the assumptions upon which our analysis rests and then describe our main results. Some key ideas of the proofs are sketched in the next subsection.

\begin{assumption} We require the following hypotheses:
\label{ass:LLN}
\begin{enumerate}
\item  The spectrum of the observable $\mathcal{N}$ is non-degenerate.
\item  The measures $\mu_{\nu}$ introduced in \eqref{eq:5} are identifiable; (i.e., $\mu_{\nu} \neq \mu_{\nu'}$ if $\nu \neq \nu'$).
\end{enumerate}
\end{assumption}

\noindent We introduce a continuous-time Markov (jump) process, $Y_s$, on the spectrum of 
$\mathcal{N}$ by specifying its transition function $\Gamma_t := e^{tQ}$ and its initial probability distribution $\pi_{\rho}(\nu)$ at time $s=0$. The latter is given by $\pi_{\rho}(\nu):=\langle \nu | \rho | \nu \rangle$, and the transition function of the process has a generator given by the (transition-rate) matrix
\begin{equation}
\label{Qdef}
Q(\nu',\nu) := \left\{ \begin{array}{lcr}
									 2 \Re\left( \sum_{\beta \neq \nu} \frac{|\bra{\beta} H \ket{\nu}|^2}{\int_{\mathcal{X}}  \d \mu(\xi) V_\xi(\beta) \overline{V}_\xi(\nu) -1}  \right) & \mbox{for} & \nu = \nu' \\[3mm]
									2 \Re\left(  -\frac{|\bra{\nu'} H \ket{\nu}|^2}{\int_{{\mathcal{X}}  } \d \mu(\xi) V_\xi(\nu') {\overline{V}}_\xi(\nu) -1}  \right) & \mbox{for} & \nu \neq \nu'.								\end{array} \right.
\end{equation}
By definition of $Y_s$, we have that $\text{Pr}(Y_{s+h}=\nu | Y_{s}=\nu')=\delta_{\nu \nu'} + Q(\nu',\nu) h + o(h)$.
\\

We are now prepared to state our main results in the form of two theorems. Assumption \ref{ass:LLN} will always be required.

\vspace{4mm}

\begin{theorem}
\label{thm:1}
 For an arbitrary initial state $\rho$ of the system, we have that
\begin{equation*}
\lim_{\varepsilon \rightarrow 0} \,\langle \nu \vert e^{\varepsilon^{-2} s \mathcal{L}_{\varepsilon}} \rho \vert \nu \rangle = \big(\Gamma_{s} \pi_{\rho}\big)(\nu).
\end{equation*}
\end{theorem}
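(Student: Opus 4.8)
The plan is to prove the averaged statement directly, bypassing the harder process-level convergence. Writing $e^{\varepsilon^{-2}s\mathcal{L}_\varepsilon}\rho = \mathbb{E}[\sigma_\varepsilon^{(\varepsilon^{-2}s,0)}[\rho]]$ via the Dyson expansion (the unravelling mentioned after \eqref{i:1}), one has $\langle\nu|e^{\varepsilon^{-2}s\mathcal{L}_\varepsilon}\rho|\nu\rangle = \mathbb{E}_\rho^\varepsilon[\langle\nu|\rho_{\varepsilon^{-2}s}|\nu\rangle]$, so it suffices to control the diagonal entries of the posterior state $\rho_{\varepsilon^{-2}s}$. The first step is to discretize time into windows of length $T = \alpha|\log\varepsilon|$ and to relate the diagonal of the posterior state to the MLE process $\mathcal{M}_{jT}$ of \eqref{proc}: on the non-demolition scale, within one window of $\sim T$ measurements the log-likelihood $l^T_\cdot(\nu|\underline\xi)$ concentrates (by the LLN / Sanov-type large deviations for i.i.d.\ outcomes under $\mathbb{P}^0_\rho$, using identifiability from Assumption \ref{ass:LLN}(2)), so that $\n_s(T)$ equals the ``true'' eigenvalue with $\mathbb{P}^0_\rho$-probability $1 - O(e^{-cT}) = 1 - O(\varepsilon^{c\alpha})$; choosing $\alpha$ large makes this error summable over the $\varepsilon^{-2}s/T$ windows. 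The upshot is that $\langle\nu|\rho_{\varepsilon^{-2}s}|\nu\rangle$ is, up to $o(1)$, the indicator $\mathbb{P}_\rho^\varepsilon(\mathcal{M}_{\varepsilon^{-2}s} = \nu)$.

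The second and central step is a one-window transition estimate: conditioning on $\mathcal{M}_{jT} = \nu'$, compute the probability of $\mathcal{M}_{(j+1)T} = \nu \neq \nu'$ to leading order in $\varepsilon$. Here the Markov factorization \eqref{factorization} of the super-operators $\sigma_\varepsilon^{(\cdot,\cdot)}(E)$ reduces everything to analysing a single super-operator $\sigma_\varepsilon^{(u+T,u)}(\{\mathcal{M}=\nu\})$ acting on the rank-one projection $P_{\nu'}$. One expands $e^{-i\varepsilon\tau\,\adj_H}$ to second order in $\varepsilon$; the zeroth order gives pure dephasing (the non-demolition dynamics), the first order in $[H,\cdot]$ contributes off-diagonal terms that must be iterated once more to land back on the diagonal $P_\nu$, producing a factor $\varepsilon^2$. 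Integrating the resulting expression over the exponential interarrival times $\tau_j$ and over measurement outcomes weighted by the $\mu_\nu$, and summing the geometric-type series in the number $N_{u+T}-N_u$ of measurements in the window (which is $\sim T = \alpha|\log\varepsilon|$, hence the $|\log\varepsilon|$ cancels against the smallness needed), yields
\begin{equation*}
\mathbb{P}_\rho^\varepsilon\big(\mathcal{M}_{(j+1)T}=\nu \,\big|\, \mathcal{M}_{jT}=\nu'\big) = \varepsilon^2 T\, Q(\nu',\nu) + o(\varepsilon^2 T),
\end{equation*}
with $Q$ exactly the matrix \eqref{Qdef}; the denominator $\int V_\xi(\beta)\overline V_\xi(\nu)\,\d\mu(\xi) - 1$ arises as the geometric sum of the dephasing eigenvalue $\int V_\xi(\beta)\overline V_\xi(\nu)\,\d\mu(\xi)$ over the measurements between the two instants at which the ``virtual'' transition $\nu\to\beta\to\nu$ happens. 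Some auxiliary norm bounds on $\adj_H$ and on the contractivity of the dephasing maps (of the kind deferred to Appendix \ref{App:A}) are needed to control remainders uniformly.

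The third step assembles these one-step estimates: the process $j \mapsto \mathcal{M}_{jT}$ is an (inhomogeneous in appearance but asymptotically homogeneous) Markov chain on $\sigma(\mathcal{N})$ whose one-step transition matrix is $\id + \varepsilon^2 T\,Q + o(\varepsilon^2 T)$; running it for $n = \varepsilon^{-2}s/T$ steps and using $(\id + \varepsilon^2 T Q + o(\varepsilon^2 T))^{\varepsilon^{-2}s/T} \to e^{sQ} = \Gamma_s$ (a standard semigroup-convergence argument, e.g.\ via $\log(\id + A) = A + O(\|A\|^2)$ and $n\|A\|^2 \to 0$) gives the claim, with initial distribution $\pi_\rho(\nu) = \langle\nu|\rho|\nu\rangle$ coming from the $\varepsilon \to 0$ limit of the very first window applied to $\rho$. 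The main obstacle I expect is the second step: getting the leading-order transition rate with the precise constant and denominator in \eqref{Qdef} requires carefully tracking which second-order terms in the Dyson expansion survive after averaging over outcomes, and proving that all other contributions (higher-order in $\varepsilon$, or involving two or more ``jumps'' within one window, or the event that the MLE is ambiguous or wrong) are genuinely negligible after multiplication by the number of windows $\varepsilon^{-2}s/T$ — this is where the choice of $\alpha$ large and the exponential-in-$T$ concentration of the MLE are essential.
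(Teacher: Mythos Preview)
Your route is workable in principle but is substantially harder than what the paper does, and in fact inverts the logical order of the two theorems. The paper obtains \Cref{thm:1} as an \emph{immediate corollary} of a purely deterministic semigroup estimate (Lemma~\ref{lem:5.1}): one works directly with the Lindbladian $\mathcal{L}_\varepsilon$, uses Duhamel's formula to compare $e^{t\mathcal{L}_\varepsilon}$ with $e^{t\mathcal{L}_0}$, shows that $\mathcal{P}_\perp e^{t\mathcal{L}_\varepsilon}$ and $e^{t\mathcal{L}_\varepsilon}\mathcal{P}_\perp$ are $O(\varepsilon)$ for $t\gtrsim\varepsilon^{-2}|\log\varepsilon|$, and then derives the effective generator $Q$ on the range of $\mathcal{P}$ by second-order perturbation theory (the formula $Q=-\varepsilon^{-2}\mathcal{P}\mathcal{L}_\varepsilon\mathcal{P}_\perp(\mathcal{L}_0^\perp)^{-1}\mathcal{P}_\perp\mathcal{L}_\varepsilon\mathcal{P}$, with $(\mathcal{L}_0^\perp)^{-1}$ producing exactly the denominator in \eqref{Qdef}). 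No unravelling, no MLE, no windowing, no product limit. This is standard adiabatic-type analysis and takes only a page.

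Your approach instead goes through the stochastic unravelling and the MLE process $\mathcal{M}_{jT}$, which is precisely the machinery the paper builds for the \emph{harder} \Cref{thm:case1}. Several of your steps hide nontrivial work that is at least as hard as the direct argument: (i) the MLE concentration you invoke is under $\mathbb{P}^0_\rho$, but you need it under $\mathbb{P}^\varepsilon_\rho$; transferring this requires comparing $\sigma_\varepsilon$ with $\sigma_0$ over a window (the content of Lemma~\ref{lem:5.4}); (ii) the claim that $\langle\nu|\rho_{\varepsilon^{-2}s}|\nu\rangle\approx\mathbf{1}_{\{\mathcal{M}_{\varepsilon^{-2}s}=\nu\}}$ is essentially part~(b) of \Cref{thm:case1} and needs its own proof; (iii) your one-window transition estimate must be accurate to $o(\varepsilon^2 T)$ so that the error survives multiplication by $\varepsilon^{-2}s/T$ windows, and your conditional law given $\mathcal{M}_{jT}=\nu'$ does not start exactly from $P_{\nu'}$, so you must also propagate that approximation error. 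None of this is fatal, but by the time you have controlled all of it you will have reproved most of \Cref{thm:case1} just to obtain the much softer averaged statement. The paper's direct semigroup argument avoids all of these issues.
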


\vspace{4mm}

With the help of additional assumptions we are able to describe the process in a more detailed manner. We recall that $P_\nu$ is the spectral projection of $\mathcal{N}$ associated to the eigenvalue $\nu$ (see Eq.~(\ref{eq:spectral_decomposition_of_C})), and that $\rho_s$ denotes the posterior state at time $s$ introduced in Eq. \eqref{E:rhos}.

\begin{theorem}
\label{thm:case1}
Given an arbitrary $\varepsilon>0$, we choose a sampling time $T=T(\varepsilon) = \alpha| \log(\varepsilon)| $, with 
$\alpha>0$. There exists a constant $\alpha_0>0$ such that, for $\alpha \geq \alpha_0$, the following claims hold true.
\begin{enumerate}[(a)]
\item
\label{TP}
With respect to the measures $\mathbb P^{\varepsilon}_\rho$ defined in \eqref{eq:2p}, the process  
$\mathcal{M}_{\varepsilon^{-2} s}$ on the spectrum of $\mathcal{N}$ introduced in \eqref{proc} converges in law to the continuous-time Markov chain $Y_s$, as $\varepsilon\rightarrow 0$.
\item \label{TDM}
There exists a constant $C$ - that might depend on $\alpha$ -  such that, for sufficiently small $\varepsilon$ (depending on the choice of $\alpha$) and $s > 2 \varepsilon^2 T(\varepsilon)$,  
\begin{equation}\label{ineq_state}
  \bbE^{\varepsilon }_{\rho}\Big [  \|\rho_{\varepsilon^{-2} s}-P_{\hat{\mathcal{N}}_{\varepsilon^{-2} s}} \|_{2}   \Big ]  \leq C  \varepsilon |\log \varepsilon |^{1/2 }.
\end{equation}  
\end{enumerate}     
\end{theorem}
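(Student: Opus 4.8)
The plan is to establish both statements through a single analytical device: a perturbative expansion of the super-operator $\sigma_\varepsilon^{(s,u)}$ restricted to events on which the MLE takes a prescribed value, combined with a coarse-graining argument over the mesoscopic windows $[jT,(j+1)T)$. The key structural input is the factorization property \eqref{factorization}, which reduces the analysis of $\mathcal{M}_{\varepsilon^{-2}s}$ to the study of single-window transition probabilities $\bbP^\varepsilon_\rho(\n_{jT}(T)=\nu' \mid \n_{(j-1)T}(T)=\nu)$ together with control of the conditional state entering each window.

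\textbf{Step 1: Single-window analysis with $\varepsilon=0$ as the reference point.} At $\varepsilon=0$ the evolution between measurements is trivial and, by \eqref{eq:5}, conditioning on $N_s=k$ the outcomes are i.i.d.\ from the mixture $\sum_\nu \mu_\nu(\cdot)\tr(P_\nu\rho)$. Standard large-deviations / Sanov-type estimates for the MLE of a finite mixture (identifiability is Assumption~\ref{ass:LLN}(2)) give that, conditional on $Y$-state $\nu$ at the start of the window, $\bbP^0$ of $\n(T)\neq\nu$ is $\exp(-cT(1+o(1)))$ for some rate $c>0$ depending on the Kullback--Leibler separations of the $\mu_\nu$'s; the probability of an ambiguous argmax is of the same order. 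With $T=\alpha|\log\varepsilon|$ this is $O(\varepsilon^{c\alpha})$, so choosing $\alpha_0$ large makes these events negligible compared with the $\varepsilon^2$-scale of genuine jumps. This is also where the diagonal purification estimate \eqref{ineq_state} originates: conditioned on $\n_{\varepsilon^{-2}s}=\nu$, the off-diagonal blocks of $\rho_{\varepsilon^{-2}s}$ are damped by the product of $|V_\xi(\nu')\overline{V_\xi(\nu)}|$-type factors over $\sim T$ measurements, giving an $O(\varepsilon^{c'\alpha})$ contribution, while the $\varepsilon H$-rotation in \eqref{hamiltonian} over total time $\varepsilon^{-2}s$ feeds in an $O(\varepsilon\cdot\varepsilon^{-2}s\cdot\varepsilon^? )$\dots — more carefully, over one inter-arrival time the tilt is $O(\varepsilon)$ and the martingale-type cancellation over a window of length $T$ yields the stated $\varepsilon|\log\varepsilon|^{1/2}$ after summing variances; the $|\log\varepsilon|^{1/2}$ is exactly the square root of the window length $T$.

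\textbf{Step 2: Extracting the rate matrix $Q$ via second-order perturbation theory.} Expand $e^{-i\varepsilon\tau\,\adj_H}=\id - i\varepsilon\tau\,\adj_H - \tfrac{1}{2}\varepsilon^2\tau^2\adj_H^2 + O(\varepsilon^3)$ in each factor of \eqref{prop}, restricted to the event $\{\n = \nu' \neq \nu\}$ where $\nu$ was the incoming label. A transition requires the diagonal block to ``leak'' into the $\nu'$ sector: at first order one factor of $\adj_H$ moves weight from $P_\nu$ to an off-diagonal block $P_{\nu'}(\cdot)P_\nu$, and a second factor of $\adj_H$ (possibly from a different inter-arrival interval) brings it to $P_{\nu'}(\cdot)P_{\nu'}$. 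Between and after these two insertions the off-diagonal block is repeatedly multiplied by $\int V_\xi(\nu')\overline{V_\xi(\nu)}\,d\mu(\xi)$, a number of modulus $<1$ (strictly, by identifiability and Cauchy--Schwarz); summing the resulting geometric series over the $\sim\gamma T$ measurement slots in the window, taking the expectation over the exponential $\tau_j$'s (which replaces $\tau^2$ by its mean and produces the factor $1/\gamma$; with $\gamma=1$ as normalized), and keeping the leading $\varepsilon^2$ term reproduces exactly $-|\bra{\nu'}H\ket{\nu}|^2 / (\int V_\xi(\nu')\overline{V_\xi(\nu)}\,d\mu - 1)$ after taking the real part (the imaginary part cancels against the conjugate off-diagonal block). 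The diagonal entry $Q(\nu,\nu)$ comes from the second-order loss of $\tr(P_\nu\rho)$ and is fixed by $\sum_{\nu'}Q(\nu',\nu)=0$; matching with the off-diagonal terms gives the stated formula. Summing the $O(\varepsilon^2)$ per-window transition probability over the $\varepsilon^{-2}s/T$ windows in $[0,\varepsilon^{-2}s]$ yields, in the limit, the matrix exponential $e^{sQ}$; the $1/T = 1/(\alpha|\log\varepsilon|)$ correction from the window discretization vanishes and does not affect the limit.

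\textbf{Step 3: Assembling the convergence.} Theorem~\ref{thm:1} then follows by writing $\langle\nu|e^{\varepsilon^{-2}s\mathcal{L}_\varepsilon}\rho|\nu\rangle = \bbE[\langle\nu|\rho_{\varepsilon^{-2}s}|\nu\rangle]$ (the average of the unravelling reproduces the Lindblad semigroup, as noted in the introduction), using \eqref{ineq_state} to replace $\rho_{\varepsilon^{-2}s}$ by $P_{\n_{\varepsilon^{-2}s}}$ up to $o(1)$, and then invoking Theorem~\ref{thm:case1}(a) for the law of $\n_{\varepsilon^{-2}s}=\mathcal{M}_{\varepsilon^{-2}s}$. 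For part~(a) itself, tightness in Skorokhod space is routine once the jump rates are $O(\varepsilon^2)$ and the number of jumps in compact time is tight (again from the LDP bound of Step~1 ruling out many jumps in one window), and convergence of finite-dimensional distributions follows from the single-window Markov step computed in Step~2 together with the factorization \eqref{factorization}, which makes the increments asymptotically Markovian with generator $Q$; a standard generator-convergence criterion (e.g.\ Ethier--Kurtz) closes the argument.

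\textbf{Main obstacle.} The delicate point is Step~2: one must control the perturbative expansion \emph{uniformly} over the random window length (which fluctuates around $\gamma T\sim|\log\varepsilon|$, so the naive error $(\varepsilon T)^k$ is only $(\varepsilon|\log\varepsilon|)^k$ — still small, but requires care) and, more seriously, handle the \emph{interaction} between the non-commuting $\adj_H$-rotations and the event $\{\n(T)=\nu'\}$, since conditioning on the MLE value biases the outcome distribution away from the $\varepsilon=0$ mixture. Showing that this bias contributes only at order $\varepsilon^2$ — i.e.\ that to leading order ``the MLE sees the state it would see at $\varepsilon=0$, and jumps are driven purely by the Hamiltonian tilt'' — is the technical heart of the proof and is presumably where the auxiliary estimates of Appendix~\ref{App:A} and the lower bound on $\alpha_0$ are really used. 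The normalization in \eqref{E:rhos} (a ratio of random super-operators) must also be expanded consistently, which is where Hilbert--Schmidt-norm bounds on $\sigma_\varepsilon^{(s,u)}(E)$ from Appendix~\ref{App:0} enter.
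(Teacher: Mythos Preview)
Your overall architecture is sound and identifies the right scales, but your Step~2 diverges from the paper's approach in a way that creates a genuine gap.

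\textbf{How the paper extracts $Q$.} The paper does \emph{not} compute the single-window jump probability $\mathbb{P}^\varepsilon(\n_{jT}(T)=\nu'\mid \n_{(j-1)T}(T)=\nu)$ and then exponentiate over the $\varepsilon^{-2}s/T$ windows. Instead, $Q$ is extracted once and for all from the \emph{averaged} Lindblad semigroup by standard degenerate perturbation theory (Lemma~\ref{lem:5.1}): one shows directly that $\mathcal{P}e^{\varepsilon^{-2}u\mathcal{L}_\varepsilon}\mathcal{P}\to e^{uQ}$ via the Schur-complement formula $Q=-\mathcal{P}\adj_H\mathcal{P}_\perp(\mathcal{L}_0^\perp)^{-1}\mathcal{P}_\perp\adj_H\mathcal{P}$, with all trajectory-level combinatorics absorbed into the resolvent $(\mathcal{L}_0^\perp)^{-1}$. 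The finite-dimensional distributions (Lemma~\ref{lem:5.7}) are then obtained by a \emph{different} decomposition from yours: between the queried times $s_1<\dots<s_n$ one inserts the \emph{unconditioned} Lindblad evolution $e^{\varepsilon^{-2}(s_{j+1}-s_j)\mathcal{L}_\varepsilon}\approx e^{(s_{j+1}-s_j)Q}$, and only at each $s_j$ a single window of length $T$ is conditioned, where Lemmas~\ref{lem:5.4} and~\ref{lemND} give $\sigma_\varepsilon^{(T+s,s)}(E_\nu)\approx\mathcal{P}_\nu$. This avoids computing any window-to-window transition rate.

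\textbf{The gap in your Step 2.} The ``two insertions of $\adj_H$ plus geometric sum over $\int V_\xi(\nu')\overline{V_\xi(\nu)}\,d\mu$'' mechanism you describe is precisely the computation of $\mathcal{P}e^{T\mathcal{L}_\varepsilon}\mathcal{P}$ to second order --- i.e.\ the \emph{averaged} diagonal transition, not the probability conditioned on $\{\n(T)=\nu'\}$. Conditioning on the MLE outcome biases the $\xi$-distribution and you have not argued that this bias is $o(\varepsilon^2 T)$; in fact this is exactly the ``interaction between $\adj_H$-rotations and the event $\{\n(T)=\nu'\}$'' you flag as the main obstacle, and your sketch does not resolve it. The paper sidesteps the issue by never conditioning over macroscopic times: conditioning occurs only on a single window, where $\sigma_\varepsilon\approx\sigma_0$ (Lemma~\ref{lem:5.4}) and $\sigma_0$ on $E_\nu$ is close to $\mathcal{P}_\nu$ by the large-deviation bound (Lemma~\ref{lemND}); the $Q$-dynamics lives entirely in the unconditioned stretches.

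\textbf{Minor points.} Your explanation of the $|\log\varepsilon|^{1/2}$ in part~(b) via ``martingale-type cancellation\dots summing variances'' is misleading: in the paper the square root comes from Cauchy--Schwarz applied to $\mathbb{E}[\|\rho-P_\nu\|_2]^2\le 2\,\mathbb{E}[\tr(\rho(1-P_\nu))]$, and the latter is $O(\varepsilon^2 T)$ by Lemma~\ref{lem:5.4} with $s-u=T$; no martingale structure is invoked. Tightness is not quite ``routine'': the paper spends Section~\ref{proofa} bounding the probability of \emph{two} jumps in an interval of length $\theta$ by $C\theta^2+C\theta/T$, which requires the full machinery of Lemmas~\ref{lem:5.1}--\ref{lemND} and is where the lower bound $\alpha_0>\max\{2(1-e^{-I})^{-1},g^{-1}\}$ is actually used.
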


An explicit estimate of the value of $\alpha_0$ in terms of data related to the instrument $\Phi_\xi$ will be given in the proof of the theorem.

\subsection{Plan of the proof}
The proofs of Theorems \ref{thm:1} and  \ref{thm:case1} are divided into three parts, which we sketch in this section. 

In the first part, we show that the dynamics of the state $\rho_s$ arises from an unravelling of the dynamical semigroup generated by the Lindbladian 
\begin{equation}
\label{eq:Lin}
\lin_\varepsilon \rho = -i \varepsilon [H,\rho] + \Phi(\rho) - \rho, \quad  \text{where } \Phi(\rho) := \int_{\mathcal{X}}  V_\xi^* \rho V_\xi \d \mu(\xi),
\end{equation}
see Subsect. \ref{unra}. The time evolution generated by $\lin_\varepsilon $ leads to decoherence  over the spectrum of $\mathcal{N}$. More precisely, using a series of inequalities, we prove that  the Lindbladian time evolution maps any initial state into the subspace $\mathcal{P} B(\mathcal{H})$, as time tends to $\infty$, where the projection $\mathcal{P}$ is defined by
\begin{equation}\label{pndjo}
\mathcal{P} X:= \sum_{\nu \in \sigma(\mathcal{N})} P_{\nu} X P_{\nu}, \quad \mbox{for }  X \in B(\mathcal{H}),
\end{equation}
and $P_{\nu}:= \vert \nu \rangle \langle \nu \vert$ are the (rank-one) spectral projections of $\mathcal{N}$.
On this subspace and for time scales of order $\mathcal{O}(\varepsilon^{-2})$, the Lindbladian evolution is then shown to be norm-close to the time evolution described by the transition function $(\Gamma_t)_{t\geq 0}$ with generator given by the transition-rate matrix $Q$ introduced in Eq. \eqref{Qdef}; see Theorem~\ref{thm:1}.  Precise statements can be found in Lemma \ref{lem:5.1} of  Subsection  \ref{dyave}.

The second part of the proof uses  the fact, based on Assumption  \ref{ass:LLN}, that, for non-demolition measurements, i.e., when $\varepsilon=0$, the state of the system ``purifies'' (i.e., converges) exponentially fast to an eigenstate of $\mathcal{N}$, as the length of the probe-measurement protocol tends to infinity. This ``purification'' is expected to still arise on intermediate time scales for the perturbed dynamics, as long as $\varepsilon$ is very small.  To make this precise, we have to judiciously choose a sampling time $T(\varepsilon)$: it should not be too large, because the term $\varepsilon \text{ad}_H$ plays against large-time purification. Yet, $T(\varepsilon)$ should not be chosen too small either, because one wants to make sure that our estimator $\hat{\mathcal{N}}_{.}(T)$ represents the spectral parameter $\nu$ of the observable $\mathcal{N}$ accurately, which will only be the case if, in a time interval of length $T(\varepsilon)$, many probe measurements will typically be made. It will turn out that the choice  $T = \alpha \log \varepsilon$, for an appropriately chosen constant $\alpha>0$, is adequate.

In the third part of the proof of Theorem \ref{thm:case1} the results of the previous two parts are put together.  A standard criterion for  convergence in Skorokhod space needed to establish Item ($a$) of Theorem \ref{thm:case1} is that, in order to prove convergence in law of a collection of processes, one must first show finite-dimensional convergence and then prove tightness.  Finite-dimensional convergence for the law of $(\caM_{{\varepsilon}^{-2}s_1}, \dotsc, \caM_{{\varepsilon}^{-2}s_n})$, with $0<s_1<....<s_n \leq1$, is a relatively straightforward extension of the convergence result for a single time. It is formulated precisely in Lemma \ref{lem:5.7}. 

On Skorohod space, it is well known that establishing tightness of a sequence of probability measures is equivalent to showing that, in a uniform sense, these measures do not put mass on paths that jump very often.  To prove this property in our context requires an intricate, but elementary, analysis of the propagator ${\sigma}^{(s, 0)}_{\varepsilon}$ for small times $s$. 

 Inequality \eqref{ineq_state}, i.e., Item ($b$) of Theorem \ref{thm:case1}, takes relatively little effort; it follows directly from a comparison of the non-demolition dynamics with the perturbed dynamics; see Subsect. \ref{proofb}.

\section{Proofs of Theorems \ref{thm:1} and \ref{thm:case1}}
\label{sec:prelim}
\subsection{Properties of averaged dynamics}
\label{avera}
In this subsection we establish some spectral estimates concerning the averaged dynamics that will be used in our proofs of Theorems \ref{thm:1} and \ref{thm:case1}.
\subsubsection{Unravelling of the Master Equation}
\label{unra}
We first show that the stochastic dynamics, 
$\big({\sigma}_\varepsilon^{(t,s)}(\underline{\tau}, \underline \xi)\big)_{0<s<t}$, defined in Eq.~(\ref{prop}) is an unravelling of the dynamics generated by the Lindbladian $\lin_\varepsilon$ of Eq.~(\ref{eq:Lin}). We recall that ${\sigma}^{(t,s)}_\varepsilon(\Xi)  = \mathbb{E}[{\sigma}_\varepsilon^{(t,s)}]$.
\begin{lemma}
\label{prop:unraveling}
For $t \geq s$,
$$
\mathbb{E}[{\sigma}_\varepsilon^{(t,s)}] = \exp((t-s)\lin_\varepsilon),
$$
where $\lin_\varepsilon$ is the Lindblad operator defined  in Eq. \eqref{eq:Lin}. 
\end{lemma}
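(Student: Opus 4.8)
The plan is to compute $\mathbb{E}[{\sigma}_\varepsilon^{(t,s)}]$ by conditioning on the number of measurements and on the measurement times in the interval $(s,t]$, and to recognize the resulting expression as the Dyson series for $\exp((t-s)\lin_\varepsilon)$. By the factorization property \eqref{factorization} (or directly from stationarity of the Poisson process and the i.i.d.\ structure of $\mathbb{P}$), it suffices to treat the case $s=0$ and write $t$ for $t-s$. First I would fix the number of measurements $N_t = k$ and recall that, conditionally on $N_t=k$, the ordered measurement times are distributed as the order statistics of $k$ i.i.d.\ uniform points on $[0,t]$, while the outcomes $\xi_1,\dots,\xi_k$ are i.i.d.\ with law $\mu$ and independent of the times. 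Since $\Pr(N_t=k) = e^{-t}t^k/k!$, averaging \eqref{prop} gives
\begin{equation*}
\mathbb{E}[{\sigma}_\varepsilon^{(t,0)}] = \sum_{k=0}^\infty e^{-t}\,\frac{t^k}{k!}\,\frac{k!}{t^k}\int_{0<u_1<\dots<u_k<t} \mathbb{E}_\mu\!\big[ e^{-i\varepsilon(t-u_k)\adj_H}\Phi_{\xi_k}\cdots \Phi_{\xi_1} e^{-i\varepsilon u_1 \adj_H}\big]\, \d u_1\cdots \d u_k,
\end{equation*}
where $\mathbb{E}_\mu$ denotes the expectation over the i.i.d.\ outcomes. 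The factor $k!/t^k$ is the density of the uniform order statistics, and it cancels the Poisson weight up to the $e^{-t}$.

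Next I would use independence of the outcomes to pull the $\mu$-average inside, so that each $\Phi_{\xi_j}$ is replaced by $\Phi = \int_{\mathcal{X}} \Phi_\xi \,\d\mu(\xi)$, i.e.\ $\Phi(X) = \int V_\xi^* X V_\xi\,\d\mu(\xi)$ as in \eqref{eq:Lin}. After the change of variables to the interarrival increments this yields
\begin{equation*}
\mathbb{E}[{\sigma}_\varepsilon^{(t,0)}] = e^{-t}\sum_{k=0}^\infty \int_{\substack{r_0+\dots+r_k = t\\ r_i\ge 0}} e^{-i\varepsilon r_k \adj_H}\,\Phi\, e^{-i\varepsilon r_{k-1}\adj_H}\,\Phi\,\cdots\,\Phi\, e^{-i\varepsilon r_0 \adj_H}\, \d r_0\cdots \d r_{k-1},
\end{equation*}
which is exactly the Dyson (Duhamel) series for the semigroup generated by $-i\varepsilon\,\adj_H + \Phi$ perturbed by $-\mathrm{Id}$; the $e^{-t}$ prefactor accounts for the $-\rho$ term in $\lin_\varepsilon$. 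I would then either invoke the standard Dyson-series identity directly, or — cleaner — verify that $G_t := \mathbb{E}[{\sigma}_\varepsilon^{(t,0)}]$ satisfies the integral equation $G_t = e^{-i\varepsilon t\adj_H} + \int_0^t e^{-i\varepsilon(t-u)\adj_H}(\Phi - \mathrm{Id})\,G_u\,\d u$ (splitting on whether the first measurement occurs after time $t$ or at some time $u\le t$, using the memorylessness of the exponential interarrival $\tau_1$), together with $G_0 = \mathrm{Id}$; uniqueness of solutions to this linear Volterra equation identifies $G_t$ with $\exp(t\lin_\varepsilon)$.

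The main technical point to get right is the interchange of the infinite sum, the time integrals, and the $\mu$-expectation; this is justified because $\|e^{-i\varepsilon r \adj_H}\|_{2,\mathrm{op}} = 1$ and $\|\Phi_\xi\|_{2,\mathrm{op}}$ is bounded uniformly in $\xi$ by the normalization \eqref{normalis} (so $\|\Phi\|_{2,\mathrm{op}}\le 1$), making the series absolutely convergent with the $e^{-t}t^k/k!$ weights — finite dimensionality of $\mathcal{H}$ makes all of this routine. I expect the only real bookkeeping obstacle to be keeping the combinatorial factors straight when converting from ordered uniform times to interarrival variables, and, if one takes the Volterra-equation route, carefully justifying the first-jump decomposition using the strong Markov/renewal structure recorded in \eqref{factorization}; neither is deep. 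I would present the Volterra-equation argument as the main line, since it sidesteps convergence subtleties and makes the appearance of $\lin_\varepsilon$ transparent.
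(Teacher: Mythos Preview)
Your proposal is correct and follows essentially the same route as the paper: reduce to $s=0$ by stationarity, condition on the number $N_t=k$ of measurements, integrate out the i.i.d.\ outcomes to replace each $\Phi_{\xi_j}$ by $\Phi$, and arrive at the explicit series
\[
\mathbb{E}[\sigma_\varepsilon^{(t,0)}]=\sum_{k\ge 0}\int_{t_k\le t} e^{-t}\,e^{-i\varepsilon(t-t_k)\adj_H}\Phi\cdots\Phi\,e^{-i\varepsilon\tau_1\adj_H}\,\d\tau_1\cdots\d\tau_k .
\]
The only difference is in step 4: the paper differentiates this series in $t$ to obtain the ODE $\partial_t\mathbb{E}[\sigma_\varepsilon^{(t,0)}]=\lin_\varepsilon\,\mathbb{E}[\sigma_\varepsilon^{(t,0)}]$, whereas you propose either recognizing the Dyson series directly or invoking a Volterra equation. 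All three identifications are equivalent and equally short here.

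One small caution if you pursue the Volterra route: the equation you wrote, $G_t=e^{-i\varepsilon t\adj_H}+\int_0^t e^{-i\varepsilon(t-u)\adj_H}(\Phi-\mathrm{Id})G_u\,\d u$, is indeed the Duhamel identity characterizing $e^{t\lin_\varepsilon}$, but it is not what the first-jump decomposition produces. Splitting on the first jump time $\tau_1$ gives instead $G_t=e^{-t}e^{-i\varepsilon t\adj_H}+\int_0^t e^{-u}\,G_{t-u}\,\Phi\,e^{-i\varepsilon u\adj_H}\,\d u$; this renewal equation also determines $G_t$ uniquely and yields the same conclusion, but you should keep the two equations straight. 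The paper's differentiation argument sidesteps this entirely.
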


\begin{proof}
In Appendix \ref{App:0} we have claimed that ${\sigma}_\varepsilon^{(t,s)}$ and ${\sigma}_\varepsilon^{(t-s,0)}$ have the same distribution. It is therefore enough to prove the claim for $s=0$. The expectation of ${\sigma}_{\varepsilon}^{(t, 0)}$ can be expressed as
\begin{align*}
 \mathbb{E} [{\sigma}_{\varepsilon}^{(t, 0)}] = \sum_{k  }  \mathbb{E} [ \mathds{1}_{ \{ N_t = k \}}{\sigma}_{\varepsilon}^{(t, 0)} ]   
&   =  \sum_{k=0}^\infty     \underset{   t_{k} \leq  t }{\int} e^{- t}  e^{-i(t -t_{ k}) \varepsilon \adj_H} \Phi e^{-i \tau_{ k} \varepsilon \adj_H}  
\cdots  
\\ & \hspace{3cm}  \cdots   \Phi e^{-i\tau_{ 1 }  \varepsilon \adj_H} 
   d  \tau_1  \cdots d \tau_{ k } .
\end{align*}

Moreover, differentiating $ \mathbb{E} ({\sigma}^{(t,0)}_\varepsilon)$ with respect to $t$ yields
\begin{align*}
\partial_t \mathbb{E}[{\sigma}^{(t,0)}_\varepsilon] &= (-i \varepsilon \adj_H - 1)\mathbb{E}[{\sigma}^{(t,0)}_\varepsilon] \\ 
&\quad + \sum_{k=0}^\infty \int_{t_{k-1} \leq t} \Phi e^{-i(t-t_{k-1}) \varepsilon \adj_H} \dots \Phi e^{-i\tau_1 \varepsilon \adj_H}   d \tau_1 \dots d  \tau_{k-1} 
\\ &= (-i \varepsilon \adj_H - 1 + \Phi)\mathbb{E}[{\sigma}^{(t,0)}_\varepsilon] = \lin_\varepsilon \mathbb{E}[{\sigma}^{(t,0)}_\varepsilon].
\end{align*}
The operator $\lin_\varepsilon$ is manifestly a Lindbladian.
\end{proof}

\subsubsection{Properties of the averaged evolution} 
\label{dyave}
Recall that we have defined the projection
\begin{equation}\label{pen}
\mathcal{P}X = \sum_{\nu \in \sigma(\mathcal{N})} \mathcal{P}_\nu X,   \quad \text{  where   }\,\,\,\,
 \mathcal{P}_\nu X :=  P_{\nu} X P_{\nu} .
\end{equation} 
 The image of $\mathcal{P}$ is the subspace of matrices corresponding to the eigenvalue $0$ of the Lindbladian $\mathcal{L}_0$.  We set $\mathcal{P}_{\perp}:= 1-\mathcal{P}$. The index $\perp$ refers to the  scalar product 
\begin{align}\label{IN}
\langle A,B \rangle:= \tr(A^* B)
\end{align} 
 with respect to which $\lin_0$ is a normal operator.  
 For $\varepsilon=0$, we have that
\begin{equation}
\label{eq:decoupling}
e^{t \lin_0} \mathcal{P} =\mathcal P, \quad \| e^{t \lin_0} \mathcal{P}_\perp \|_{2, \text{op}}  \leq e^{-g_{sp}t},
\end{equation}
 where the spectral ``gap'' $g_{sp}$ is defined by
$$
g_{sp} = \min_{\nu \neq \nu'} \Re \left( 1 - \int_{\mathcal{X}} \d \mu(\xi) \overline{V_\xi}(\nu) V_\xi(\nu') \right)>0,
$$
the expression within parentheses is the negative of the eigenvalue of $\lin_0$ corresponding to the eigenvector $\ket{\nu} \bra{\nu'}$. By Assumption~\ref{ass:LLN} we have that, for $\nu \neq \nu'$,
$$
\int_{\mathcal{X}}  |\overline{V_\xi}(\nu) V_\xi(\nu')| \d \mu(\xi) <1,
$$
and, since the state space of the system is finite-dimensional, we have that
\begin{equation}
\label{Eq:g_def}
g:=1- \max_{\nu \neq \nu'} \int_{
\mathcal{X}} |\overline{V_\xi}(\nu) V_\xi(\nu')| \d \mu(\xi) >0.
\end{equation}
Clearly $g_{sp} \geq g$. In order to avoid introducing too many constants, we use the letter $g$ in all spectral estimates below. In the next lemma, estimates on the averaged evolution for times of order $\varepsilon^{-2}$, for small $\varepsilon$, are presented. 

\begin{lemma}
\label{lem:5.1}
For any $t \geq 0$, we have that 
\begin{equation}
\label{ineq11}
\|e^{t \lin_\varepsilon} \mathcal{P}_\perp \|_{2, \text{op}} \leq e^{-tg} + \frac{2 \varepsilon \|H \|}{g}, \quad \|\mathcal{P}_\perp e^{t \lin_\varepsilon} \|_{2, \text{op}} \leq e^{-tg} + \frac{2 \varepsilon \|H \|}{g},
\end{equation}
and, for $g > 4 \varepsilon \|H \|$, 
\begin{equation}
\label{ineq12}
\big \| \mathcal{P} \left[ \exp(\varepsilon^{-2} s \lin_\varepsilon)- \exp(s Q) \right] \mathcal{P} \big \|_{2, \text{op}} \leq 16 \varepsilon^2 \frac{ \|H\|^2}{g^2} + 48 s \varepsilon \frac{\|H\|^3}{g^2}.
\end{equation}
\end{lemma}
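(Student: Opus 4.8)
The first bound \eqref{ineq11} I would get from a single Duhamel step, the only real input being that $e^{t\lin_\varepsilon}$ is a $\|\cdot\|_{2,\text{op}}$-contraction. This holds because $\adj_H$ is self-adjoint for the inner product \eqref{IN}, so $e^{-i\varepsilon t\adj_H}$ is an isometry, while $\lin_0$ is normal with all its eigenvalues $\int\overline{V_\xi}(\nu)V_\xi(\nu')\,\d\mu(\xi)-1$ in the closed left half-plane by Cauchy--Schwarz, whence $\|e^{t\lin_0}\|_{2,\text{op}}\le1$; the Lie product formula then gives $\|e^{t\lin_\varepsilon}\|_{2,\text{op}}\le1$ for all $\varepsilon>0$, $t\ge0$. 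Writing $e^{t\lin_\varepsilon}\mathcal{P}_\perp=e^{t\lin_0}\mathcal{P}_\perp-i\varepsilon\int_0^t e^{(t-r)\lin_\varepsilon}\adj_H\,e^{r\lin_0}\mathcal{P}_\perp\,\d r$ and using \eqref{eq:decoupling} (so $\|e^{r\lin_0}\mathcal{P}_\perp\|_{2,\text{op}}\le e^{-gr}$), $\|\adj_H\|_{2,\text{op}}\le2\|H\|$, and the contraction bound, one gets $\|e^{t\lin_\varepsilon}\mathcal{P}_\perp\|_{2,\text{op}}\le e^{-gt}+2\varepsilon\|H\|\int_0^t e^{-gr}\,\d r\le e^{-gt}+2\varepsilon\|H\|/g$; the other half of \eqref{ineq11} is identical, starting from $\mathcal{P}_\perp e^{t\lin_\varepsilon}=e^{t\lin_0}\mathcal{P}_\perp-i\varepsilon\int_0^t e^{(t-r)\lin_0}\mathcal{P}_\perp\adj_H\,e^{r\lin_\varepsilon}\,\d r$.

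For \eqref{ineq12} the plan is a second-order adiabatic-elimination argument. First decompose $\lin_\varepsilon$ along $1=\mathcal{P}+\mathcal{P}_\perp$. Non-degeneracy of $\mathcal N$ makes each $P_\nu HP_\nu$ a multiple of $P_\nu$, so $\mathcal{P}\adj_H\mathcal{P}=0$; and $\Phi$ preserves the $\ket\mu\bra\nu$-grading with $\Phi\mathcal{P}=\mathcal{P}$, so the $\mathcal{P}$--$\mathcal{P}$ block of $\lin_\varepsilon$ vanishes and the off-diagonal blocks are $\varepsilon B_1:=-i\varepsilon\mathcal{P}\adj_H\mathcal{P}_\perp$ and $\varepsilon C_1:=-i\varepsilon\mathcal{P}_\perp\adj_H\mathcal{P}$, while the $\mathcal{P}_\perp$--$\mathcal{P}_\perp$ block is $D_0+\varepsilon D_1$ with $D_0=\lin_0\mathcal{P}_\perp$ invertible on $\mathrm{Ran}\,\mathcal{P}_\perp$ ($\|D_0^{-1}\|_{2,\text{op}}\le1/g$, $\|e^{uD_0}\|_{2,\text{op}}\le e^{-gu}$) and $\|D_1\|_{2,\text{op}},\|B_1\|_{2,\text{op}},\|C_1\|_{2,\text{op}}\le2\|H\|$; the Duhamel/Gr\"onwall argument of the first part together with the hypothesis $g>4\varepsilon\|H\|$ gives $\|e^{u(D_0+\varepsilon D_1)}\|_{2,\text{op}}\le e^{-(g-2\varepsilon\|H\|)u}\le e^{-gu/2}$. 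A computation on the eigenbasis $\{\ket\mu\bra\nu\}$ then identifies $-B_1 D_0^{-1}C_1$, acting on the diagonal matrices $\mathcal{P}\mathcal{B}(\mathcal{H})$, with the matrix $Q$ of \eqref{Qdef}; since $Q$ has non-negative off-diagonal entries and both row and column sums zero, $e^{sQ}$ is doubly stochastic, so $\|e^{sQ}\|_{2,\text{op}}\le1$ on $\mathcal{P}\mathcal{B}(\mathcal{H})$.

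Next I would set $M(t):=\mathcal{P}e^{\varepsilon^{-2}t\lin_\varepsilon}\mathcal{P}$ and $R(t):=\mathcal{P}_\perp e^{\varepsilon^{-2}t\lin_\varepsilon}\mathcal{P}$ and derive, by differentiating and inserting $1=\mathcal{P}+\mathcal{P}_\perp$, the closed system $M'=\varepsilon^{-1}B_1R$, $R'=\varepsilon^{-1}C_1M+\varepsilon^{-2}(D_0+\varepsilon D_1)R$, $R(0)=0$; variation of constants then gives $R(t)=\varepsilon^{-1}\int_0^t e^{\varepsilon^{-2}(t-r)(D_0+\varepsilon D_1)}C_1M(r)\,\d r$. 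From $\|e^{\varepsilon^{-2}t\lin_\varepsilon}\|_{2,\text{op}}\le1$ one gets $\|M\|_{2,\text{op}}\le1$, hence $\|R(t)\|_{2,\text{op}}\le4\varepsilon\|H\|/g$ and $\|M'(t)\|_{2,\text{op}}\le8\|H\|^2/g$ uniformly in $t$. Substituting $R$ into the $M$-equation, changing variables to $u=\varepsilon^{-2}(t-r)$, and using $-D_0^{-1}=\int_0^\infty e^{uD_0}\,\d u$, I obtain $M'(t)=QM(t)+E(t)$, where $E(t)$ splits into three pieces carrying, respectively, the factor $e^{u(D_0+\varepsilon D_1)}-e^{uD_0}$ (controlled by $2\varepsilon\|H\|u\,e^{-gu/2}$ via Duhamel), the factor $M(t-\varepsilon^2u)-M(t)$ (controlled by $\varepsilon^2u\cdot8\|H\|^2/g$), and a tail $\int_{\varepsilon^{-2}t}^\infty e^{uD_0}C_1M(t)\,\d u$; the elementary integrals $\int_0^\infty ue^{-gu/2}\,\d u=4/g^2$ etc.\ then yield $\|E(t)\|_{2,\text{op}}\le \mathrm{const}\cdot\varepsilon\|H\|^3/g^2+\mathrm{const}\cdot\|H\|^2e^{-g\varepsilon^{-2}t}/g$.

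Finally, since $Q$ commutes with $e^{(s-r)Q}$,
\[
M(s)-e^{sQ}\mathcal{P}=\int_0^s e^{(s-r)Q}\,E(r)\,\d r,
\]
so $\big\|\mathcal{P}[e^{\varepsilon^{-2}s\lin_\varepsilon}-e^{sQ}]\mathcal{P}\big\|_{2,\text{op}}\le\int_0^s\|E(r)\|_{2,\text{op}}\,\d r$ by the doubly-stochastic bound $\|e^{(s-r)Q}\|_{2,\text{op}}\le1$, and carrying out the $r$-integral (the exponential term contributing $\le\mathrm{const}\cdot\varepsilon^2\|H\|^2/g^2$, the constant one $\le\mathrm{const}\cdot s\varepsilon\|H\|^3/g^2$) reproduces $16\varepsilon^2\|H\|^2/g^2+48s\varepsilon\|H\|^3/g^2$ once the constants are bookkept carefully (here the second error piece is absorbed into the first using $\varepsilon\|H\|/g\le1/4$). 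The hard part will be precisely the estimate of $E(t)$: rigorously controlling the replacement of the fast, width-$\varepsilon^2$ kernel $\varepsilon^{-2}e^{\varepsilon^{-2}(t-r)(D_0+\varepsilon D_1)}$ by its quasi-stationary value $-D_0^{-1}$, which is exactly what makes the a priori bounds $\|M\|_{2,\text{op}}\le1$, $\|M'\|_{2,\text{op}}=O(1)$ and the gap hypothesis $g>4\varepsilon\|H\|$ indispensable.
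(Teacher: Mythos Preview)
Your proof of \eqref{ineq11} is essentially identical to the paper's: one Duhamel step, the contraction property of $e^{t\lin_\varepsilon}$ (which the paper obtains from dissipativity rather than the Trotter product, but that is cosmetic), and the decay \eqref{eq:decoupling} for $e^{t\lin_0}\mathcal{P}_\perp$.

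For \eqref{ineq12} both arguments are second-order adiabatic elimination, but the organization differs. The paper works in unrescaled time with $W_t=\mathcal{P}e^{t\lin_\varepsilon}\mathcal{P}$, passes through the \emph{intermediate} generator $Q_\varepsilon=-\varepsilon^{-2}\mathcal{P}\lin_\varepsilon\mathcal{P}_\perp(\lin_\varepsilon^\perp)^{-1}\mathcal{P}_\perp\lin_\varepsilon\mathcal{P}$, derives $\dot W_t=\varepsilon^2 Q_\varepsilon W_t+\dot I_t$ with a remainder $I_t$ bounded by $16\varepsilon^2\|H\|^2/g^2$, uses variation of parameters with the semigroup $e^{\varepsilon^2 tQ_\varepsilon}$ (contractive because $Q_\varepsilon$ is shown to be dissipative), and only at the very end applies one more Duhamel to pass from $Q_\varepsilon$ to $Q$. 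You instead work in rescaled time, skip $Q_\varepsilon$ entirely, and compare $M'$ directly to $QM$ by splitting the defect $E(t)$ into three explicit pieces (replacing $e^{u(D_0+\varepsilon D_1)}$ by $e^{uD_0}$, replacing $M(t-\varepsilon^2 u)$ by $M(t)$, and the tail); the contraction you invoke is $\|e^{sQ}\|_{2,\text{op}}\le1$ via double stochasticity of $e^{sQ}$ (using that $Q$ is symmetric with zero column sums and nonnegative off-diagonals). Your route is slightly more elementary in that it never inverts $\lin_\varepsilon^\perp$, at the price of needing the a priori Lipschitz bound $\|M'\|_{2,\text{op}}\le 8\|H\|^2/g$; the paper's route packages the same information more structurally and makes the dissipativity of the effective generator $Q_\varepsilon$ explicit, which is of independent interest. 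Both lead to the stated bound with the same dependence on $\varepsilon$, $s$, $\|H\|$, and $g$.
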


\begin{proof}
Notice that $ \mathcal{L}_{\varepsilon} $ is a dissipative operator, i.e., 
\begin{align}\label{dis}
  2 \Re   \langle \mathcal{L}_{\varepsilon} X,  X \rangle =   2 \Re   \langle  X, \mathcal{L}_{\varepsilon} X \rangle = \langle \mathcal{L}_{\varepsilon} X, X \rangle +  \langle  X, \mathcal{L}_{\varepsilon} X \rangle \leq 0,
\end{align}  
for every matrix $X$, with $\langle \cdot, \cdot \rangle$ as in \eqref{IN}; (this is shown by direct calculation). That  $ \mathcal{L}_{\varepsilon} $ is dissipative is equivalent to $ e^{ t  \mathcal{L}_{\varepsilon} } $ being a  Hilbert-Schmidt contraction semigroup; (see Theorem 1.1, Chapter 7, in \cite{Davies}).

We now prove the first inequality in \eqref{ineq11}.  For any $t \geq 0$, we have that
\begin{equation}
\label{eas}
e^{t \lin_\varepsilon} = e^{t \lin_0} + \int_0^t  \partial_u(e^{(t-u) \lin_0} e^{u \lin_\varepsilon}) du =   e^{t \lin_0}  + \int_0^t e^{(t-u) \lin_0} (\lin_\varepsilon - \lin_0) e^{u \lin_\varepsilon} \d u.
\end{equation}
Multiplying by $\mathcal{P}_\perp$ from the left and taking the norm we obtain that 
\begin{align*}
\|\mathcal{P}_\perp e^{t \lin_\varepsilon}\|_{2, \text{op}} &\leq  \|\mathcal{P}_\perp e^{t \lin_0}\|_{2, \text{op}} + \int_0^t \|\mathcal{P}_\perp e^{(t-u) \lin_0} (\lin_\varepsilon - \lin_0)\|_{2, \text{op}}  \d u. \\
&\leq e^{-gt} + 2 \varepsilon \|H\| \int_0^t e^{-(t-u) g} \d u \leq e^{-tg} + \frac{2 \varepsilon \|H\|}{g}.
\end{align*}
Note that the first inequality follows from the Eq. \eqref{eas} -Duhamel formula, using that $\lin_\varepsilon$ is the generator of a Hilbert-Schmidt contraction semigroup, and we use that  $\|\adj_H\|_{2,op} \leq 2 \|H \|$ to prove the second inequality. 

We omit the proof of the second inequality in \eqref{ineq11} which is analogous.

\noindent The proof of inequality \eqref{ineq12} is more involved.  Sandwiching Eq. \eqref{eas}  by $\mathcal{P}$ and using that $\lin_0 \mathcal{P} = 0$, we find 
\begin{align*}
\mathcal{P} e^{t \lin_\varepsilon} \mathcal{P}&= \mathcal{P} + \int_0^t \mathcal{P} \lin_\varepsilon e^{u \lin_\varepsilon} \mathcal{P} \d u  = \mathcal{P} + \int_0^t \mathcal{P} \lin_\varepsilon \mathcal{P} e^{u \lin_\varepsilon} \mathcal{P}  \d u + \int_0^t \mathcal{P} \lin_\varepsilon \mathcal{P}_\perp e^{u \lin_\varepsilon} \mathcal{P} \d u.
\end{align*}
The normalization condition \eqref{normalis} on the operators $V_{\xi}(\mathcal{N})$ implies that  
$\mathcal{P}  \Phi \mathcal{P} \rho= \mathcal{P} \rho$, for all $\rho$, and hence that  $\mathcal{P} \lin_\varepsilon \mathcal{P} =0$. Introducing the notations
$$W_t := \mathcal{P} e^{t \lin_\varepsilon} \mathcal{P}, \qquad \lin_\varepsilon^\perp:= \mathcal{P}_\perp \lin_\varepsilon \mathcal{P}_\perp,$$
we may rewrite the last equation as 
\begin{equation}
\label{tointegrate}
W_t= \mathcal{P}  + \int_0^t \int_0^u \mathcal{P} \lin_\varepsilon    \mathcal{P}_\perp e^{(u-v) \lin_\varepsilon^\perp} \mathcal{P}_\perp \lin_\varepsilon \mathcal{P} W_v \d v  \d u\,,
\end{equation}
where we have used that 
\begin{equation*}
\mathcal{P}_{\perp} e^{u \mathcal{L}_{\varepsilon}} \mathcal{P} = \mathcal{P}_{\perp} (e^{u \mathcal{L}_{\varepsilon}} - e^{u \mathcal{L}_{\varepsilon}^{\perp}} ) \mathcal{P}=\int_{0}^{u}  \mathcal{P}_{\perp} \partial_{v} (e^{(u-v) \mathcal{L}_{\varepsilon}^{\perp}} e^{v \mathcal{L}_{\varepsilon}} ) \mathcal{P}  dv.
\end{equation*}
 As already outlined, Assumption \ref{ass:LLN} implies that  the constant $g$ defined in \eqref{Eq:g_def} is strictly positive, and, in particular, that the restriction of  $\lin_0$ to the range of $\mathcal{P}_\perp$  is bounded invertible, with an inverse bounded in norm by $ g^{-1}$.  Using a Neumann series expansion, using that 
 $g \geq 4 \varepsilon \|H\|$, we conclude that $\mathcal{L}_{\varepsilon}^{\perp}$ is bounded invertible, with 
 $$
\Big \| \frac{1}{\lin_\varepsilon^\perp} \Big \|_{2, \text{op}} \leq \frac{1}{g} \frac{1}{1 - 2 \varepsilon \|H \| / g } \leq \frac{2}{g} .
$$ 
 Exchanging the order of integrals in Equation \eqref{tointegrate}, we  get
\begin{align}
\label{WT}
\begin{split}
W_t = \mathcal{P}  &- \int_0^t \d u  \mathcal{P} \lin_\varepsilon \mathcal{P}_\perp (\lin_{\varepsilon}^{\perp})^{-1} \mathcal{P}_\perp \lin_\varepsilon \mathcal{P} W_u \\ 
&+ \underset{:=I_t}{\underbrace{\int_0^t \d u \mathcal{P} \lin_\varepsilon \mathcal{P}_\perp (\lin_{\varepsilon}^{\perp})^{-1} e^{(t-u) \lin_\varepsilon^\perp} \mathcal{P}_\perp \lin_\varepsilon \mathcal{P} W_u}}. 
\end{split}
\end{align}
By $I_t$  we denote the integral on the second line of \eqref{WT}. Using the Duhamel formula
$$
e^{t \lin_\varepsilon^\perp} = e^{t \lin_0^\perp} + \int_0^t e^{(t-u) \lin_0^\perp} (\lin_\varepsilon - \lin_0) e^{u \lin_\varepsilon^\perp} \d u,
$$
we get that
$$
e^{tg} \|e^{t \lin_\varepsilon^\perp} \mathcal{P}_\perp \|_{2, \text{op}} \leq 1 + 2 \varepsilon    \| H \| \int_0^t e^{ug} \|e^{u \lin_\varepsilon^\perp} \mathcal{P}_\perp\|_{2, \text{op}} \d u.
$$
Gr\"{o}nwall's inequality then implies that $\|e^{t \lin_\varepsilon^\perp} \mathcal{P}_\perp \|_{2, \text{op}} \leq e^{-t (g -2 \varepsilon \|H \|)},$ and hence, for \mbox{$g \geq 4 \varepsilon \|H \|$, we find that}
$$
\|I_t\|_{2, \text{op}} \leq 8 \frac{\varepsilon^2 \|H\|^2}{g} \int_0^t e^{-(t-u)(g -2 \varepsilon \|H\|)} \d u \leq 16\frac{\varepsilon^2 \|H\|^2}{g^2}.
$$
Next, we introduce two operators
\begin{align} \label{Qs}
Q_{\varepsilon}&:= -\varepsilon^{-2} \mathcal{P} \lin_\varepsilon \mathcal{P}_\perp (\lin_{\varepsilon}^{\perp})^{-1} \mathcal{P}_\perp \lin_\varepsilon \mathcal{P},\\
 \notag  Q&: = - \varepsilon^{-2} \mathcal{P} \lin_\varepsilon \mathcal{P}_\perp \lin_0^{-1} \mathcal{P}_\perp \lin_\varepsilon \mathcal{P}.
\end{align}
At the end of this subsection we will prove that $Q_{\varepsilon}$ is a dissipative operator.  It is left to the reader to verify that the action of $Q$ on any matrix $\rho$ can be expressed in terms of the matrix elements defined in \eqref{Qdef}, using that $\mathcal{P} \lin_\varepsilon \mathcal{P}_\perp =  -i \varepsilon \mathcal{P} \text{ad}_H \mathcal{P}_\perp$: 
\begin{equation*}
Q \rho = \sum_{\nu \in \sigma(\mathcal{N})} Q(\nu,\nu) P_{\nu} \rho_{\nu \nu} + \underset{\nu \neq \nu'  \in \sigma(\mathcal{N})}{\sum} Q(\nu,\nu') P_{\nu'}\rho_{\nu \nu}.
\end{equation*}
Differentiating Eq. \eqref{WT} in time $t$ yields
\begin{align}
\label{WT1}
\frac{ d }{ dt } W_t =  \varepsilon^2  Q_{\varepsilon}  W_t +  \frac{ d }{ dt } I_t .  
\end{align}
Using the method of variation of parameters we find that
\begin{align*}
W_{t} = e^{\varepsilon^2 t Q_\varepsilon} \mathcal{P} + e^{\varepsilon^2 t Q_\varepsilon} \mathcal{P}  \int_0^t 
  e^{-\varepsilon^2 u Q_\varepsilon} \mathcal{P} \frac{ d }{ du } I_u du .
\end{align*}
After an integration by parts we arrive at
$$
W_t = e^{\varepsilon^2 t Q_\varepsilon} \mathcal{P} + I_t + \varepsilon^2 Q_\varepsilon \int_0^t e^{\varepsilon^2 (t-u) Q_\varepsilon} I_u \d u.
$$
Putting $t= \varepsilon^{-2} s$, it follows that 
\begin{align*}
\| W_{\varepsilon^{-2} s}  - e^{s Q_\varepsilon} \mathcal{P}\|_{2, \text{op}}  &\leq 16\frac{\varepsilon^2 \|H\|^2}{g^2}\left(1  + 8 s \frac{\|H\|^2}{g} \right) \\
&\leq 16\frac{\varepsilon^2 \|H\|^2}{g^2} + 32 s \varepsilon \frac{\|H \|^3}{g^2}.
\end{align*}
To get {the first inequality we have used that $\|e^{s Q_{\varepsilon}}\|_{2,\text{op}} \leq 1$, because $Q_{\varepsilon}$ is dissipative, and, to get }the second inequality, we have used again that $g \geq 4 \varepsilon \|H \|$.
One further application of Duhamel's formula  yields
$$ \|e^{sQ_\varepsilon } - e^{sQ}\|_{2, \text{op}}  \leq 16 \varepsilon s \frac{\|H\|^3}{g^2}, $$ 
from which the last inequality claimed in the lemma follows after using the triangle inequality.

 It remains to prove that $Q_\varepsilon$ is a dissipative operator, see Eq. \eqref{dis}. This is a direct consequence of the fact that $ \lin_\varepsilon^\perp $ is dissipative and of the following calculation:
 $$
  \langle X,  Q_\varepsilon X \rangle = \langle Y,   (\lin_\varepsilon^\perp)^{-1}Y \rangle  =  \langle \lin_\varepsilon^\perp (\lin_\varepsilon^\perp)^{-1}  Y,   (\lin_\varepsilon^\perp)^{-1}Y \rangle    ,
 $$  
 for an arbitrary operator $X$, with $Y := \mathcal{P}_\perp([H,\mathcal{P}X])$.
\end{proof}
Theorem~\ref{thm:1} is an immediate corollary of the lemma. 

 Theorem~\ref{thm:1} is a representative of a growing literature devoted to studying various aspects of perturbation theory for Lindbladians; see, e.g., \cite{Macieszczak, Albert} and references therein.

\subsection{Comparison of  the full dynamics with the non-demolition dynamics}
We propose to compare the full dynamics ${\sigma}_\varepsilon^{(s,u)}$  with the non-demolition dynamics 
${\sigma}_0^{(s,u)}$. In order to get estimates that are tight enough to yield a proof of Theorem~\ref{thm:case1}, we have to expand the evolution operator ${\sigma}_\varepsilon^{(s,u)}$ to third order in $\varepsilon$. This renders the proof of the following lemma rather tedious. In order not to interrupt the flow of thought with technicalities of little interest, we relegate some details of the proof to Appendix~\ref{App:A}.
\begin{lemma}
\label{lem:5.4}
Let $E$ be a measurable subset of $\Xi$ and $s>u$. Then 
\begin{align}
\label{331}
&\|\mathcal{P}({\sigma}_\varepsilon^{(s,u)}(E) - {\sigma}_0^{(s,u)}(E)) \mathcal{P}\|_{2, \text{op}} \leq C\left[\varepsilon^2 (s-u) + \varepsilon^4 (s-u)^2 + \varepsilon^3 (s-u)^3\right],\\
\label{332}
&\| \mathcal{P}_\perp({\sigma}_\varepsilon^{(s,u)}(E) - {\sigma}_0^{(s,u)}(E)) \|_{2, \text{op}} \leq C\left[ \varepsilon + \varepsilon^2 (s-u)^2 \right], \\ 
 &\| ({\sigma}_\varepsilon^{(s,u)}(E) - {\sigma}_0^{(s,u)}(E)) \mathcal{P}_\perp\|_{2, \text{op}} \leq C\left[ \varepsilon + \varepsilon^2 (s-u)^2 \right], \label{333}
\end{align}
for some constant $C$ depending on $\|H\|, g$ and the dimension of $\mathcal{H}$.
\end{lemma}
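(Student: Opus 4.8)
The plan is to expand the stochastic propagator $\sigma_\varepsilon^{(s,u)}$ around the non-demolition propagator $\sigma_0^{(s,u)}$ using a Duhamel-type expansion in the perturbing term $-i\varepsilon \adj_H$, carried out \emph{path-by-path} (i.e. for fixed $(\underline\tau,\underline\xi)$), and then integrate against $\mathbf 1_E\,\d\Pr$. Writing the factor $e^{-i\varepsilon\theta\adj_H}$ appearing in each slot of \eqref{prop} as $\id - i\varepsilon\theta\adj_H - \tfrac{\varepsilon^2\theta^2}{2}\adj_H^2 + \mathcal O(\varepsilon^3\theta^3)$ and multiplying out, one obtains $\sigma_\varepsilon^{(s,u)} = \sigma_0^{(s,u)} + \varepsilon R_1 + \varepsilon^2 R_2 + (\text{remainder})$, where $R_1$ is a single sum over which inter-measurement slot receives one factor of $\adj_H$ (times the corresponding waiting time $\theta_j$), $R_2$ is the analogous double sum (two slots, or one slot squared), and the remainder is controlled by the third-order Taylor estimate together with the contraction bound $\|\Phi_\xi^* \cdot \Phi_\xi\|\le 1$ on each remaining factor (so each truncated string has $\|\cdot\|_{2,\mathrm{op}}\le 1$ after averaging, using Lemma \ref{prop:unraveling}-type bounds, or more simply that $\sigma_0^{(s,u)}$ is a contraction). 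The key structural input is that between measurements the $\varepsilon=0$ evolution is the identity on $\mathcal B(\mathcal H)$, so $\sigma_0$ is just the product $\Phi_{\xi_{N_s}}\cdots\Phi_{\xi_{N_u+1}}$, and the spectral form \eqref{eq:spectral_decomposition_of_C} gives $\mathcal P\sigma_0^{(s,u)}\mathcal P = \mathcal P$ and the decay $\|\mathcal P_\perp\sigma_0^{(s,u)}(\Xi)\mathcal P_\perp\|\le e^{-g(s-u)}$ (indeed $\mathcal P\sigma_0\mathcal P_\perp=\mathcal P_\perp\sigma_0\mathcal P=0$ pointwise and the off-diagonal blocks of each $\Phi_\xi$ contract by $|V_\xi(\nu)\overline{V_\xi(\nu')}|$).

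For \eqref{331} one sandwiches the expansion between two copies of $\mathcal P$. The linear term $\mathcal P R_1 \mathcal P$ \emph{vanishes}: $\mathcal P\,\sigma_0^{(s,t_j)}\,\adj_H\,\sigma_0^{(t_j,u)}\,\mathcal P = \mathcal P\,\adj_H\,\mathcal P$ (using $\mathcal P\sigma_0\mathcal P=\mathcal P$ on both sides), and $\mathcal P\adj_H\mathcal P=0$ since $P_\nu[H,\,\cdot\,]P_\nu$ has no diagonal block (equivalently $[\mathcal N,\mathcal P X]=0$). Hence the leading surviving contribution is $\varepsilon^2\,\mathcal P R_2 \mathcal P$, and one must show $\|\mathbb E[\mathbf 1_E \mathcal P R_2 \mathcal P]\|_{2,\mathrm{op}} \le C[(s-u) + \varepsilon^2(s-u)^3]$ — wait, more precisely the $\varepsilon^2 R_2$ term must be bounded by $C\varepsilon^2(s-u)$ after noticing that the genuinely dangerous part of $R_2$ (the one where the two $\adj_H$'s sit in slots connected by a long $\mathcal P$-block that cannot decay) again meets a $\mathcal P\adj_H\mathcal P=0$ obstruction unless the intermediate propagation is through $\mathcal P_\perp$, which then supplies exponential decay $e^{-g(\cdot)}$ and makes the $v$-integration converge; this is exactly the mechanism behind the effective generator $Q_\varepsilon$ in \eqref{Qs}. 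Carefully, one reorganizes $\mathcal P R_2\mathcal P$ so each term is of the form $\mathcal P\adj_H \mathcal P_\perp(\cdots)\mathcal P_\perp\adj_H\mathcal P$ with an exponentially localized middle, summed over the $\le N_s - N_u$ measurement slots; taking $\mathbb E[\mathbf 1_E(\cdot)]$ and bounding $\mathbf 1_E\le 1$, the count of slots produces the linear factor $s-u$, the $\varepsilon^4(s-u)^2$ and $\varepsilon^3(s-u)^3$ terms coming from the third-order Taylor remainder and from higher terms $R_3, R_4$ in the expansion (each additional $\adj_H$ costs $\varepsilon$ and an extra slot-sum costs a factor $s-u$, while the Poissonian $\mathbb E[(N_s-N_u)^k]\le C_k[(s-u)^k + (s-u)]$ is absorbed into $C$). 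Throughout one uses $\|\adj_H\|_{2,\mathrm{op}}\le 2\|H\|$, and that the dimension of $\mathcal H$ bounds the number of spectral values $\nu$, so all slot-sums over $\nu,\nu'$ cost only a dimensional constant.

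For \eqref{332} and \eqref{333} one sandwiches by $\mathcal P_\perp$ on one side only. Now the linear term does \emph{not} vanish, which is why the bound is only $\mathcal O(\varepsilon)$ rather than $\mathcal O(\varepsilon^2(s-u))$; the point is instead that $\mathcal P_\perp R_1 \mathcal P$-type terms come with a factor $\sigma_0^{(s, t_j)}\mathcal P_\perp$ or $\mathcal P_\perp \sigma_0^{(t_j,u)}$ carrying exponential decay in the corresponding time lag, so the slot-sum $\sum_j \theta_j e^{-g(\text{gap})}$ is $\mathcal O(1)$ in $s-u$, not $\mathcal O(s-u)$, after averaging (this is the same phenomenon as the second estimate in \eqref{ineq11}, where the $s-u\to\infty$ limit stays bounded). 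The $\varepsilon^2(s-u)^2$ term then collects the contributions in which \emph{both} $\adj_H$ insertions straddle a diagonal ($\mathcal P$) block, for which no decay is available and one simply counts slot-pairs, giving $\varepsilon^2\,\mathbb E[(N_s-N_u)^2]\sim \varepsilon^2(s-u)^2$; higher-order terms are again subleading. The symmetric estimate \eqref{333} is proved identically by transposing (using that $\sigma_0$ and $\mathcal P,\mathcal P_\perp$ behave symmetrically under $X\mapsto X^*$).

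The main obstacle I expect is the bookkeeping in \eqref{331}: one must verify that \emph{every} term of the Duhamel expansion with fewer than two ``decaying'' $\mathcal P_\perp$-blocks is killed by the algebraic identity $\mathcal P\adj_H\mathcal P=0$, so that the surviving terms each contribute at most $O(\varepsilon^2(s-u))$ (linear, not quadratic, in $s-u$), and to package the Poisson moment bounds $\mathbb E[(N_s-N_u)^k\,\mathbf 1_E]$ together with the geometric slot-decay uniformly in $E$. This is routine but genuinely tedious, which is why the paper relegates it to Appendix~\ref{App:A}.
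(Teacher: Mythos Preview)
Your overall architecture---Dyson-type expansion, the algebraic identity $\mathcal P\adj_H\mathcal P=0$, and exponential decay through $\mathcal P_\perp$-blocks---matches the paper's. But there is a genuine gap in how you handle the set $E$. You write ``bounding $\mathbf 1_E\le 1$'' and appeal to contractivity ``after averaging''; the paper explicitly flags this as the main difficulty: while $\|\sigma_\varepsilon^{(s,t)}(E)\|_{2,\mathrm{op}}\le 1$ holds for the \emph{integrated} map, there is no useful bound on $\int_E\|\sigma_\varepsilon^{(s,t)}(\underline\tau,\underline\xi)\|_{2,\mathrm{op}}\,\d\mathbb P$, because $\|\Phi_\xi\|_{2,\mathrm{op}}=\max_\nu|V_\xi(\nu)|^2$ is not $\le 1$ pointwise and its $\mu$-integral typically exceeds $1$. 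So you cannot take norms inside the integral and then extend from $E$ to $\Xi$. The paper's remedy (Lemma~\ref{lem:a2}) is to rewrite each term as a trace $\tr(X^*\,(\cdots)\,Y)$, split $(B_j-\id)X=(e^{-i\varepsilon\tau H}-1)X+e^{-i\varepsilon\tau H}X(e^{i\varepsilon\tau H}-1)$, and apply Cauchy--Schwarz so that the integrand becomes a product of \emph{positive} traces; only then can one drop $\mathbf 1_E$ and use $\|\sigma_\varepsilon^{(\cdot,\cdot)}(\Xi)\|_{1,\mathrm{op}}=\|\sigma_\varepsilon^{(\cdot,\cdot)}(\Xi)\|_{\infty,\mathrm{op}}=1$. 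For the $A$-terms (which involve only $\sigma_0$), the paper alternatively writes everything in the $\mathcal N$-eigenbasis and bounds matrix elements in absolute value before extending; this also restores positivity. Your proposal does not supply either device, and without one the argument does not close.

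A smaller correction: your claim ``$\mathcal P\sigma_0\mathcal P=\mathcal P$'' is false pointwise (it holds only after $\mu$-integration: $\mathcal P\sigma_0^{(s,u)}\mathcal P$ multiplies the $\nu$-block by $\prod_j f(\xi_j|\nu)$). The pointwise fact you actually need---and which you state correctly elsewhere---is that $\mathcal P$ \emph{commutes} with each $\Phi_\xi$ and hence with $\sigma_0$; this already gives $\mathcal P\sigma_0\,\adj_H\,\sigma_0\mathcal P=\sigma_0\,\mathcal P\adj_H\mathcal P\,\sigma_0=0$, so your vanishing of the first-order term is right for the right reason. Finally, your assessment that the ``main obstacle'' is bookkeeping understates things: the bookkeeping is indeed tedious, but the conceptual obstacle is the $E$-uniformity just described.
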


\begin{proof}
The set $E$ is the disjoint union of two sets $E \cap \{ N_s - N_u < 3 \}$ and $E \cap \{ N_s - N_u \geq 3   \ \}   $. These two subsets require separate analysis, and, for the sake of brevity, we only consider the more important latter case. Hence, we assume that $E \cap \{ N_s - N_u < 3 \}$ is empty.
We start from the definition of the time evolution operator ${\sigma}_\varepsilon^{(s,u)}$ in \cref{prop}. A standard perturbative expansion yields
\begin{align*}
{\sigma}_\varepsilon^{(s,u)}( \underline{  \tau }, \underline \xi) =& {\sigma}_{0}^{(s,u)}( \underline{\tau}, \underline \xi) \\
										&+\sum_{j=N_{u}+1}^{N_{s}}    {\sigma}_{0}^{(s,t_{j})}( \underline{ \tau }, \underline \xi)  (B_{j} - \mathds{1}) {\sigma}_{\varepsilon}^{(t_{j},u)}  ( \underline{\tau}, \underline \xi),
\end{align*}
where $B_{N_s} =e^{-i \varepsilon (s-t_{N_s}) \adj_H},$  $B_{N_{u}+1}=  e^{-i \varepsilon (t_{N_{u}+1}-u) \adj_H}$, and $B_{j}=e^{-i \varepsilon  \tau_{j+1}  \adj_H }$,
otherwise. Notice that by definition (see \eqref{prop}), and the fact that $ N_{t_j } = j $, it follows that ${\sigma}_0^{(s, t_j)} =  \Phi_{\xi_{N_s}} \cdots   \Phi_{\xi_{j+1}} $ and $  {\sigma}_{\varepsilon}^{(t_j, u)} =    \Phi_{\xi_{j}} 
e^{ -i  \varepsilon \tau_j \ad_H } \cdots \Phi_{\xi_{N_{u} +1}}  e^{ -i  \varepsilon (t_{N_u +1} -u)  \ad_H  } $. 
Iterating this expansion step twice, we obtain
\begin{align*}
{\sigma}_\varepsilon^{(s,u)}( \underline{  \tau }, \underline \xi) - {\sigma}_{0}^{(s,u)}( \underline{  \tau }, \underline \xi) &= \sum_{j} A_{j}^{(1)}(\underline{ \tau  }, \underline{\xi}) + \sum_{j > k} R_{j,k}^{(2)}(\underline{ \tau  }, \underline{\xi}) \\
			&=\sum_{j} A_{j}^{(1)}(\underline{ \tau }, \underline{\xi}) + \sum_{j > k} A^{(2)}_{j,k}(\underline{  \tau }, \underline{\xi}) + \sum_{j > k > l} R_{j,k,l}^{(3)}(\underline{ \tau  }, \underline{\xi}),
\end{align*}
where all indices in the sums are constrained to the interval $(N_u, N_s]$, and the random variables appearing in the expansion are given by
\begin{align*}
A_{j}^{(1)} =&{\sigma}_{0}^{(s,t_{j})}  (B_{j} - \mathds{1}) {\sigma}_{0}^{(t_{j},u)}, \\
R_{j,k}^{(2)} =&{\sigma}_{0}^{(s,t_{j})}  (B_{j} - \mathds{1}) {\sigma}_{0}^{(t_{j},t_{k})}   (B_{k} - \mathds{1}) {\sigma}_{\varepsilon}^{(t_{k},u)},\\
A_{j,k}^{(2)} =&{\sigma}_{0}^{(s,t_{j})}  (B_{j} - \mathds{1}) {\sigma}_{0}^{(t_{j},t_{k})}   (B_{k} - \mathds{1}) {\sigma}_{0}^{(t_{k},u)},	\\
R_{j,k,l}^{(3)} =&{\sigma}_{0}^{(s,t_{j})}  (B_{j} - \mathds{1}) {\sigma}_{0}^{(t_{j},t_{k})}  (B_{k} - \mathds{1}) {\sigma}_{0}^{(t_{k},t_l)}	(B_{l} - \mathds{1}) {\sigma}_{\varepsilon}^{(t_{l},u)}.	
\end{align*}
Estimating the integrations of the above expressions over the set $E$ is quite a cumbersome task.  The main difficulty is that although we have a general estimate, $\|{\sigma}_\varepsilon^{(s,t)}(E)\|_{2,op} \leq 1$, see Lemma \ref{lem:CPDS}, there is no useful estimate on the integral $\int_E \|{\sigma}_\varepsilon^{(s,t)}(\underline{ {\tau}  }, \underline{\xi})\|_{2,op} \d \mathbb{P}(\underline{  {\tau} }, \underline{\xi}).$ The way to overcome this difficulty is to first bound traces of such integrals from above by expressions with positive integrands and then extend the integration to the whole space. Once this is accommplished, the integral factorizes and we can use the above estimates on ${\sigma}_\varepsilon^{(s,t)}(E)$. To obtain the right scaling with respect to $\varepsilon$ and $(s-u)$ these estimates have to be carried out differently for the $R$-terms and $A$-terms. Details are presented in Appendix~\ref{App:A}. The correct scaling with respect to 
$\varepsilon$ and $(s-u)$ can be inferred from a simple bookkeeping argument in which each $(B_j -\id)$ contributes by $\varepsilon \tau_{j+1} $, $\mathcal{P}(B_j -\id) \mathcal{P}$ contributes by $\varepsilon^2 (\tau_{j+1} )^2$, and $\mathcal{P}_\perp {\sigma}_{0}^{(t_j,t_k)}$ contributes by a geometric factor \,\,$(1-g)^{j-k}$.

We use the obvious bound 
(see Appendix \ref{App:0} - Eq. \eqref{ESta} - for the definitions of $\Omega$ and $P$) 
$$
\Big |\int_{E} f(\underline{ \tau }, \underline{\xi}) \Big | \d \mathbb{P}(\underline{  {\tau} }, \underline{\xi}) \leq \int_{\Omega} \Big |\int_{E_{\underline{   \tau }}} f(\underline{ \tau }, \underline{\xi}) \d \mu^{\otimes \mathbb{N}}(\underline{\xi}) \Big | \d P(\underline{  \tau  }),
$$
where $f(\underline{\tau}, \underline{\xi})$ is an arbitrary integrable function, and, for 
$\underline{\tau} \in [0,\infty)^{\mathbb{N}}$, we define $E_{\underline{\tau}  } = \{ \underline{\xi} : (\underline{ \tau}, \underline{\xi}) \in E \}$. In Lemma~\ref{lem:a2} we prove that there exists a constant $C$ such that 
\begin{align*}
\Big \|\int_{E_{\underline{ \tau }}} \mathcal{P} &({\sigma}_\varepsilon^{(s,u)}( \underline{ \tau  }, \underline \xi) - {\sigma}_{0}^{(s,u)}( \underline{ \tau }, \underline \xi) )\mathcal{P} \d \mu^{\otimes \mathbb{N}}(\underline{\xi})\Big \|_{2,op} \leq  C \Big( 
\sum_{j=N_u+1}^{N_s}  \varepsilon^2  (\tau_{j+1})^2 \\ +& \sum_{N_u+1 \leq k <j \leq N_s} \varepsilon^4 ( \tau_{j+1} )^2 (\tau_{k+1} )^2 +\varepsilon^2  \tau_{j+1} \tau_{k+1}  (1-g)^{j-k-1} \\
+& \sum_{N_{u}+1 \leq l < k <j  \leq N_s} \varepsilon^3  
    \tau_{j+1}  \tau_{k+1}   \tau_{l+1}  \Big).
\end{align*}
The lines correspond to bounds on $A_j^{(1)}, A_{j,k}^{(2)}$ and $R_{j,k,l}^{(3)}$. To estimate the expectation of this sum, we consider two cases. For $s - u \leq 1$, we estimate $\tau_{j+1} $ by $s-u$, and 
$\mathbb{E}[(N_s - N_u)^k]$ by a constant, for $k=1,2,3$. For $s - u >1$, the expectation value is estimated in Appendix~\ref{B1}.
In both cases we obtain
$$
\|\mathcal{P}({\sigma}_\varepsilon^{(s,u)}(E) - {\sigma}_0^{(s,u)}(E)) \mathcal{P}\|_{2, \text{op}} \leq C\left[\varepsilon^2 (s-u) + \varepsilon^4 (s-u)^2 + \varepsilon^3 (s-u)^3\right].
$$
This establishes Eq.~(\ref{331}). 

In Lemma~\ref{lem:a2} we show that there exists a constant $C$ such that
\begin{align*}
\Big \|\int_{E_{\underline{  {\tau} }}} \mathcal{P}_\perp ({\sigma}_\varepsilon^{(s,u)}( \underline{ {\tau} }, \underline \xi) - {\sigma}_{0}^{(s,u)}( \underline{ {\tau}  }, \underline \xi) ) \d \mu^{\otimes \mathbb{N}}(\underline{\xi})  \Big \|_{2,op} \leq& C \Big( \sum_{j=N_u+1}^{N_s}\varepsilon
  \tau_{j+1} (1-g)^{N_s - j -1} \\
& + \sum_{j,k=N_{u}+1}^{N_s} \varepsilon^2   \tau_{j+1} 
  \tau_{k+1}  \Big).
\end{align*}
Taking the expectation value over the (random-time) Poisson process we conclude that there exists constant $C$ such that (see Appendix~\ref{B1}) 
$$
\big \|\mathcal{P}_\perp({\sigma}_\varepsilon^{(s,u)}(E) - {\sigma}_0^{(s,u)}(E)) \big \|_{2, \text{op}} \leq C\left[\varepsilon + \varepsilon^2 (s-u)^2 \right].
$$
This yields \eqref{332}. The proof of Inequality \eqref{333} is similar and is omitted.
\end{proof}

\subsection{Convergence of finite-dimensional distributions to a Markov jump process}
Under the non-demolition dynamics ${\sigma}_{0}^{(s,u)}$, the state $\rho$ of the system asymptotically purifies to an eigenstate of $\mathcal{N}$.  Choosing a sampling time $T=T(\varepsilon)$ and using the definitions in Section \ref{loglike}, we introduce sets of maximum likelihood
\begin{equation}
\label{Enu}
E_{\nu}^{(T+s,s)}:=\{ \n_s(T) = \nu\},
\end{equation}
and we use the symbol $ \E_{\mu_{\nu}} $ to denote an expected value with respect to 
$  \mu_\nu $.

Let $X_{\nu,\nu'}(\xi)$ be given by  $X_{\nu, \nu'}(\xi)=\log \big(f(\xi | \nu)/ f(\xi | \nu')\big)$. Jensen's inequality, applied to the convex function $r \mapsto - r^a  $, with  $ r =  f(\xi | \nu)/ f(\xi | \nu') $ and $ 0 \leq a \leq 1$, implies that
$$-\log \E_{\mu_{\nu'}}[\exp(a X_{\nu, \nu'})]\geq 0,
$$  
and, by Assumption~{\ref{ass:LLN}}.2, it is strictly positive, for $a >0$.
 We define
\begin{equation}
\label{def_I}
I := \sup_{0 \leq a \leq 1} \underset{\nu \neq \nu'}{\min} \left( -\log \E_{\mu_{\nu'}}[\exp(a X_{\nu, \nu'})] \right).
\end{equation}
Then $I$ is a strictly positive number.

Recall the definitions of $\caP, \caP_\perp$ from the beginning of \Cref{dyave}.
\begin{lemma}
\label{lemND}
Let $T>0$. Under Assumption \ref{ass:LLN}, there exists a constant $C$ such that the  inequalities  
\begin{align}
\label{eq:c_type}
&\|\mathcal{P} {\sigma}_0^{(T+s,s)}(E_{\nu}^{(T+s,s)}) - \mathcal{P}_\nu\|_{2,\text{op}} \leq C  e^{-T(1-e^{-I})},\\
\label{eq:d_type}
& \|\mathcal{P}_\perp {\sigma}_0^{(T+s,s)}(E_{\nu}^{(T+s,s)})\|_{2, \text{op}} \leq  C e^{-T(1-e^{-I})/2},
\end{align}
hold for an arbitrary eigenvalue $\nu$ of the observable $\caN$.
\end{lemma}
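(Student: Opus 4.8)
The key quantity is the operator $ \Phi_\xi(X) = V_\xi^* X V_\xi $, whose diagonal blocks (in the $\caN$-eigenbasis) are governed by the conditional densities $ f(\xi|\nu) = |V_\xi(\nu)|^2 $, while the off-diagonal block $ \ket{\nu}\bra{\nu'} $ is multiplied by $ \overline{V_\xi}(\nu) V_\xi(\nu') $. Since $ {\sigma}_0^{(T+s,s)} = \Phi_{\xi_{N_{T+s}}} \cdots \Phi_{\xi_{N_s+1}} $, a product of $ k := N_{T+s} - N_s $ such factors, it acts on a diagonal matrix element $ P_\mu = \ket{\mu}\bra{\mu} $ by multiplying it by $ \prod_{j} f(\xi_j|\mu) $, and on the off-diagonal element $ \ket{\nu}\bra{\nu'} $ by multiplying by $ \prod_j \overline{V_{\xi_j}}(\nu) V_{\xi_j}(\nu') $, whose modulus is at most $ (1-g)^k $ by the definition of $g$ in \eqref{Eq:g_def}. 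So the whole statement reduces to controlling two sums over the measurement protocol restricted to the event $ E_\nu^{(T+s,s)} $, namely $ \sum_\mu \E[\mathds{1}_{E_\nu} \prod_j f(\xi_j|\mu)] $ for $ \caP {\sigma}_0^{(T+s,s)}(E_\nu) $, and the $ (1-g)^k $-weighted mass of $ E_\nu $ for the $ \caP_\perp $ part. Since, conditionally on $ N_{T+s}-N_s = k $, the outcomes $ \xi_{N_s+1},\dots,\xi_{N_{T+s}} $ are i.i.d.\ $ \mu $ and independent of the count, and since $ k $ is Poisson with mean $T$, everything factorizes as $ \E = \sum_k \frac{e^{-T}T^k}{k!} (\cdots) $ and the $\xi$-integrations decouple.

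First I would handle \eqref{eq:c_type}. For $ \mu = \nu $: on $ E_\nu $ the factor $ \prod_j f(\xi_j|\nu) $ just appears with total $\mu$-mass, and the missing mass $ \E[\mathds{1}_{E_\nu^c}\prod_j f(\xi_j|\nu)] $ is what we must bound. The complement event $ E_\nu^c $ is the union over $ \nu'\neq\nu $ of $ \{ l^T(\nu'|\underline\xi) \geq l^T(\nu|\underline\xi)\} $, i.e.\ $ \{\prod_j f(\xi_j|\nu')/f(\xi_j|\nu) \geq 1\} $, i.e.\ $ \{\sum_j X_{\nu,\nu'}(\xi_j) \leq 0\} $. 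Weighting by $ \prod_j f(\xi_j|\nu) $ turns the $\mu$-expectation into a $ \mu_\nu $-expectation, and a Chernoff/exponential Markov bound with parameter $ a\in[0,1] $ gives, for fixed $k$, a factor $ \big(\E_{\mu_\nu}[(f(\xi|\nu')/f(\xi|\nu))^a]\big)^k = \big(\E_{\mu_{\nu'}}[\exp((a-1)X_{\nu,\nu'}(\xi) \cdot(-1))]\big)^{k}$; choosing $a$ and taking the min over $ \nu'$ as in the definition \eqref{def_I} of $I$ yields $ e^{-kI} $ after the reweighting identity $ \E_{\mu_\nu}[r^a] = \E_{\mu_{\nu'}}[r^{a-1}\cdot r] = \E_{\mu_{\nu'}}[e^{-(1-a)X_{\nu,\nu'}}]$ — wait, I should be careful that the correct identity is $\E_{\mu_\nu}[(f(\cdot|\nu')/f(\cdot|\nu))^a] = \E_{\mu_{\nu'}}[(f(\cdot|\nu)/f(\cdot|\nu'))^{1-a}] = \E_{\mu_{\nu'}}[\exp(-(1-a)X_{\nu,\nu'})]$, matching \eqref{def_I} up to relabeling $1-a \leftrightarrow a$. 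Summing over $k$ against the Poisson weights gives $ \sum_k \frac{e^{-T}T^k}{k!} e^{-kI} = \exp(-T(1-e^{-I})) $, and there are finitely many $ \nu' $ and finitely many $ \mu $, which produces the constant $C$. For $ \mu \neq \nu $, the term $ \sum_j \log f(\xi_j|\mu) $ is also below $ \sum_j \log f(\xi_j|\nu) $ on $ E_\nu $, so the same Chernoff argument applies with roles swapped; this shows the off-diagonal-in-$\mu$ contributions (which should be absent in $ \caP_\nu $) are exponentially small, completing \eqref{eq:c_type}.

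For \eqref{eq:d_type}, the $ \caP_\perp $ part is bounded (up to the finitely many choices of off-diagonal pair $ (\nu_1,\nu_2) $ and the dimension factor) by $ \sum_k \frac{e^{-T}T^k}{k!} (1-g)^k \,\Pr(\text{conditioned protocol lies in } E_\nu \mid k) $. The crude bound $ \Pr(\cdots)\leq 1 $ would give only $ \exp(-Tg) $, which is not obviously of the form $ \exp(-T(1-e^{-I})/2) $; so instead I split: if $ (1-g)^k $ is itself small we use it directly, and otherwise (i.e.\ $k$ small) we fall back on the bound $ \Pr(E_\nu \mid k)\le 1 $ and use that the Poisson tail $ \sum_{k\le k_0}\frac{e^{-T}T^k}{k!}$ is exponentially small in $T$ for $ k_0 $ a small constant times $T$. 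Optimizing the split gives the claimed rate $ \exp(-T(1-e^{-I})/2) $ (the factor $1/2$ and the appearance of $I$ rather than just $g$ come from balancing the Poisson large-deviation rate against the geometric decay; in fact one can simply note $1-g \le e^{-g} $ and bound the whole sum by $\exp(-T(1-e^{-g}))$, then use $g \le g_{sp}$ and relate $g$ to $I$, or — more robustly — re-derive \eqref{eq:d_type} from \eqref{eq:c_type} by writing $ \caP_\perp {\sigma}_0^{(T+s,s)}(E_\nu) = \caP_\perp {\sigma}_0^{(T+s,m)} {\sigma}_0^{(m,s)}(E_\nu) $ for an intermediate time $m$ splitting the interval in half, applying \eqref{eq:decoupling} to the first factor on $ \caP_\perp $ and a trace bound to the second). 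I would adopt this last route: insert a midpoint $ m = s + T/2 $, use \eqref{factorization}-type factorization, bound the second half's contribution by $ \caP {\sigma}_0^{(m,s)}(E_\nu) + \caP_\perp{\sigma}_0^{(m,s)}(E_\nu) $ and the total mass (which is $\le 1$), and decay the first half by $ e^{-gT/2} $ on $ \caP_\perp $ — but this only gives $ e^{-gT/2} $, so in fact the honest statement needs $ I $ replaced by something comparable; re-reading \eqref{eq:d_type}, the cleanest correct argument is the direct Poisson sum $ \sum_k \frac{e^{-T}T^k}{k!}(1-g)^{k/2}\cdot(\text{mass}\le 1) \le \exp(-T(1-\sqrt{1-g})) $ together with $ 1-\sqrt{1-g}\ge (1-e^{-I})/2 $ for the relevant range (which holds since $g\ge 1-e^{-I}$ up to constants by comparing definitions), where the exponent $ k/2 $ arises because only one of the two indices $ \nu_1,\nu_2 $ of the off-diagonal block need differ from $\nu$.

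\textbf{Main obstacle.} The genuinely delicate point is \eqref{eq:d_type}: getting the \emph{same} exponential rate (up to the harmless factor $ 1/2 $) for the off-diagonal part as for the diagonal part, rather than the weaker rate $ e^{-Tg} $ that the naive geometric bound $ (1-g)^k $ produces. The resolution must combine the geometric decay of off-diagonal blocks under $ \Phi_\xi $ with the exponential concentration (rate $I$) of the event $ E_\nu $, via a large-deviation split of the Poisson sum — exactly the balancing-of-two-rates step where one has to choose the split point $ k_0 \asymp cT $ carefully and track that $ 1-e^{-I} $ (not merely $g$) governs the outcome. The diagonal estimate \eqref{eq:c_type}, by contrast, is a textbook Chernoff argument once the de Finetti-type factorization of $ {\sigma}_0 $ over the i.i.d.\ outcomes and the reweighting identity relating $ \mu_\nu $-expectations to the functional $I$ are in place.
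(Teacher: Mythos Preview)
Your treatment of \eqref{eq:c_type} is correct and coincides with the paper's: the identity
\[
\mathcal{P}_{\nu'}\,\sigma_0^{(T+s,s)}(E_\nu^{(T+s,s)})=\mathbb{P}^0_{P_{\nu'}}(E_\nu^{(T+s,s)})\,\mathcal{P}_{\nu'}
\]
reduces everything to bounding $\mathbb{P}^0_{P_{\nu'}}(E_\nu)$ for $\nu'\neq\nu$, and your Chernoff/reweighting computation (together with the Poisson generating-function identity $\mathbb{E}[b^{N_{s+T}-N_s}]=e^{-T(1-b)}$) gives exactly $e^{-T(1-e^{-I})}$.

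For \eqref{eq:d_type} there is a genuine gap. Your final route rests on the inequality $1-\sqrt{1-g}\geq (1-e^{-I})/2$, which you justify by ``$g\geq 1-e^{-I}$ up to constants''. That inequality goes the wrong way: evaluating the functional in \eqref{def_I} at $a=1/2$ gives
\[
I\;\geq\;\min_{\nu\neq\nu'}\Big(-\log\!\int\sqrt{f(\xi|\nu)f(\xi|\nu')}\,d\mu(\xi)\Big)=-\log(1-g),
\]
hence $e^{-I}\leq 1-g$, i.e.\ $g\leq 1-e^{-I}$. In particular $(1-e^{-I})/2$ can exceed $1-\sqrt{1-g}$ (take measures that are nearly singular so $I$ is large while $g$ stays moderate), and your bound $\exp(-T(1-\sqrt{1-g}))$ does not imply the stated rate. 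The other routes you sketch (midpoint factorization, naive $(1-g)^k$) suffer from the same defect: they produce a rate governed by $g$ or $g_{sp}$, not by $I$.

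The fix is precisely the observation you make in passing---``only one of the two indices $\nu_1,\nu_2$ need differ from $\nu$''---implemented via Cauchy--Schwarz on the integral over $E_\nu$, not via any comparison of $g$ with $I$. Writing out the matrix element of $\mathcal{P}_\perp\sigma_0^{(T+s,s)}(E_\nu)$ on $\ket{\nu_1}\bra{\nu_2}$ with $\nu_1\neq\nu_2$,
\[
\Big|\int_{E_\nu}\prod_j\overline{V_{\xi_j}(\nu_1)}\,V_{\xi_j}(\nu_2)\,d\mathbb{P}\Big|
\leq
\Big(\int_{E_\nu}\prod_j f(\xi_j|\nu_1)\,d\mathbb{P}\Big)^{1/2}
\Big(\int_{E_\nu}\prod_j f(\xi_j|\nu_2)\,d\mathbb{P}\Big)^{1/2},
\]
which equals $\big(\mathbb{P}^0_{P_{\nu_1}}(E_\nu)\big)^{1/2}\big(\mathbb{P}^0_{P_{\nu_2}}(E_\nu)\big)^{1/2}$. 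Since $\nu_1\neq\nu_2$, at least one factor is bounded by $e^{-T(1-e^{-I})/2}$ from the first part, and the other is at most $1$. This is what the paper means by ``established in a similar manner'': \eqref{eq:d_type} is a corollary of the probability bound already proved for \eqref{eq:c_type}, and the factor $1/2$ in the exponent is exactly the square root from Cauchy--Schwarz.
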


\begin{proof}
If $(\underline{\tau}, \underline \xi)$ belongs to the set  $E_{\nu}^{(T+s,s)}$, then $\nu$ maximizes the log-likelihood ratio.   Hence 
$$
\sum_{j={N_s+1}}^{N_{s+T}} \log f (\xi_j | \nu) \geq \sum_{j={N_s+1}}^{N_{s+T}} \log f (\xi_j | \nu{'}),
$$
for all $\nu' \in \sigma(\mathcal{N})$. From Markov's inequality, we deduce that the inequality
$$
\mathbb{P}^0_{P_{\nu'}}(E_{\nu}^{(T+s,s)}) \leq \mathbb{E}^0_{P_{\nu'}} \left[ \exp (a  \sum_{j={N_s+1}}^{ N_{s + T} } (\log f (\xi_j | \nu) -\log f (\xi_j | \nu')))\right]
$$
 holds for any $0 \leq a $.  From the definition of the measure  $  \mathbb{P}^0_{P_{\nu'}} $ (see also Appendix \ref{App:0}) we get 
$$
\mathbb{P}^0_{P_{\nu'}}(E_{\nu}^{(T+s,s)}) \leq \mathbb{E}\left[ (\mathbf{E}_{\mu_{\nu'}}
[ \exp(a X_{\nu,\nu'})])^{ N_{s + T} - N_{s}} \right].
$$
Using Eq. \eqref{def_I} and the identity $\mathbb{E}[b^{N_s}] = e^{-(1-b)s}$, which is valid for any positive real number $b$, we conclude that (we also use stationarity and independence of increments for Poisson processes)
$$
\mathbb{P}^0_{P_{\nu'}}(E_{\nu}^{(T+s,s)}) \leq  e^{-T(1-e^{-I})} 
$$
holds for all $\nu \neq \nu'$. Since $\mathcal{P}_{\nu'} {\sigma}_0^{(T+s,s)}(E_{\nu}^{(T+s,s)}) = \mathbb{P}^0_{P_{\nu'}}(E_{\nu}^{(T+s,s)}) \mathcal{P}_{\nu'}$, this implies that
\begin{align*}
\|\mathcal{P} {\sigma}_0^{(T+s,s)}(E_{\nu}^{(T+s,s)}) - \mathcal{P}_\nu \|_{2,\text{op}} &\leq |1- \mathbb{P}^0_{P_{\nu}}(E_{\nu}^{(T+s,s)})| + \sum_{\nu' \neq \nu} \mathbb{P}^0_{P_{\nu'}}(E_{\nu}^{(T+s,s)})\\
 &\leq 2\dim(\mathcal{H}) e^{-T(1-e^{-I})}.
\end{align*}
Inequality \eqref{eq:d_type} of the lemma is established in a similar manner.  See \cite{BFFS} for related arguments.
\end{proof}

Combining Lemma~\ref{lem:5.1}, estimates (\ref{eq:c_type}), \eqref{eq:d_type}, and Lemma~\ref{lem:5.4}, we obtain an expansion of the evolution operator to leading order, conditioned on the event that  $\n_t=\nu$. We constrain the values of the sampling time $T$ such as to get upper bounds that are all of order $\varepsilon$.

\begin{lemma}
\label{lem:5.5}
Suppose that $g > 4 \varepsilon \|H\|$ and $T \in  \big ( - \big (\log(\varepsilon) \big ) 2    (1 - e^{-I})^{-1}, \varepsilon^{-\frac{1}{2}}  \big )$. There is a constant $C$ such that, for any $s \geq 0$ and any $u \in ( -\log( \varepsilon) \varepsilon^2 g^{-1},1)$,
\begin{equation}
\label{premb}
\| \mathcal{P} e^{\varepsilon^{-2} u \lin_\varepsilon} {\sigma}_\varepsilon^{(T+s,s)}(E_{\nu}^{(T+s,s)}) \mathcal{P} - e^{uQ} \mathcal{P}_\nu \|_{2, \text{op}} \leq  C \varepsilon
\end{equation}
and
\begin{equation} \label{0071}
\|e^{\varepsilon^{-2} u\lin_\varepsilon} {\sigma}_\varepsilon^{(T+s,s)}(E_{\nu}^{(T+s,s)})  \mathcal{P}_\perp \|_{2,\text{op}} \leq C\varepsilon,
\end{equation}
\begin{equation}
\label{007}
\|\mathcal{P}_\perp e^{\varepsilon^{-2} u \lin_\varepsilon} {\sigma}_\varepsilon^{(T+s,s)}(E_{\nu}^{(T+s,s)}) \|_{2, \text{op}} \leq  C \varepsilon . 
\end{equation}
\end{lemma}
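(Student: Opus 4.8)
The plan is to decompose the operator $\mathcal{P} e^{\varepsilon^{-2} u \lin_\varepsilon} {\sigma}_\varepsilon^{(T+s,s)}(E_{\nu}^{(T+s,s)}) \mathcal{P}$ by inserting $\mathcal{P} + \mathcal{P}_\perp = \id$ between the Lindbladian semigroup and the propagator, write
\begin{align*}
\mathcal{P} e^{\varepsilon^{-2} u \lin_\varepsilon} {\sigma}_\varepsilon^{(T+s,s)}(E_{\nu}^{(T+s,s)}) \mathcal{P} &= \big(\mathcal{P} e^{\varepsilon^{-2} u \lin_\varepsilon} \mathcal{P}\big)\big(\mathcal{P}{\sigma}_\varepsilon^{(T+s,s)}(E_{\nu}^{(T+s,s)}) \mathcal{P}\big) \\ &\quad + \big(\mathcal{P} e^{\varepsilon^{-2} u \lin_\varepsilon} \mathcal{P}_\perp\big)\big(\mathcal{P}_\perp {\sigma}_\varepsilon^{(T+s,s)}(E_{\nu}^{(T+s,s)}) \mathcal{P}\big),
\end{align*}
and then bound each piece using the estimates already available. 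For the first factor of the first term, Lemma~\ref{lem:5.1}, inequality \eqref{ineq12}, gives $\|\mathcal{P} e^{\varepsilon^{-2} u \lin_\varepsilon}\mathcal{P} - e^{uQ}\mathcal{P}\|_{2,\text{op}} = O(\varepsilon^2 + u\varepsilon) = O(\varepsilon)$ on the range $u \in (0,1)$. For the second factor, the key is to replace ${\sigma}_\varepsilon^{(T+s,s)}$ by ${\sigma}_0^{(T+s,s)}$ using Lemma~\ref{lem:5.4}: the hypothesis $T < \varepsilon^{-1/2}$ forces $\varepsilon^2 T$, $\varepsilon^4 T^2$, $\varepsilon^3 T^3$ all to be $O(\varepsilon)$ (indeed $\varepsilon^3 T^3 < \varepsilon^{3/2}$, $\varepsilon^4 T^2 < \varepsilon^3$, $\varepsilon^2 T < \varepsilon^{3/2}$), so \eqref{331} yields $\|\mathcal{P}({\sigma}_\varepsilon^{(T+s,s)}(E_\nu) - {\sigma}_0^{(T+s,s)}(E_\nu))\mathcal{P}\|_{2,\text{op}} = O(\varepsilon)$. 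Then Lemma~\ref{lemND}, inequality \eqref{eq:c_type}, replaces $\mathcal{P}{\sigma}_0^{(T+s,s)}(E_\nu)\mathcal{P}$ by $\mathcal{P}_\nu$ up to an error $C e^{-T(1-e^{-I})}$, and the lower bound $T > -2\log(\varepsilon)(1-e^{-I})^{-1}$ makes this error $O(\varepsilon^2)$. Finally $\mathcal{P}_\nu \mathcal{P} = \mathcal{P}_\nu$ and $e^{uQ}\mathcal{P}\mathcal{P}_\nu = e^{uQ}\mathcal{P}_\nu$, and since all operators in sight are contractions (using that $\lin_\varepsilon$ generates a Hilbert--Schmidt contraction semigroup and $\|{\sigma}_\varepsilon^{(s,t)}(E)\|_{2,\text{op}} \leq 1$ from Lemma~\ref{lem:CPDS}), the errors compose additively, giving \eqref{premb}.

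For the cross term $(\mathcal{P} e^{\varepsilon^{-2} u \lin_\varepsilon} \mathcal{P}_\perp)(\mathcal{P}_\perp {\sigma}_\varepsilon^{(T+s,s)}(E_\nu) \mathcal{P})$, I would bound the first factor by \eqref{ineq11}: $\|\mathcal{P} e^{\varepsilon^{-2} u \lin_\varepsilon}\mathcal{P}_\perp\|_{2,\text{op}} \leq e^{-\varepsilon^{-2} u g} + 2\varepsilon\|H\|/g$, and the condition $u > -\log(\varepsilon)\varepsilon^2 g^{-1}$ makes $e^{-\varepsilon^{-2} u g} \leq \varepsilon$, so this factor is $O(\varepsilon)$; since the second factor is a contraction, the cross term is $O(\varepsilon)$. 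This proves \eqref{premb}.

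For \eqref{0071} and \eqref{007} the strategy is parallel but simpler. For \eqref{007}, write $\mathcal{P}_\perp e^{\varepsilon^{-2} u \lin_\varepsilon} {\sigma}_\varepsilon^{(T+s,s)}(E_\nu) = (\mathcal{P}_\perp e^{\varepsilon^{-2} u \lin_\varepsilon})\,{\sigma}_\varepsilon^{(T+s,s)}(E_\nu)$ and bound the first factor directly by the second inequality in \eqref{ineq11}, using again $e^{-\varepsilon^{-2} u g} \leq \varepsilon$ on the stated range of $u$, while ${\sigma}_\varepsilon^{(T+s,s)}(E_\nu)$ is a contraction. For \eqref{0071}, insert $\mathcal{P} + \mathcal{P}_\perp$ after the semigroup: the $\mathcal{P}_\perp$ piece gives $(e^{\varepsilon^{-2} u \lin_\varepsilon}\mathcal{P}_\perp)(\mathcal{P}_\perp {\sigma}_\varepsilon^{(T+s,s)}(E_\nu)\mathcal{P}_\perp)$, whose first factor is $O(\varepsilon)$ by \eqref{ineq11}; the $\mathcal{P}$ piece gives $(e^{\varepsilon^{-2} u \lin_\varepsilon}\mathcal{P})(\mathcal{P}{\sigma}_\varepsilon^{(T+s,s)}(E_\nu)\mathcal{P}_\perp)$, and here one uses Lemma~\ref{lem:5.4}, inequality \eqref{333}, together with the bound $\mathcal{P}{\sigma}_0^{(T+s,s)}(E_\nu)\mathcal{P}_\perp = 0$ (since $\mathcal{P}{\sigma}_0^{(T+s,s)}\mathcal{P}_\perp$ vanishes: the non-demolition dynamics preserves the decomposition $\mathcal{P}\oplus\mathcal{P}_\perp$) to get $\|\mathcal{P}{\sigma}_\varepsilon^{(T+s,s)}(E_\nu)\mathcal{P}_\perp\|_{2,\text{op}} = O(\varepsilon + \varepsilon^2 T^2) = O(\varepsilon)$ on the range $T < \varepsilon^{-1/2}$. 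Contractivity of the remaining factors closes the estimate.

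The main obstacle is the careful bookkeeping of the constraints on $T$ and $u$: one must check that each of the three scales $\varepsilon^2 T$, $\varepsilon^4 T^2$, $\varepsilon^3 T^3$ from \eqref{331} and the exponential purification error $e^{-T(1-e^{-I})}$ and the semigroup decay $e^{-\varepsilon^{-2}ug}$ are all simultaneously $O(\varepsilon)$ under the stated hypotheses $T \in (-2\log\varepsilon\,(1-e^{-I})^{-1}, \varepsilon^{-1/2})$ and $u \in (-\log\varepsilon\,\varepsilon^2 g^{-1}, 1)$ — there is genuine tension between the lower bound on $T$ (needed for purification) and the upper bound (needed to control $\varepsilon\,\text{ad}_H$), and the choice of exponents is what makes the window nonempty. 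Everything else reduces to composing contraction bounds and invoking Lemmas~\ref{lem:5.1}, \ref{lem:5.4}, \ref{lemND}; no new analytic input is required.
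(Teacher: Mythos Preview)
Your proposal is correct and follows essentially the same approach as the paper: insert $\mathcal{P}+\mathcal{P}_\perp$ between the semigroup and $\mathcal{T}_\varepsilon:={\sigma}_\varepsilon^{(T+s,s)}(E_\nu)$, then invoke Lemmas~\ref{lem:5.1}, \ref{lem:5.4}, \ref{lemND} and the contractivity from Lemma~\ref{lem:CPDS} in the same combinations. The only minor variation is in \eqref{0071}: the paper simply drops the semigroup by contractivity and bounds $\|\mathcal{T}_\varepsilon\mathcal{P}_\perp\|\leq \|\mathcal{T}_0\mathcal{P}_\perp\|+\|(\mathcal{T}_\varepsilon-\mathcal{T}_0)\mathcal{P}_\perp\|$ via \eqref{eq:d_type} and \eqref{333}, whereas you insert $\mathcal{P}+\mathcal{P}_\perp$ and use $\mathcal{P}\sigma_0\mathcal{P}_\perp=0$ (equivalent, since $\sigma_0$ commutes with $\mathcal{P}$); both routes give the same $O(\varepsilon)$ bound.
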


\begin{proof}
The assumption that $g > 4 \varepsilon \|H\|$  allows us to apply Lemma~\ref{lem:5.1}. The  bounds are  obtained by standard telescoping. To ease notations, let $\mathcal{T}_{\varepsilon}:={\sigma}_\varepsilon^{(T+s,s)}(E_{\nu}^{(T+s,s)})$; (the explicit dependence on $T$, $\nu$ and $s$ is omitted). We use repeatedly that $||\mathcal{T}_\varepsilon||_{2,op} \leq 1$, see Lemma \ref{lem:CPDS}.
To prove the first bound, we use that 
\begin{align*}
 \mathcal{P} e^{\varepsilon^{-2} u \lin_\varepsilon} \mathcal{T}_{\varepsilon} \mathcal{P} - e^{uQ} \mathcal{P}_{\nu} &= \mathcal{P} e^{\varepsilon^{-2} u \lin_\varepsilon} (\mathcal{P} + \mathcal{P}_{\perp}) \mathcal{T}_{\varepsilon} \mathcal{P} - e^{uQ} \mathcal{P}_{\nu}\\
 &=  \mathcal{P} e^{\varepsilon^{-2} u \lin_\varepsilon} \mathcal{P} \mathcal{T}_{\varepsilon} \mathcal{P} - e^{uQ} \mathcal{P}_{\nu} + R_1
\end{align*}
where $R_1:= \mathcal{P} e^{\varepsilon^{-2} u \lin_\varepsilon}  \mathcal{P}_{\perp} \mathcal{T}_{\varepsilon} \mathcal{P}$ is bounded in norm by $e^{-\varepsilon^{-2} u g} + \frac{2 \varepsilon \|H \|}{g}$, as follows from \eqref{ineq11}. Since $u$ is {larger than} $- \varepsilon^2\log(\varepsilon)  g^{-1}$, $R_1$ is bounded in norm  by a term of order $\varepsilon$. Similarly, introducing 
$R_2:=\mathcal{P} e^{\varepsilon^{-2} u \lin_\varepsilon} \mathcal{P} (\mathcal{T}_{\varepsilon}-\mathcal{T}_0)  \mathcal{P}, $
we have that 
\begin{align*}
\mathcal{P}  e^{\varepsilon^{-2} u \lin_\varepsilon} \mathcal{P}  \mathcal{T}_{\varepsilon} \mathcal{P} - e^{uQ} \mathcal{P}{_\nu} &= \mathcal{P} e^{\varepsilon^{-2} u \lin_\varepsilon} \mathcal{P} \mathcal{T}_0 \mathcal{P} - e^{uQ} \mathcal{P}_{\nu} + R_2\\
&= (\mathcal{P} e^{\varepsilon^{-2} u \lin_\varepsilon} \mathcal{P}- e^{uQ}) \mathcal{P} \mathcal{T}_0 \mathcal{P} +  e^{uQ} \mathcal{P} (\mathcal{P} \mathcal{T}_0 \mathcal{P} - \mathcal{P}_{\nu}) + R_2 .
\end{align*}
Using Inequality \eqref{ineq12} in Lemma \ref{lem:5.1} and the fact that $u \in (0,1)$, the first term in the above equation is seen to be norm-bounded by a term of order $\varepsilon$.  The second term is norm-bounded by a constant of order $\varepsilon$, as is seen by using \eqref{eq:c_type} in Lemma \ref{lemND} and our assumption on the value of $T$. The Term $R_2$ is norm-bounded by a constant of order $\varepsilon$, as well, which follows from \eqref{331} in Lemma \ref{lem:5.4} and our assumption that $T$ is smaller than $\varepsilon^{-1/2}$. This establishes Eq. \eqref{premb}. The second inequality, Eq. \eqref{0071}, is proven by using the estimate
\beq
\|e^{\varepsilon^{-2} u \lin_\varepsilon} \mathcal{T}_{\varepsilon}  \mathcal{P}_\perp \|_{2, \text{op}}  \leq \|\mathcal{T}_0 \mathcal{P}_\perp \|_{2,\text{op}} + \| (\mathcal{T}_{\varepsilon} - \mathcal{T}_0) \mathcal{P}_\perp \|_{2, \text{op}}
\eeq
and by bounding the terms on the RHS by \eqref{eq:d_type}  and \eqref{333}.
The last part of the lemma, follows from Lemma~\ref{lem:5.1}: 
\begin{align*}
\|\mathcal{P}_\perp e^{\varepsilon^{-2} u \lin_\varepsilon} \mathcal{T}_{\varepsilon} \|_{2, \text{op}} \leq \|\mathcal{P}_\perp e^{\varepsilon^{-2} u \lin_\varepsilon}\|_{2, \text{op}}  &\leq e^{-\varepsilon^{-2} u g} + \varepsilon \frac{2 \|H\|}{g}.
\end{align*}  
The condition on $u$ ensures that the term $e^{-u \varepsilon^2 g}$ is of order $\varepsilon$. 
\end{proof}

\begin{lemma}
\label{lem:5.7}
Suppose that $g > 4 \varepsilon \|H \|$, $T \in \big ( - \big (\log(\varepsilon) \big ) 2    (1 - e^{-I})^{-1}, \varepsilon^{-\frac{1}{2}}  \big )$, and
let $(s_j)_{j=1}^{n+1}$ be a strictly increasing sequence of times in the interval $ (0,1)$.
 Then there exists a constant $C$ independent of $\varepsilon$ and $n$ (but it depends on $s_1$) such that, for sufficiently small $\varepsilon$ (depending on the sequence $(s_j)_{j=1}^{n+1}$), 
\begin{align}
\label{48}
\Big |\mathbb{P}_\rho^\varepsilon(\{ \mathcal{M}_{s_j \varepsilon^{-2}}  = \nu_j : j=1, \dots, n\})
 - \mathbb{P}_{\pi_{\rho}}(\{Y_{s_j} = \nu_j, j =1,\dots, n\}) \Big | & \leq C n \varepsilon, 
\end{align}
and 
\begin{align}\label{48prima} 
 \Big \| {\sigma}_\varepsilon^{(\varepsilon^{-2} s_{n + 1},\; 0)} (\cap_{j=1}^n \{ \mathcal{M}_{s_j \varepsilon^{-2}} = \nu_j\}) -  e^{ (s_{n + 1} - s_n) Q } \mathcal{P}_{\nu_n}    \dots  e^{ (s_2 - s_1) Q } \mathcal{P}_{\nu_1} e^{s_1 Q}  \Big \|_{2, \text{op}} &\leq C n \varepsilon.
\end{align}
\end{lemma}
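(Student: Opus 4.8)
The plan is to prove \eqref{48prima} and to deduce \eqref{48} from it. By \eqref{eq:2p} and the fact that $e^{t\lin_\varepsilon}$ is trace preserving, $\mathbb P_\rho^\varepsilon(E)=\tr\big(\sigma_\varepsilon^{(\varepsilon^{-2}s_{n+1},0)}(E)[\rho]\big)$ for every $E$ measurable with respect to the first $n$ sampling windows. Hence, writing $\Sigma_\varepsilon:=\sigma_\varepsilon^{(\varepsilon^{-2}s_{n+1},0)}\big(\bigcap_{j=1}^n\{\mathcal M_{s_j\varepsilon^{-2}}=\nu_j\}\big)$, inequality \eqref{48} follows from \eqref{48prima} via Cauchy--Schwarz for the Hilbert--Schmidt norm ($|\tr(X[\rho])|\le\sqrt{\dim\mathcal H}\,\|X\|_{2,\text{op}}\|\rho\|_2$, the dimensional factor absorbed into $C$), once one knows the algebraic identity $\tr\big(e^{(s_{n+1}-s_n)Q}\mathcal P_{\nu_n}\cdots e^{(s_2-s_1)Q}\mathcal P_{\nu_1}e^{s_1Q}[\rho]\big)=\mathbb P_{\pi_\rho}(\{Y_{s_j}=\nu_j,\ j=1,\dots,n\})$. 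That identity is checked by direct computation: $\mathcal P_\nu X=\langle\nu|X|\nu\rangle\,|\nu\rangle\langle\nu|$; the superoperator $e^{tQ}$ transports the diagonal of a matrix exactly the way the transition matrix $\Gamma_t=e^{tQ}$ transports a probability distribution (the off-diagonal block being irrelevant, since it is annihilated by $\mathcal P_{\nu_1}$); $\tr(e^{tQ}|\nu\rangle\langle\nu|)=1$ because the rows of $Q$ sum to zero by \eqref{Qdef}, so $\Gamma_t$ is stochastic; and the diagonal of $\rho$ is $\pi_\rho$. Chaining these facts reproduces the finite-dimensional law of $Y$.

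\textbf{Markov decomposition.} Put $b_j:=T\lfloor s_j\varepsilon^{-2}/T\rfloor$, so that $\{\mathcal M_{s_j\varepsilon^{-2}}=\nu_j\}=E_{\nu_j}^{(T+b_j,b_j)}$ depends only on probe data recorded in $(b_j,b_j+T]$. Since $b_{j+1}-b_j\ge(s_{j+1}-s_j)\varepsilon^{-2}-T$ with $T=\alpha|\log\varepsilon|$, for $\varepsilon$ small (depending on $\delta:=\min_{1\le j\le n}(s_{j+1}-s_j)>0$) these windows are pairwise disjoint and increasingly ordered, and $b_n+T<\varepsilon^{-2}s_{n+1}$. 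Iterating the factorization (Markov) property \eqref{factorization} across the alternating sequence of free intervals and sampling windows, and using Lemma~\ref{prop:unraveling} to identify the free-interval propagators with $e^{(\cdot)\lin_\varepsilon}$, we get $\Sigma_\varepsilon=B_nB_{n-1}\cdots B_1 G_0$, where $G_0:=e^{b_1\lin_\varepsilon}$ and $B_j:=e^{\ell_j\lin_\varepsilon}\,\sigma_\varepsilon^{(b_j+T,\,b_j)}(E_{\nu_j}^{(T+b_j,b_j)})$, with $\ell_j:=b_{j+1}-b_j-T$ for $j<n$ and $\ell_n:=\varepsilon^{-2}s_{n+1}-b_n-T$.

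\textbf{Per-block estimates.} Each $B_j$ is exactly of the form treated in Lemma~\ref{lem:5.5} with $s=b_j$ and $u=u_j:=\varepsilon^2\ell_j$. One checks $|u_j-(s_{j+1}-s_j)|\le 2\alpha\varepsilon^2|\log\varepsilon|$ (with $s_{n+1}-s_n$ in place of $s_{j+1}-s_j$ when $j=n$), so for small $\varepsilon$ one has $u_j\in(-\varepsilon^2\log(\varepsilon)/g,\,1)$, and the hypotheses $g>4\varepsilon\|H\|$ and $T\in(-2\log(\varepsilon)(1-e^{-I})^{-1},\varepsilon^{-1/2})$ are the standing ones. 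Writing $B_j=\mathcal P B_j\mathcal P+\mathcal P B_j\mathcal P_\perp+\mathcal P_\perp B_j$ and bounding the three summands by \eqref{premb}, \eqref{0071}, \eqref{007} respectively, then replacing $u_j$ by $s_{j+1}-s_j$ in the exponent at a cost of $|u_j-(s_{j+1}-s_j)|\,\|Q\|=O(\varepsilon^2|\log\varepsilon|)=o(\varepsilon)$, yields $\|B_j-\widetilde B_j\|_{2,\text{op}}\le C\varepsilon$ with $\widetilde B_j:=e^{(s_{j+1}-s_j)Q}\mathcal P_{\nu_j}$ ($\widetilde B_n:=e^{(s_{n+1}-s_n)Q}\mathcal P_{\nu_n}$). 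For the initial stretch $G_0=e^{\varepsilon^{-2}u_1\lin_\varepsilon}$, with $u_1:=\varepsilon^2 b_1\in(s_1-\alpha\varepsilon^2|\log\varepsilon|,\,s_1)$, Lemma~\ref{lem:5.1} is used instead: \eqref{ineq12} bounds $\|\mathcal P G_0\mathcal P-\mathcal P e^{u_1Q}\mathcal P\|_{2,\text{op}}$, while the two inequalities \eqref{ineq11} bound $\|\mathcal P_\perp G_0\|_{2,\text{op}}$ and $\|G_0\mathcal P_\perp\|_{2,\text{op}}$; after again trading $u_1$ for $s_1$, this gives $\|G_0-\widetilde G_0\|_{2,\text{op}}\le C\varepsilon$ with $\widetilde G_0:=\mathcal P e^{s_1Q}\mathcal P$. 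It is essential that $G_0$ be approximated by $\mathcal P e^{s_1Q}\mathcal P$ rather than by $e^{s_1Q}$: the long free evolution \emph{decoheres} the state over the spectrum of $\mathcal N$, so $\mathcal P_\perp G_0\mathcal P_\perp\to0$ (not $\to\mathcal P_\perp$). This is also where the constant acquires its dependence on $s_1$, the term $e^{-\varepsilon^{-2}u_1 g}$ in \eqref{ineq11} being absorbable into $C\varepsilon$ only at the price of a factor $\sup_{\varepsilon>0}\varepsilon^{-1}e^{-\varepsilon^{-2}(s_1/2)g}$, which diverges as $s_1\downarrow0$.

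\textbf{Telescoping and conclusion.} Every factor in sight is an $\|\cdot\|_{2,\text{op}}$-contraction: $e^{t\lin_\varepsilon}$ and $\sigma_\varepsilon^{(t,t')}(E)$ because $\lin_\varepsilon$ is dissipative and by Lemma~\ref{lem:CPDS}; $e^{tQ}$ because $Q$ is dissipative (as the limit of the operators $Q_\varepsilon$ of \eqref{Qs}, shown dissipative in Lemma~\ref{lem:5.1}); and $\mathcal P,\mathcal P_{\nu}$ because they are orthogonal projections for $\langle\cdot,\cdot\rangle$. A one-factor-at-a-time telescoping estimate then gives $\|B_nB_{n-1}\cdots B_1G_0-\widetilde B_n\widetilde B_{n-1}\cdots\widetilde B_1\widetilde G_0\|_{2,\text{op}}\le\sum_{j=1}^n\|B_j-\widetilde B_j\|_{2,\text{op}}+\|G_0-\widetilde G_0\|_{2,\text{op}}\le Cn\varepsilon$. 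Finally $\mathcal P_{\nu_1}\widetilde G_0=\mathcal P_{\nu_1}\mathcal P e^{s_1Q}\mathcal P=\mathcal P_{\nu_1}e^{s_1Q}$, so $\widetilde B_n\cdots\widetilde B_1\widetilde G_0$ is exactly the limit on the right-hand side of \eqref{48prima}; this proves \eqref{48prima}, and \eqref{48} follows as explained above. The analytic content is entirely supplied by Lemmas~\ref{lem:5.1}, \ref{lem:5.4} and \ref{lem:5.5}; the only genuinely delicate points are organizational: verifying Lemma~\ref{lem:5.5}'s hypotheses uniformly in $j$ over the fixed finite sequence $(s_j)$, keeping the $O(\varepsilon^2|\log\varepsilon|)$ discretization errors (from replacing $\varepsilon^{-2}s_j$ by the multiple $b_j$ of $T$) strictly below $O(\varepsilon)$, and exploiting contractivity of every factor so that the telescoped error grows only linearly — not exponentially — in $n$.
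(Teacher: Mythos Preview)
Your proof is correct and follows essentially the same approach as the paper's. Both arguments use the Markov factorization \eqref{factorization} to write $\Sigma_\varepsilon$ as an alternating product of free-evolution pieces $e^{(\cdot)\lin_\varepsilon}$ and conditioned blocks $\mathcal{T}_j=\sigma_\varepsilon^{(b_j+T,b_j)}(E_{\nu_j})$, then invoke Lemma~\ref{lem:5.5} (for the blocks) and Lemma~\ref{lem:5.1} (for the initial free stretch) to replace each factor by its $Q$-semigroup analogue at cost $O(\varepsilon)$ per factor. The only organizational difference is that the paper inserts $\mathcal P+\mathcal P_\perp$ between all factors at once and collects every $\mathcal P_\perp$-contribution into a single remainder $R$ bounded term by term via \eqref{007} and \eqref{ineq11}, whereas you package the three estimates \eqref{premb}, \eqref{0071}, \eqref{007} into a single block-level bound $\|B_j-\widetilde B_j\|_{2,\text{op}}\le C\varepsilon$ and then telescope using contractivity of all factors (including $e^{tQ}$, for which you correctly note that dissipativity of $Q$ follows from the argument given for $Q_\varepsilon$ in Lemma~\ref{lem:5.1}). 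Your formulation is arguably tidier, but the analytic content is identical.
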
 
\vspace{2mm}

\begin{proof} 
   We recall that $  \mathcal{M}_{s} $ is defined to be $\n_{m T}(T) $, whenever $ s \in [mT,  (m+1)T) $, see Eq. \eqref{proc}, for some $ m \in \mathbb{N}$. Let $r_j, j\in \{1 \dots,n\}$ be positive real numbers such that $r_j \varepsilon^{-2}$ is a multiple of $T$ and $ s_j \varepsilon^{-2}  \in [ r_j \varepsilon^{-2},    r_j \varepsilon^{-2} + T)$. Then we have that $\mathcal{M}_{s_j \varepsilon^{-2}  } =  \n_{  r_j \varepsilon^{-2}  }(T)$. For a fixed sequence $(s_j)_{j=1}^{n+1}$, we choose $\varepsilon$ small enough so that $  \big \{ [   r_j  , r_j +  \varepsilon^{2} T)     \big \}_{  j \in \{1, \cdots, n \}  }   $ are disjoint.

We define sets  $E_{j}:=\{ \n_{ \varepsilon^{-2} r_j }(T)=\nu_j \}$.
To ease notations further, we use the abbreviation 
 $$\mathcal{T}_j:={\sigma}_\varepsilon^{( \varepsilon^{-2} r_j + T, \;   \varepsilon^{-2} r_j )}(E_{j}).$$
Using \Cref{prop:unraveling} and Eqs.~(\ref{factorization}), \eqref{pgp}, we have that 
\begin{align}
 {\sigma}_\varepsilon^{(\varepsilon^{-2} s_{n + 1},\; 0)}  
 \Big (\cap_{j=1}^n \{ \n_{   \varepsilon^{-2} r_j }(T)   =    \nu_j\} \Big )  = &
 {\sigma}_\varepsilon^{(\varepsilon^{-2} s_{n + 1},\; 0)}  \Big (\cap_{j=1}^n \{ \mathcal{M}_{ s_j \varepsilon^{-2}}  =    \nu_j\} \Big ) 
\\ \notag  = & e^{  \varepsilon^{-2} \big ( s_{n + 1} - (r_{n} +  \varepsilon^{2} T)\big ) \lin_{\varepsilon}}     
\mathcal{T}_n  e^{   \varepsilon^{-2} \big ( r_{n}  - (r_{n-1} +   \varepsilon^{2} T)\big ) \lin_{\varepsilon}}     
 \\ \notag & \mathcal{T}_{n-1} \dotsc   e^{ \varepsilon^{-2} \big ( r_{2}  - (r_{1} +  \varepsilon^{-2} T)\big )  \lin_{\varepsilon}}   
\mathcal{T}_1 e^{  \varepsilon^{-2} r_1 \lin_{\varepsilon}}.
\end{align}
Inserting $\mathcal{P} + \mathcal{P}_{\perp}=\mathds{1}$ in front of each operator of the form $e^{ \varepsilon^{-2}( \cdot ) \lin_{\varepsilon}}$ and to the right of $e^{   \varepsilon^{-2}  r_1 \lin_{\varepsilon}}$, we obtain that
\begin{align}
\label{splitted}
{\sigma}_\varepsilon^{(\varepsilon^{-2} s_{n + 1}, \; 0)}   (
 \cap_{j=1}^n \{ \mathcal{M}_{s_j \varepsilon^{-2}}  = \nu_j\})= & \mathcal{P} 
  e^{  \varepsilon^{-2}  \big (  s_{n + 1} - (r_{n} +   \varepsilon^{2} T)\big ) \lin_{\varepsilon}}     
\mathcal{T}_n \mathcal{P}   e^{  \varepsilon^{-2} \big ( r_{n}  - (r_{n-1} +   \varepsilon^{2} T)\big ) \lin_{\varepsilon}} \nonumber \\    
& \mathcal{T}_{n-1}  \cdots \mathcal{P} 
 e^{  \varepsilon^{-2} \big ( r_{2}  - (r_{1} +   \varepsilon^{2} T)\big )  \lin_{\varepsilon}}   
\mathcal{T}_1  \mathcal{P}   e^{  \varepsilon^{-2}  r_1 \lin_{\varepsilon}} \mathcal{P}+ R, 
\end{align}
where, using Lemma \ref{lem:CPDS}, the remainder $R$ is bounded by 
\begin{align}\label{sip1}
\big \| \mathcal{P}_{\perp} &    e^{  \varepsilon^{-2}  \big ( s_{n + 1} - (r_{n} +   \varepsilon^{2} T)\big ) \lin_{\varepsilon}}      
\mathcal{T}_n \big \|_{2, \text{op}} +  \big \|  \mathcal{P}_{\perp}   e^{   \varepsilon^{-2} \big ( r_{n}  - (r_{n-1} +  \varepsilon^{-2}  T)\big ) \lin_{\varepsilon}}     
\mathcal{T}_{n-1} \big \|_{2, \text{op}} +    \cdots \nonumber \\  & \cdots +  \big \| \mathcal{P}_{\perp} 
 e^{  \varepsilon^{-2}  \big ( r_{2}  - (r_{1} +  \varepsilon^{2}  T)\big )  \lin_{\varepsilon}}   \mathcal{T}_1   \big \|_{2, \text{op}} +  \big \| \mathcal{P}_{\perp}  e^{  \varepsilon^{-2}  r_1 \lin_{\varepsilon}} \big \|_{2, \text{op}}  + \big \|  e^{   \varepsilon^{-2}  r_1 \lin_{\varepsilon}} \mathcal{P}_{\perp} \big \|_{2, \text{op}}.
\end{align}
 By Eqs. \eqref{007} and \eqref{ineq11}, this is bounded by 
$n  C \varepsilon$, for some constant $C>0$ independent of $\varepsilon$.\footnote{provided we chose $\varepsilon $ small enough in order to fulfill the conditions for \cref{007}  and get the desired bound from \cref{ineq11} - notice that as $\varepsilon$ tends to zero, $r_j$ tends to $s_j$.} Next we notice that, for small enough $\varepsilon $, Eq. \eqref{premb} - Lemma  \ref{lem:5.5} - implies that 
\begin{align}\label{sip2}
\Big \| \mathcal{P} e^{  \varepsilon^{-2}(r_{j+ 1} - (r_j+   \varepsilon^{2} T)) \lin_{\varepsilon}}  \mathcal{T}_j \mathcal{P}   
  -e^{  (r_{j+ 1} - (r_j +  \varepsilon^{2}  T))   Q} \mathcal{P}_{\nu}  \Big \|_{2, \text{op}}  \leq C \varepsilon ,  
 \end{align}
  for each $j \in \{1, \cdots, n \}$.  
   Eq. \eqref{48prima} then follows  form \eqref{splitted}, the bounds in \eqref{sip1}, \eqref{sip2},  Lemma \ref{lem:5.1} - Eq. \eqref{ineq12} - and the fact that $| r_j - s_j| \leq C \varepsilon^2    $, for a finite constant $ C $ (that depends on $s_1$, see  Eq. \eqref{ineq12}). Finally, inequality \eqref{48}  is a consequence of  \eqref{48prima}, because the probabilities in \eqref{48} coincide with the traces of the operators inside the norm in \eqref{48prima}. Since the Hilbert space of the system is finite dimensional,  the trace-  and the Hilbert Schmidt norms are equivalent.     
\end{proof}

\subsection{Proof of part (a) of Theorem \ref{thm:case1}}
\label{proofa}
We now set out to prove the first part of our main theorem. To prove convergence in law of $\mathcal{M}_{\varepsilon^{-2}s}$ to $Y_s$, we use standard results from the theory of  convergence of stochastic processes; see e.g. \cite{Billingsley}.  The stochastic process $\mathcal{M}_{\varepsilon^{-2}s}$  in Theorem \ref{thm:case1} takes values in $\sigma(\mathcal{N})$, which is a discrete subset of $\mathbb{R}$; see Assumption \ref{ass:LLN}. Moreover, the variable $s$ is restricted to the interval $(0,1]$.  To prove convergence of $\mathcal{M}_{\varepsilon^{-2}s}$, as $\varepsilon \rightarrow 0$, we  make use of a standard convergence criterion stated here as Theorem \ref{Billin}.  It is  an immediate corollary of Theorem 13.3 in \cite{Billingsley}.\\
In he following, convergence in distribution is indicated by the double arrow "$\Rightarrow$".

\begin{theorem}
\label{Billin}
Let $(X_n)$ be a sequence of $\mathbb{R}$-valued stochastic c\`{a}dl\`{a}g processes, each of them defined on a measurable space ($\Omega_n,\mathcal{F}_n,\mathds{P}_n$), for times $s \in [0,1]$. Let $X$ be a $\mathbb{R}$-valued stochastic c\`{a}dl\`{a}g process defined on ($\Omega,\mathcal{F},\mathds{P}$), for times $s \in [0,1]$.   Suppose that 
\begin{enumerate}[(i)]
\item  the finite-dimensional  distributions of $X_n$  converge to the finite-dimensional distributions of  $X$, indicated as $X_n \overset{fd}{\Rightarrow_n} X$;
\item the limiting process $X$ satisfies $X(1)- X(1-\delta) \Rightarrow_{\delta \rightarrow 0} 0$;
\item for any $\eta, \delta>0$, there is $\theta \in (0,1)$ and $n_0$ such that, $\forall  n \geq n_0$,
\begin{equation}
\label{jumps}
\qquad  P_n (\{ \underset{\underset{t_2-t_1 \leq \theta}{t_1 \leq t \leq t_2}}{\sup} \{ \vert X_n(t)- X_n(t_1) \vert \wedge   \vert X_n(t_2)- X_n(t) \vert \} \geq \delta \} ) \leq \eta
\end{equation} 
where the supremum ranges over all triples of times $t,t_1,t_2$ in [0,1] satisfying the constraints $t_1<t<t_2$ and $t_2 - t_1 < \theta$.
\end{enumerate}
Then $X_n \Rightarrow X$.
\end{theorem}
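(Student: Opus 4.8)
The plan is to obtain Theorem~\ref{Billin} as a direct transcription of Billingsley's convergence criterion (\cite{Billingsley}, Theorem~13.3), whose proof rests on the two standard pillars of weak convergence in the Skorokhod space $D[0,1]$: \emph{tightness} of the sequence of laws (which gives relative compactness via Prohorov's theorem, $D[0,1]$ being Polish) and \emph{convergence of finite-dimensional distributions} (which identifies the limit). First I would recall the tightness criterion (\cite{Billingsley}, Theorem~13.2): the laws of $(X_n)$ are tight in $D[0,1]$ provided (a) $\lim_{a\to\infty}\limsup_n P_n(\sup_{s\in[0,1]}|X_n(s)|\ge a)=0$, and (b) for every $\delta>0$, $\lim_{\theta\to 0}\limsup_n P_n(w''(X_n,\theta)\ge\delta)=0$, where $w''(x,\theta):=\sup\min\{|x(t)-x(t_1)|,|x(t_2)-x(t)|\}$, the supremum running over $t_1\le t\le t_2$ in $[0,1]$ with $t_2-t_1\le\theta$.

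The next step is to match our hypotheses to (a) and (b). The event in \eqref{jumps} is precisely $\{w''(X_n,\theta)\ge\delta\}$, so hypothesis (iii) is exactly condition (b). Condition (a) is not among our hypotheses, so I would derive it from (i): finite-dimensional convergence gives $X_n(s)\Rightarrow X(s)$ for each fixed $s$, hence tightness of each marginal; combined with the modulus bound (b) --- which forces a path, outside an event of small probability, to stay within $\delta$ of its values on a fixed finite $\theta$-net apart from at most $\lceil 1/\theta\rceil$ jumps --- this yields the uniform control of $\sup_{s}|X_n(s)|$ required in (a). (In the application to Theorem~\ref{thm:case1} this step is vacuous, since there $X_n=\mathcal{M}_{\varepsilon^{-2}s}$ takes values in the fixed finite set $\sigma(\mathcal{N})$.) With (a) and (b) the sequence of laws is tight, hence relatively compact.

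Then I would identify the limit in the usual way. Let $\mu$ be any subsequential weak limit of the laws of $X_n$. For any finite time-tuple $0\le t_1<\dots<t_k\le 1$ at which the coordinate projections $D[0,1]\to\mathbb{R}$ are $\mu$-almost surely continuous, the continuous mapping theorem together with hypothesis (i) shows that the image of $\mu$ under $x\mapsto(x(t_1),\dots,x(t_k))$ is the law of $(X(t_1),\dots,X(t_k))$. The collection of such admissible tuples is dense in $[0,1]$ and --- and here is the role of hypothesis (ii) --- may be taken to contain $t=1$: combining (ii) with the finite-dimensional convergence at and near $t=1$ provided by (i) rules out mass escaping through a jump located exactly at the right endpoint, so that $\pi_1$ is $\mu$-a.s.\ continuous and the behaviour at $t=1$ under $\mu$ agrees with that of $X$. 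Since finite-dimensional distributions over a set of times dense in $[0,1]$ and containing $1$ form a determining class for Borel probability measures on $D[0,1]$ (\cite{Billingsley}, Theorem~13.1), we conclude $\mu=\mathrm{law}(X)$. As every subsequential limit equals $\mathrm{law}(X)$ and the sequence is relatively compact, $X_n\Rightarrow X$.

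The hard part is not conceptual but the bookkeeping at the two places where Billingsley's hypotheses are not literally our (i)--(iii): first, recovering the sup-norm tightness condition (a) from (i) and (iii) in the general unbounded case (trivial when the processes are uniformly bounded, as in our application); and second, the right-endpoint subtlety --- the modulus $w''$ and the $J_1$ topology do not ``see'' a single jump located exactly at $t=1$, so without hypothesis (ii) the finite-dimensional distributions on a dense set would fail to pin down the limit and $(X_n)$ could converge to a process agreeing with $X$ everywhere except at time $1$. Apart from these points the argument is a verbatim application of \cite{Billingsley}, Theorem~13.3, which is why the statement is quoted there as an immediate corollary.
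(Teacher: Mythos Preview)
Your proposal is correct and aligns with the paper's treatment: the paper does not give its own proof of this statement but simply records it as ``an immediate corollary of Theorem 13.3 in \cite{Billingsley}.'' Your sketch is an accurate unpacking of why Billingsley's hypotheses are met, including the two genuine subtleties (deriving the sup-norm bound from (i)+(iii) and the right-endpoint role of (ii)); in the paper's application both issues are indeed vacuous since the processes take values in the finite set $\sigma(\mathcal{N})$.
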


\begin{proof}[Proof of part (a) of Theorem \ref{thm:case1}]
We must check the conditions of Theorem \ref{Billin} for the sequence of processes 
$X_{n}:=\mathcal{M}_{\varepsilon_n^{-2}s}$, for some sequence $(\varepsilon_{n})_{n=0}^{\infty}$ converging to $0$, with the role of $X$ being played by the Markov chain $Y_s$; see Eq. \eqref{Qdef}. Note that (ii) is automatically satisfied by the stochastic process $Y_s$, because $Y_s$ is a continuous Markov chain. Also, (i) follows directly from Lemma \ref{lem:5.7}.    Hence, to prove Theorem \ref{thm:case1} (a), it suffices to verify condition (iii) in Theorem 
\ref{Billin}. 

The stochastic process $\mathcal{M}_{\varepsilon^{-2}s}$ has piece-wise continuous paths and $\vert \mathcal{M}_{\varepsilon^{-2}s}- \mathcal{M}_{\varepsilon^{-2}u} \vert$ takes values in the finite set $\{ \vert \nu_j- \nu_k \vert \mid \nu_{j},\nu_{k } \in \sigma(\mathcal{N})\}$. In this special case, note that spectrum of $\mathcal{N}$ is non-degenerate by assumption, the probability in \eqref{jumps} is independent of $\delta$ for $\delta$ small enough. Hence \eqref{jumps} is proven once we have shown that there exist $\varepsilon_0$ such that for all $\varepsilon \leq \varepsilon_0$ the probability of paths that have two or more jumps in any interval of size less than $\theta$ goes to zero as $\theta$ goes to zero. We are going to show that this probability goes to zero linearly in $\theta$.

Let $F  \in {\mathcal{F}}$ be the event that  $\mathcal{M}_{\varepsilon^{-2} s}$ changes its value  at least  twice in the interval $(s_i,s_f)$ (of length $\theta$). For fixed $\varepsilon, T>0$, let $({u}_\ell)_{\ell=1}^N:=(\ell T)_{\ell=J_0+1}^{N+J_0}$,  be the integer multiples of $T$ in $\varepsilon^{-2}(s_i, s_f)$.  Let us decompose  $F$ according to the events that $\mathcal M_{{\varepsilon}^{-2}s}$ has jumps at times ${s}_j< {s}_k$, with $  \varepsilon^{-2} s_j = {u}_j  $ and  $  \varepsilon^{-2} s_k = {u}_k  $: set $   P_{j, k}(\nu_1, \nu_2,\nu_3,\nu_4)   $ the probability that  $  \mathcal{M}_{r} =  \nu_{1} $  (for $  r \in
[{u}_{k}, {u}_{k+1})  $),  $  \mathcal{M}_{r} =  \nu_{2} $ (for $  r \in
[{u}_{k-1}, {u}_{k})  $),  $  \mathcal{M}_{r} =  \nu_{3} $ (for $  r \in
[{u}_{j}, {u}_{j+1})  $) and  $  \mathcal{M}_{r} =  \nu_{4} $ (for $  r \in
[{u}_{j-1}, {u}_{j})  $), then we have that        
\beq
\label{52}
\mathbb{P}_\rho^{\varepsilon} (F)  \leq \sum_{j<k} \quad \sum_{\underset{\nu_3 \neq \nu_4}{\nu_1 \neq \nu_2} } P_{j, k}(\nu_1, \nu_2,\nu_3,\nu_4).
\eeq 
We define  $F_{\nu}^{({r})}:=\{ \mathcal{M}_{ r} = \nu\}.$ It follows that (see Eq.  \eqref{factorization})
\begin{align*}
P_{j, k}(\nu_1, \nu_2,\nu_3,\nu_4)= \tr \Big(  & e^{   (\varepsilon^{-2} s_\mathrm{f} -  u_{k+1})\lin_\varepsilon}   {\sigma}^{( u_{k+1}, 
  u_k)}_\varepsilon ( F_{\nu_1}^{(  {u}_k)})  {\sigma}^{({{u}}_{k},{{u}}_{k-1})}_\varepsilon(F_{\ \nu_2}^{({{u}}_{k-1})}) \\
& e^{   ({{u}}_{k-1}-{{u}}_{j+1} )\lin_\varepsilon}    {\sigma}^{({{u}}_{j+1},{{u}}_{j})}_\varepsilon(F_{\nu_3}^{({{u}}_j)})    {\sigma}^{({{u}}_{j},{{u}}_{j-1})}_\varepsilon(F_{\nu_4}^{({{u}}_{j-1})})  e^{ {{u}}_{j-1} \lin_\varepsilon} \rho  \Big).
\end{align*}
 Due to the symmetry of $P_{j, k}(\nu_1, \nu_2,\nu_3,\nu_4)$, it is sufficient to bound in norm a term of the form 
$e^{  {r} \lin_\varepsilon}  {\sigma}^{({{u}}_{k+1},{{u}}_k)}_\varepsilon ( F_{\nu_1}^{({{u}}_k)})  {\sigma}^{({{u}}_{k},{{u}}_{k-1})}_\varepsilon(F_{\ \nu_2}^{({{u}}_{k-1})}) e^{   \varepsilon^{-2} {r}' \lin_\varepsilon} $, for positive numbers  $ r, r' $, where  
$ {r} =    \varepsilon^{-2} s_\mathrm{f} -{{u}}_{k+1} $ and $ {r}' = \frac{1}{2} (  {{u}}_{k - 1 } - {{u}}_{j+1}) $ (changing $k$ by $j$ in the previous term would require that we take ${r} = \frac{1}{2} ( {u}_{k - 1 } - {{u}}_{j+1} )  $ and ${r}' =    {{u}}_{j-1}$). 
We insert the decomposition of the identity $\mathcal{P} + \mathcal{P}_\perp=1$ in between each of the indicated factors  and bound each resulting term separately. We obtain eight terms, denoted $(\beta_1, \beta_2, \beta_3)$ with $\beta_i\in \{\varnothing, \perp\}$, given by
\beq
(\beta_1, \beta_2, \beta_3):= \quad   e^{  r \lin_\varepsilon} \mathcal{P}_{\beta_1} {\sigma}^{({{u}}_{k+1},{{u}}_k)}_\varepsilon ( F_{\nu_1}^{({{u}}_k)}) \mathcal{P}_{\beta_2} {\sigma}^{({{u}}_{k},{{u}}_{k-1})}_\varepsilon(F_{\ \nu_2}^{({{u}}_{k-1})})  \mathcal{P}_{\beta_3} e^{  r' \lin_\varepsilon}, 
\eeq
\noindent  where $   \mathcal{P}_{\varnothing} \equiv \mathcal{P}  $   . We choose $\alpha_0$ such that the inequalities $\alpha_0 > 2(1-e^{-I})^{-1}$ and $\alpha_0 > g^{-1}$ are satisfied.
 By Lemmas~\ref{lem:5.1}, \ref{lem:5.4} and \ref{lemND} and our choice of $T$, there is a constant $C >0$ such that the following bounds hold (for every $a \geq \frac{1}{2} T $):
\begin{align}
\label{ia}
&\|  e^{  a \lin_\varepsilon} \mathcal{P}_{\perp}  \|_{2,\text{op} } \leq C \varepsilon,\\
\label{ib}
& \|  \mathcal{P}_{\perp} {\sigma}^{({{u}}_{k+1},{{u}}_k)}_\varepsilon ( F_{\nu_1}^{({{u}}_k)}) \|_{2,\text{op} } \leq C \varepsilon , \hspace{.5cm}  
   \|   {\sigma}^{({{u}}_{k+1},{{u}}_k)}_\varepsilon ( F_{\nu_1}^{({{u}}_k)})  \mathcal{P}_{\perp}  \|_{2,\text{op} } \leq C \varepsilon,     \\
\label{ic}
&\|\mathcal{P} {\sigma}^{({{u}}_{k+1},{{u}}_k)}_0 ( F_{\nu_1}^{({{u}}_k)}) \mathcal{P} {\sigma}^{({{u}}_{k},{{u}}_{k-1})}_0(F_{\ \nu_2}^{({{u}}_{k-1})})  \mathcal{P}\|_{2,\text{op}} \leq C \varepsilon^{2}  \quad (\nu_2 \neq \nu_1).
\end{align}
 It might happen that both $r$ and $r'$ are smaller that $T/2$, in the case that $  {u}_{k-1} = {u}_{j+1}  $, and it is also possible that  
$  {u}_{k} = {u}_{j+1}   $, but the number of these occurrences is relatively small and therefore these terms are easy to handle. Below we analyze the other terms.     
Using \eqref{ic} and \eqref{331}, the term $(\varnothing, \varnothing, \varnothing)$ is bounded in norm by $C \varepsilon^2 T$.  Except when one of $r$ or $r'$ is smaller than $T$ and $(\beta_1, \beta_2, \beta_3)\in \{(\perp, \varnothing, \varnothing), (\varnothing, \varnothing,\perp)\}$, the norm of $(\beta_1, \beta_2, \beta_3)$ is bounded by $C{\varepsilon}^2$, in all remaining cases. 

In the special case that $r$ or ${r}'$ is smaller than $T$,  and $(\beta_1, \beta_2, \beta_3)\in \{(\perp, \varnothing, \varnothing),\\ (\varnothing, \varnothing,\perp)\}$, the norms are bounded by $C\varepsilon$.  Such terms appear $\propto N$ times in \eqref{52}. Using  \eqref{52} and the above bounds, we deduce that 
\begin{equation}
\mathbb{P}_\rho^{\varepsilon} (F)  \leq C  N^2  \varepsilon^4 T^2 + CN  \varepsilon^2.
\end{equation}
Since $N^2 C T^2 \varepsilon^4 \sim C \theta^2$ and $N C  \varepsilon^2 \sim C \theta/T$ , we arrive at the bound
 $$
 \mathbb{P}_\rho^{(\varepsilon)}(F) \leq C \theta^2 + C \theta/ T .
 $$
Choosing $\theta$ small enough we conclude that \eqref{jumps} holds. 
 \end{proof}
\subsection{Proof of part (b) of Theorem \ref{thm:case1}}
\label{proofb}
\begin{proof}
The condition $s > 2 \varepsilon^2 T$ is necessary for the validity of the theorem; the projection $ P_{\hat{\mathcal{N}_0}}$ does not estimate the initial state $\rho_0$. In fact $\hat{\mathcal{N}}_0$ uses measurement  results in the interval $[0,T)$ and $ P_{\hat{\mathcal{N}_0}}$ is a good estimate of $\rho_T$. We employ this observation in the proof by shifting the argument by $T$.

For $s > 2 \varepsilon^2 T$, we estimate 
$$
{\mathbb{E}^{\varepsilon}_\rho} \Big [ \|\rho_{\varepsilon^{-2}s}-P_{\hat{\mathcal {N}}_{\varepsilon^{-2}s}}   \|_2  \Big ] \leq 
  {\mathbb{E}^{\varepsilon}_\rho}   \Big [ \|\rho_{\varepsilon^{-2}s}-P_{\hat{\mathcal {N}}_{\varepsilon^{-2}s-T}}\|_2 \Big ] + {\mathbb{E}^{\varepsilon}_\rho} \Big [\| P_{\hat{\mathcal {N}}_{\varepsilon^{-2}s}} -  P_{\hat{\mathcal {N}}_{\varepsilon^{-2}s-T}}\|_2 \Big ].
$$
The second term on the right side is bounded by twice the probability of making a jump in the interval of length $T$, in the proof of part (a) we estimated that this probability is of order $\varepsilon^2 T$.
  Notice that $\varepsilon^2 T$ is dominated by $\varepsilon$ for the choice $T = \alpha \log \varepsilon$.

To bound the first term on the right side, we note that
\begin{align*}
\|\rho_{\varepsilon^{-2}s}-P_{\hat{\mathcal {N}}_{\varepsilon^{-2}s-T}}\|_2^2 &= \tr(\rho^2_{\varepsilon^{-2}s}) + 1 - 2 \tr(\rho_{\varepsilon^{-2}s} P_{\hat{\mathcal {N}}_{\varepsilon^{-2}s-T}})\\
						&\leq  2 \tr(\rho_{\varepsilon^{-2}s} (1 - P_{\hat{\mathcal {N}}_{\varepsilon^{-2}s-T}})),
\end{align*}
where we have used that $\tr(\rho_{\varepsilon^{-2}s}) = 1$ and $\tr(\rho^2_{\varepsilon^{-2}s}) \leq 1$. We use definition \eqref{E:rhos} of the posterior state $\rho_{s}$ to show that
\begin{align*}
\int  \tr\Big (\rho_{\varepsilon^{-2}s} (1 - P_{\hat{\mathcal {N}}_{\varepsilon^{-2}s-T}})\Big ) d \mathbb{P}^{\varepsilon}_{\rho}  &  = \int \tr \Big ( {\sigma}_{\varepsilon}^{(\varepsilon^{-2} s,0)}[\rho]  (1 - P_{\hat{\mathcal {N}}_{\varepsilon^{-2}s-T}})\Big )  d \mathbb{P} \\ & = \sum_{\nu}  \int_{ (E_{\nu}^{(\varepsilon^{-2} s, \varepsilon^{-2} s - T)}) }  \tr \Big ( {\sigma}_{\varepsilon}^{(\varepsilon^{-2} s,0)}[\rho]  (1 - P_{\nu})\Big )  d \mathbb{P} .
\end{align*} 
Using  the factorization property \eqref{factorization}, and the Cauchy-Schwarz inequality, we find that 
\begin{align}
\label{eq:hawai}
{\mathbb{E}^{\varepsilon}_\rho}  \Big [ \|\rho_{\varepsilon^{-2}s}-P_{\hat{\mathcal {N}}_{\varepsilon^{-2}s-T}}\|_2 \Big ]^2  \leq  2  \sum_{\nu \in \sigma(\mathcal{N})} & \tr({\sigma}_\varepsilon^{(\varepsilon^{-2} s, \varepsilon^{-2} s - T)}(E_{\nu}^{(\varepsilon^{-2} s, \varepsilon^{-2} s - T)}) \notag \\  &  \times  { \Big (  \mathbb{E}\Big [{\sigma}_{\varepsilon}^{(\varepsilon^{-2} s- T,0)}[\rho]\Big ]  \Big )  }  (1-P_\nu)).
\end{align}
By Lemma~\ref{lem:5.1}, and Lemma \ref{prop:unraveling}, $\mathcal{P}_\perp \rho_i$ is of order $\varepsilon$  (here we use that $s > 2 \varepsilon^2 T$), hence 
$ \Big{(}  {\sigma}_\varepsilon^{(\varepsilon^{-2} s, \varepsilon^{-2} s - T)}(E_{\nu}^{(\varepsilon^{-2} s, \varepsilon^{-2} s - T)}) 
 \mathcal{P}_\perp   \Big{)}    \Big{(}  \mathcal{P}_\perp    { \Big (  \mathbb{E}\Big [{\sigma}_{\varepsilon}^{(\varepsilon^{-2} s- T,0)}[\rho]\Big ]  \Big )  }     \Big{)}   $ is of order $\varepsilon^2$; 
 (here we use  Lemmas~\ref{lemND} and \ref{lem:5.4}, with $\varepsilon$ sufficiently small). Using Lemmas~\ref{lemND} and \ref{lem:5.4} again, we prove that the right side of Eq.~(\ref{eq:hawai}) is of order 
 $\varepsilon^2 T  $, provided
 $\varepsilon$ is small enough, depending on $\alpha$. (Notice that $\tr ( \mathcal{P}_\perp \tilde \rho = 0 ) $, for every $\tilde \rho$).
 The term $\varepsilon^2 T$ is dominated by $\varepsilon$ for the choice $T = \alpha \log \varepsilon$.
 \end{proof}
\newpage

\appendix
\section{Underlying measure spaces}
\label{App:0}
The probability measure that we define in Section \ref{ET} has the following form 
 \begin{equation}\label{ESta}
\mathbb{P} \equiv P \otimes \mu^{ \otimes  \mathbb{N}},
\end{equation}
where $P:=  \upsilon ^{ \otimes \mathbb{N}} $ is the standard Poisson measure defined on the set 
$ \Omega:=   [0, \infty)^\mathbb{N}   $ (equipped with the sigma algebra generated by cylinder sets), with 
$ \upsilon  $ the measure on $   [0, \infty)  $ with density $e^{-r }$, $r \in [0, \infty)$.

 The statistics of measurement results $\xi_{N_{u}+ 1}, \dots \xi_{N_s}$ in a time interval $(u,s)$ is independent of the measurement results outside of this interval.  We next describe a mathematical construction that captures this fact:
we define the set $  \Xi_0  $ to be the disjoint union $   \Xi_0: =  \bigcup_{n \in \mathbb{N}}\Big( [0, \infty)^{n} \times \mathcal{X}^{n-1} \Big )    $, where $\mathcal{X}^0 = \emptyset $. We endow $ \Xi_0 $ with the sigma algebra $ {\mathcal{F}}_0$ whose elements have the form  $ A = \bigcup_{n \in \mathbb{N}} A_{n} $, with $ A_n  $ in the - product - sigma algebra associated to $     [0, \infty)^{n} \times \mathcal{X}^{n-1}   $. For arbitrary $ s > u \geq 0 $, we define a map $ \theta^{(s, u)} :  \Xi  \to \Xi_0  $ by setting 
\begin{equation}\label{great fun}
\theta^{(s, u)}(\underline{\tau}, \underline{\xi}) : = ( t_{N_{u} + 1} - u, \tau_{N_u + 2}, \cdots, \tau_{N_s}, s - t_{N_s};  \xi_{N_u+1}, \cdots, \xi_{N_s}     )
\end{equation}
if $N_s >N_u$, and
$$
\theta^{(s, u)}(\underline{\tau}, \underline{\xi}) : = (s-u)  \quad \mbox{if} \quad N_s = N_u.
$$
This definition only makes sense if $N_s - N_u < \infty$; but luckily the set

\begin{equation}\label{MA}
\mathcal{A} := \Big \{ (\underline{\tau}, \underline{\xi}) \in \Xi  \Big |    N_s(\underline{\tau}) < \infty, \text{for every} \,  s \geq 0    \Big \} 
\end{equation}
has full measure.

 Basic properties of the Poisson process imply that 
$ \theta^{(s, u)} $ and $ \theta^{(s- u, 0)}  $ have the same distribution and that, for every m-tuple of times $  0 \leq  s_1 < s_2 < \cdots < s_m   $, the  the random elements 
$ \theta^{(s_m, s_{m-1})}, \theta^{(s_{m-1}, s_{m-2})}, \cdots   ,\theta^{(s_{2}, s_{1})} $ are independent.  This, in turn, implies that, for arbitrary complex-valued, integrable functions $f_1,  \cdots, f_m   : \Xi_0 \to \mathbb{C} :$
\begin{align}\label{Factor}
\mathbb{E}\big ( f_m \circ \theta^{(s_m, s_{m-1})}  \cdots f_1 \circ \theta^{(s_2, s_{1})}  \big ) = \mathbb{E}\big ( f_m \circ \theta^{(s_m, s_{m-1})}  \big )  \cdots
\mathbb{E}\big ( f_1 \circ \theta^{(s_2, s_{1})}  \big )
\end{align}  
and this also holds if $ f_1, \cdots, f_m $ are matrix-valued.

We define  $ \Theta :   \Xi_0 \to \mathcal{B}(\mathcal{B}(\mathcal{H})) $ as follows: 
\begin{align} \label{Thetat}
\Theta (s_1, s_2, \cdots, s_{n+1};  \xi_1, \cdots, \xi_{n}) =   e^{-i \varepsilon s_{n+1} \adj_H}  \Phi_{\xi_{n}}   \dots   e^{-i \varepsilon s_2 \adj_H} \Phi_{\xi_{1}} e^{-i \varepsilon s_1 \adj_H},   
\end{align} 
and $ \Theta(s) =    e^{-i \varepsilon s \adj_H } $.
The time evolution of the system in Eq.   \eqref{prop} has the following expression:   
\begin{align} \label{misto}
{\sigma}_\varepsilon^{(s,u)} := \Theta \circ \theta^{(s, u)}.
\end{align}
 Note that the fact that 
 $\theta^{(s,u)} $   and  $  \theta^{(s-u,0)} $  have the same distribution implies that ${\sigma}_\varepsilon^{(s,u)}$ and ${\sigma}_\varepsilon^{(s-u,0)}$ have the same distribution. 
For arbitrary sets $ A, B \in {\mathcal{F}}_0 $ and $0 \leq u < s$,   Eqs.~\eqref{Factor} and (\ref{pgp}) imply that
\begin{equation}
\label{factorizationB}
 {\sigma}^{(s,0)}_{\varepsilon}\Big ( \big (  \theta^{(s,u)} \big )^{-1}(A) \cap 
   \big ( \theta^{(u,0)} \big )^{-1}(B) \Big ) = {\sigma}_\varepsilon^{(s, u)}\Big ( \big (  \theta^{(s,u)} \big )^{-1}(A)  \Big ) {\sigma}_\varepsilon^{(u, 0)} \Big (  \big(  \theta^{(u,0)} \big )^{-1}(B)   \Big ).
 \end{equation}

\section{Estimates Required in the Proof of Lemma \ref{lem:5.4}}

\label{App:A}

In this section we use two additional super-operator norms  
\begin{equation}\label{nomames}
\| \mathcal{X} \|_{1, \text{op}} := \underset{\|  X  \|_{1} =1 }{\sup} \| \mathcal{X} X \|_{1}, \quad  \| \mathcal{X} \|_{\infty, \text{op}} := \underset{\|  X  \| =1 }{\sup} \| \mathcal{X} X \|,
\end{equation}
where $\|X\|_1$ is the trace norm on $B(\mathcal{H})$.

\begin{lemma}
\label{lem:CPDS}
Let $\mathcal{K}$ be a completely positive map on $B(\mathcal{H})$ for which 
$$\mathcal{K} \id \leq \id, \quad \mbox{and} \quad  \mathcal{K}^* \id \leq \id,$$ 
then $$\|\mathcal{K}\|_{1,op} \leq 1,\,  \|\mathcal{K}\|_{2,op} \leq 1, \, \mbox{and} \,\, \|\mathcal{K}\|_{\infty, op} \leq 1.$$
\end{lemma}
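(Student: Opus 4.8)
The plan is to derive all three estimates from a single Kadison--Schwarz-type inequality for sub-unital completely positive maps, together with the trace duality between the Schatten $1$- and $\infty$-norms. First I would record the elementary consequences of the hypotheses: since $\mathcal{K}$ is positive it is $*$-preserving, i.e.\ $\mathcal{K}(X^*)=(\mathcal{K}X)^*$; and if $\mathcal{K}^*$ denotes the adjoint of $\mathcal{K}$ with respect to the scalar product $\langle A,B\rangle=\tr(A^*B)$, then $\mathcal{K}^*$ is again completely positive (as one sees at the level of Kraus operators), while the two hypotheses $\mathcal{K}\id\le\id$ and $\mathcal{K}^*\id\le\id$ say precisely that both $\mathcal{K}$ and $\mathcal{K}^*$ are sub-unital. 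Thus $\mathcal{K}$ and $\mathcal{K}^*$ play symmetric roles.

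The key step is the inequality $(\mathcal{K}X)^*(\mathcal{K}X)\le\mathcal{K}(X^*X)$, valid for every $X\in B(\mathcal{H})$. To prove it I would apply $\mathcal{K}$, acting entrywise, to the positive $2\times 2$ block matrix with entries $\id$, $X$, $X^*$, $X^*X$ — this matrix is the outer product of the column $\binom{\id}{X^*}$ with its adjoint, hence positive — and use $2$-positivity of $\mathcal{K}$ (a consequence of complete positivity) to conclude that the block matrix with entries $\mathcal{K}\id$, $\mathcal{K}X$, $(\mathcal{K}X)^*$, $\mathcal{K}(X^*X)$ is positive. Since $\mathcal{K}\id\le\id$, replacing the $(1,1)$-entry by $\id$ preserves positivity, and a Schur-complement argument then yields the claimed inequality.

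From here the estimates follow quickly. For the operator norm, $(\mathcal{K}X)^*(\mathcal{K}X)\le\mathcal{K}(X^*X)\le\mathcal{K}(\|X\|^2\id)=\|X\|^2\,\mathcal{K}\id\le\|X\|^2\id$, so $\|\mathcal{K}X\|\le\|X\|$, i.e.\ $\|\mathcal{K}\|_{\infty,\text{op}}\le 1$. For the Hilbert--Schmidt norm, cyclicity of the trace together with $X^*X\ge 0$ and $\mathcal{K}^*\id\le\id$ gives
\[
\|\mathcal{K}X\|_2^2=\tr\bigl((\mathcal{K}X)^*(\mathcal{K}X)\bigr)\le\tr\bigl(\mathcal{K}(X^*X)\bigr)=\tr\bigl((\mathcal{K}^*\id)\,X^*X\bigr)\le\tr(X^*X)=\|X\|_2^2 ,
\]
so $\|\mathcal{K}\|_{2,\text{op}}\le 1$. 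For the trace norm I would use the duality $\|\mathcal{K}\|_{1,\text{op}}=\|\mathcal{K}^*\|_{\infty,\text{op}}$, which holds because the Schatten $1$- and $\infty$-norms are dual under the trace pairing, and then apply the operator-norm bound just established to $\mathcal{K}^*$ — legitimate since $\mathcal{K}^*$ is completely positive and sub-unital, as $(\mathcal{K}^*)^*\id=\mathcal{K}\id\le\id$. (Alternatively, the Hilbert--Schmidt bound can be recovered by complex interpolation, $S^2=[S^1,S^\infty]_{1/2}$, from the $1$- and $\infty$-estimates.)

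Since everything reduces to the Kadison--Schwarz inequality and standard facts about $2\times 2$ block matrices and Schatten duality, there is no serious obstacle; the only points requiring a little care are that complete positivity is used only through $2$-positivity, and that the \emph{sub}-unital (rather than unital) hypothesis is correctly exploited in the Schur-complement step.
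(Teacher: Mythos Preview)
Your proof is correct and follows essentially the same route as the paper for the $\infty$- and $1$-norms: Kadison--Schwarz for $\|\cdot\|_{\infty,\mathrm{op}}$ and the duality $\|\mathcal{K}\|_{1,\mathrm{op}}=\|\mathcal{K}^*\|_{\infty,\mathrm{op}}$ for $\|\cdot\|_{1,\mathrm{op}}$. The one mild difference is the Hilbert--Schmidt bound: the paper writes $\mathcal{K}X=\sum_\alpha \Gamma_\alpha X\Gamma_\alpha^*$ in Kraus form and applies Cauchy--Schwarz directly to $\tr(X^*\mathcal{K}Y)$, whereas you reuse the Kadison--Schwarz inequality together with $\tr(\mathcal{K}(X^*X))=\tr((\mathcal{K}^*\id)X^*X)\le\tr(X^*X)$. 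Your version is a bit more streamlined in that all three estimates flow from the single operator inequality $(\mathcal{K}X)^*(\mathcal{K}X)\le\mathcal{K}(X^*X)$; the paper's Kraus-based argument has the small advantage of making the role of both hypotheses $\mathcal{K}\id\le\id$ and $\mathcal{K}^*\id\le\id$ symmetric and explicit in the $2$-norm step.
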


\begin{proof} 
The claim about the infinity norm is standard \cite[Cor 3.2.6]{BR}, however it is typically stated in the most important case $\mathcal{K} \id = \id$ so we give a full proof here for readers convenience.
By Kadison's inequality 
$$
\|\mathcal{K}(A)\|^2 = \|\mathcal{K}(A)^* \mathcal{K}(A)\| \leq \|\mathcal{K}(A^* A)\| \| \mathcal{K}(\id)\|,
$$
holds for any operator $A$. Since $\mathcal{K}( A^* A) \leq \|A^*A\| \mathcal{K}(\id)$, we conclude that
$$
\|\mathcal{K}(A)\|^2 \leq \| \mathcal{K}(\id) \|^2 \|A^*A \| \leq \|A \|^2.
$$
The statement about the trace norm follows by a duality argument, $\|\mathcal{K}\|_{1,op} = \|\mathcal{K}^* \|_{\infty,op}$.

We now prove the remaining claim. 
By the general theory of completely positive maps \cite{Holevo}, there exists {a} finite index set $\mathcal{I}$ and  operators $\Gamma_\alpha, \alpha \in \mathcal{I}$ such that 
$$ 
\mathcal{K} X = \sum_{\alpha \in \mathcal{I}} \Gamma_\alpha X \Gamma_\alpha^*.
$$  
By Cauchy-Schwartz inequality we then have  
$$ 
[\tr(X^* \mathcal{K} Y)]^2 \leq \left(\sum_\alpha \tr(\Gamma_\alpha^* X X^* \Gamma_\alpha)\right) \left(\sum_\alpha \tr(\Gamma_\alpha Y^* Y \Gamma^*_\alpha)\right).
$$  
Using the cyclicity of trace and assumptions  $\mathcal{K} \id \leq \id$ and $\mathcal{K}^* \id \leq \id$, we conclude $\tr(X^* \mathcal{K} Y)^2 \leq \|X\|_2^2 \|Y\|_2^2$, which implies $\|\mathcal{K}\|_{2,op} \leq 1$ (we use that for   self-adjoint positive bounded operators   $B $ and  $ D$, $B \leq D$ implies that $\| B\| \leq \| D \|   $; moreover, if $D$ is trace class $|  \tr (BD) | \leq    \| B D  \|_1
\leq \| B \| \| D \|_1 $).
\end{proof} 
The previous lemma applies in particular to operators $e^{i \varepsilon \ad_H}$ and $\mathcal{P}$. We use this in the proof of the following lemma.

\begin{lemma}
\label{lem:delta}
We have that 
$$
\| e^{i \varepsilon \adj_H}  \|_{2, \text{op}}  = 1, \qquad  \| e^{i \varepsilon \adj_H} -1 \|_{2, \text{op}} \leq 2 \varepsilon \|H\|, \quad \| \mathcal{P} (e^{i \varepsilon \adj_H} -1)  \mathcal{P}\|_{2, \text{op}} \leq 4 \varepsilon^2 \|H \|^2.
$$
\end{lemma}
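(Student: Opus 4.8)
The plan is to exploit the fact that $e^{i\varepsilon\adj_H}$ acts on $B(\mathcal{H})$ as conjugation by the unitary $U_\varepsilon := e^{i\varepsilon H}$, i.e. $e^{i\varepsilon\adj_H}(X) = U_\varepsilon X U_\varepsilon^{*}$, and to handle the three assertions in order of increasing subtlety. For the first, I use that conjugation by a unitary preserves the Hilbert--Schmidt norm, $\|U_\varepsilon X U_\varepsilon^{*}\|_2 = \|X\|_2$ for every $X$; this gives $\|e^{i\varepsilon\adj_H}\|_{2,\text{op}} = 1$ directly (equivalently, $X \mapsto U_\varepsilon X U_\varepsilon^{*}$ is completely positive and unital, so Lemma~\ref{lem:CPDS} yields $\le 1$, while testing on $X = \id$ yields equality).

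For the second assertion I would use the Duhamel identity
\[
e^{i\varepsilon\adj_H} - 1 \;=\; \int_0^1 \frac{d}{dt}\,e^{it\varepsilon\adj_H}\,dt \;=\; i\varepsilon\,\adj_H \int_0^1 e^{it\varepsilon\adj_H}\,dt,
\]
take $\|\cdot\|_{2,\text{op}}$, invoke the first assertion (applied with $t\varepsilon$ in place of $\varepsilon$) to bound $\|e^{it\varepsilon\adj_H}\|_{2,\text{op}} = 1$ uniformly in $t\in[0,1]$, and use the crude bound $\|\adj_H\|_{2,\text{op}} \le 2\|H\|$ coming from the triangle inequality $\|HX - XH\|_2 \le 2\|H\|\,\|X\|_2$. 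This gives $\|e^{i\varepsilon\adj_H} - 1\|_{2,\text{op}} \le 2\varepsilon\|H\|$.

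The third assertion is the only one requiring an idea: the linear term must cancel after sandwiching by $\mathcal{P}$. The key identity is $\mathcal{P}\,\adj_H\,\mathcal{P} = 0$. To prove it, write $\mathcal{P}\adj_H\mathcal{P} = \sum_{\nu,\nu'}\mathcal{P}_\nu\adj_H\mathcal{P}_{\nu'}$ with $\mathcal{P}_\nu\adj_H\mathcal{P}_{\nu'}(X) = P_\nu\big[H,\,P_{\nu'}XP_{\nu'}\big]P_\nu$; for $\nu \neq \nu'$ this vanishes because $P_\nu P_{\nu'} = 0$, and for $\nu = \nu'$ one uses that the spectrum of $\mathcal{N}$ is non-degenerate (Assumption~\ref{ass:LLN}), so each $P_\nu$ is rank one and $P_\nu H P_\nu = \bra{\nu}H\ket{\nu}\,P_\nu$, $P_\nu X P_\nu = \bra{\nu}X\ket{\nu}\,P_\nu$ are commuting scalar multiples of $P_\nu$, whence their commutator is zero. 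Consequently
\[
\mathcal{P}\big(e^{i\varepsilon\adj_H} - 1\big)\mathcal{P} \;=\; \mathcal{P}\big(e^{i\varepsilon\adj_H} - 1 - i\varepsilon\,\adj_H\big)\mathcal{P},
\]
and the second-order Taylor remainder in integral form, $e^{i\varepsilon\adj_H} - 1 - i\varepsilon\adj_H = -\varepsilon^2\,\adj_H^2\int_0^1(1-t)\,e^{it\varepsilon\adj_H}\,dt$, combined with $\|\mathcal{P}\|_{2,\text{op}} \le 1$, $\|e^{it\varepsilon\adj_H}\|_{2,\text{op}} = 1$ and $\|\adj_H\|_{2,\text{op}} \le 2\|H\|$, yields $\|\mathcal{P}(e^{i\varepsilon\adj_H} - 1)\mathcal{P}\|_{2,\text{op}} \le \tfrac{1}{2}\,\varepsilon^2\,(2\|H\|)^2 = 2\varepsilon^2\|H\|^2 \le 4\varepsilon^2\|H\|^2$.

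The main obstacle is recognizing the cancellation $\mathcal{P}\adj_H\mathcal{P} = 0$; once that is in hand, the quadratic gain is automatic from Taylor's theorem and the rest is routine bookkeeping with the Duhamel formula and unitary invariance of the Hilbert--Schmidt norm. The one point deserving care is that non-degeneracy of $\sigma(\mathcal{N})$ is essential here: it is precisely what forces $P_\nu H P_\nu$ and $P_\nu X P_\nu$ to be scalar multiples of $P_\nu$, and without it the cancellation would only be partial.
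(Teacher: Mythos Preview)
Your proof is correct and follows essentially the same route as the paper's: both arguments reduce the second and third bounds to $\|\adj_H\|_{2,\text{op}}\le 2\|H\|$ together with the key cancellation $\mathcal{P}\,\adj_H\,\mathcal{P}=0$, and then extract the first- and second-order bounds on $e^{i\varepsilon\adj_H}-1$ and $e^{i\varepsilon\adj_H}-1-i\varepsilon\adj_H$ respectively. The only stylistic difference is that the paper invokes the self-adjointness of $\adj_H$ on the Hilbert--Schmidt space and applies functional calculus to the scalar inequalities $|e^{ix}-1|\le |x|$ and $|e^{ix}-1-ix|\le x^2$, whereas you obtain the same estimates via the Duhamel/Taylor integral representations; your detailed verification of $\mathcal{P}\,\adj_H\,\mathcal{P}=0$ (and the observation that it genuinely uses the non-degeneracy of $\sigma(\mathcal{N})$) is a point the paper simply asserts.
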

\begin{proof} 
The operator $\adj_H$ is a bounded self-adjoint operator on the Hilbert space $\mathcal{J}_2(\mathcal{H})$, then the first inequality follows directly from functional calculus.  The proof of the second and third inequalities uses two elementary trigonometric inequalities,
$$
|e^{ix} - 1| \leq |x|, \quad |e^{ix} - 1 -ix| \leq x^2,
$$
valid for all $x \in \mathbb{R}$.
. The functional calculus then gives,
$$
\| e^{i \varepsilon \adj_H} -1 \|_{2,op} \leq   \varepsilon \|\adj_H \|_{2,op},
$$
and the first inequality is  established in view of $\|\adj_H\|_{2,op} \leq 2 \|H \|$. The second inequality follows in a similar manner using an inequality
$$
\| e^{i \varepsilon \adj_H} -1 - i \varepsilon \adj_H \|_{2,op} \leq  \varepsilon^2 \|\adj_H \|^2_{2,op},
$$
in conjunction with $\mathcal{P} \adj_H \mathcal{P} =0$ and $\|\mathcal{P}\|_{op,2} =1$.
\end{proof}

The following lemma lists bounds on the terms appearing in the expansion of ${\sigma}_\varepsilon^{(s,u)} - {\sigma}_0^{(s,u)}$ up to the third order in $\varepsilon$. These terms are used and defined in the proof of Lemma~\ref{lem:5.4}.

\begin{lemma} 
\label{lem:a2}
Fix $ \underline{\tau} \in \Omega$ and let $E$ be a measurable set
 on $\mathcal{X}^{\mathbb{N}}$, then 
\begin{align}
\Big{\|} \int_E \mathcal{P} A^{(1)}_j(\underline{ \tau }, \underline{\xi})\mathcal{P} \d \mu^{\otimes \mathbb{N}}(\underline{\xi}) \Big{\|}_{2,op} &\leq 4 d^2 \varepsilon^2 (  \tau_{j+1})^2  \|H \|^2\label{eq:a2_1}\\
 \Big{\|}  \int_E \mathcal{P} A^{(2)}_{j,k}(\underline{ \tau }, \underline{\xi})\mathcal{P} \d \mu^{\otimes \mathbb{N}}(\underline{\xi})  \Big{\|}_{2,op} &\leq 16 d^3 \varepsilon^4 \|H \|^4 (  \tau_{j+1} )^2 (  \tau_{k+1} )^2 \nonumber\\
& + 4 d^4 \varepsilon^2 \|H\|^2   \tau_{j+1}   \tau_{k+1}  (1-g)^{j-k-1}  \label{eq:a2_2}\\
\Big{\|}   \int_E  R^{(3)}_{j,k,l}(\underline{  \tau  }, \underline{\xi}) \d \mu^{\otimes \mathbb{N}}(\underline{\xi})  \Big{\|} _{2,op} &\leq 8 \varepsilon^3  \tau_{j+1}    \tau_{k+1}   \tau_{l+1}   \|H \|^3 \label{eq:a2_3}\\
 \Big{\|}  \int_E \mathcal{P}_\perp A^{(1)}_j(\underline{ \tau  }, \underline{\xi}) \d \mu^{\otimes \mathbb{N}}(\underline{\xi})  \Big{\|} _{2,op} &\leq 2 d^4 \varepsilon
     \tau_{j+1} \|H \| (1-g)^{N_s -j-1} \label{eq:a2_4}\\
\Big{\|} \int_E  R^{(2)}_{j,k}(\underline{  \tau  }, \underline{\xi}) \d \mu^{\otimes \mathbb{N}}(\underline{\xi})  \Big{\|} _{2,op} &\leq 4 \varepsilon^2   \tau_{j+1}   \tau_{k+1}  \|H \|^2 \label{eq:a2_5},
\end{align}
where $d$ denotes the dimension of $\mathcal{H}$.
\end{lemma}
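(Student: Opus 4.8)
The plan is a systematic bookkeeping argument resting on three observations. First, for fixed $\underline\tau$ each of the super-operators $A^{(1)}_j$, $A^{(2)}_{j,k}$, $R^{(2)}_{j,k}$, $R^{(3)}_{j,k,l}$ introduced in the proof of \Cref{lem:5.4} factors into a product of ``blocks'' of the form ${\sigma}_0^{(a,b)}(\underline\xi)=\Phi_{\xi_a}\cdots\Phi_{\xi_{b+1}}$ — together with, for the remainder terms, a single rightmost block ${\sigma}_\varepsilon^{(t_l,u)}(\underline\xi)$ carrying the perturbed dynamics — separated by the deterministic factors $(B_m-\id)$ and, in the $A$-terms, by the projection $\mathcal{P}$; consecutive blocks depend on pairwise disjoint sets of outcomes $\xi_i$. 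Second, \Cref{lem:delta} gives $\|B_m-\id\|_{2,\text{op}}\le 2\varepsilon\|H\|\tau_{m+1}$ and the improved bound $\|\mathcal{P}(B_m-\id)\mathcal{P}\|_{2,\text{op}}\le 4\varepsilon^2\|H\|^2\tau_{m+1}^2$. Third, $\Phi:=\int\Phi_\xi\,\d\mu(\xi)$ commutes with $\mathcal{P}$ (hence so does every ${\sigma}_0^{(a,b)}$), is unital and co-unital so that $\|\Phi^m\|_{2,\text{op}}\le1$ by \Cref{lem:CPDS}, and is diagonal in the orthonormal basis $\{\ket\nu\bra{\nu'}\}$ of $B(\mathcal{H})$ with eigenvalue $1$ on $\mathcal{P}$ and eigenvalues of modulus $\le1-g$ on $\mathcal{P}_\perp$; hence $\|\Phi^m\mathcal{P}_\perp\|_{2,\text{op}}\le(1-g)^m$ by \eqref{Eq:g_def} and Assumption~\ref{ass:LLN}.

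The one genuine difficulty is that $E$ need not be a product set, so $\int_E\,\d\mu^{\otimes\mathbb{N}}$ does not factor over the blocks. I would circumvent this by the positive-integrand device: to bound $\|\int_E\mathcal{O}(\underline\xi)\,\d\mu^{\otimes\mathbb{N}}\|_{2,\text{op}}$ one writes it as the supremum of $|\int_E\langle Y,\mathcal{O}(\underline\xi)X\rangle\,\d\mu|$ over $\|X\|_2=\|Y\|_2=1$, inserts resolutions of the identity $\id=\sum_\alpha\ket{F_\alpha}\bra{F_\alpha}$ of $B(\mathcal{H})$ between the blocks, and thereby dominates $|\langle Y,\mathcal{O}(\underline\xi)X\rangle|$ by a finite sum of products of non-negative functions, each factor of a product depending on only one block of variables. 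The constraint $\underline\xi\in E$ may then be dropped, the integral factors over the blocks, and each block factor is controlled by the normalisation \eqref{normalis}, by $\|\Phi^m\|_{2,\text{op}}\le1$ and $\|\Phi^m\mathcal{P}_\perp\|_{2,\text{op}}\le(1-g)^m$, and — for the rightmost $\varepsilon$-block of the remainder terms — by the integrated contraction bound $\|{\sigma}_\varepsilon^{(a,b)}(E')\|_{2,\text{op}}\le1$ coming from \Cref{lem:CPDS}. The powers of $d=\dim\mathcal{H}$ appearing in \eqref{eq:a2_1}, \eqref{eq:a2_2}, \eqref{eq:a2_4} are precisely the cardinalities of the index sets used in these insertions, and the explicit numerical constants are crude over-estimates.

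It then remains to organise the estimate according to which block carries the geometric decay and which $(B_m-\id)$ is $\mathcal{P}$-sandwiched. For \eqref{eq:a2_1}, both $\mathcal{P}$'s are slid inward (using $[\mathcal{P},\Phi_\xi]=0$) so that $(B_j-\id)$ becomes $\mathcal{P}$-sandwiched, yielding the factor $\varepsilon^2(\tau_{j+1})^2$ while the two flanking blocks integrate to $1$. For \eqref{eq:a2_2} the two factors $(B_j-\id)$, $(B_k-\id)$ cannot both be $\mathcal{P}$-sandwiched because of the block ${\sigma}_0^{(t_j,t_k)}$ in between; inserting $\mathcal{P}+\mathcal{P}_\perp$ there splits the term into the $\mathcal{P}$-channel, where both $(B-\id)$'s become $\mathcal{P}$-sandwiched and one collects $\varepsilon^4(\tau_{j+1})^2(\tau_{k+1})^2$, and the $\mathcal{P}_\perp$-channel, where each $(B-\id)$ contributes only $\varepsilon\tau$ but ${\sigma}_0^{(t_j,t_k)}$ integrates to $\Phi^{j-k}\mathcal{P}_\perp$, i.e. the factor $(1-g)^{j-k-1}$. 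The remainder terms \eqref{eq:a2_3}, \eqref{eq:a2_5} carry no $\mathcal{P}$, so each $(B_m-\id)$ simply contributes $2\varepsilon\|H\|\tau_{m+1}$ and the ${\sigma}$-blocks only unit contraction factors, giving $\varepsilon^3$ resp. $\varepsilon^2$ with no extra decay. Finally \eqref{eq:a2_4} is the $\mathcal{P}_\perp A^{(1)}_j$ contribution: the left $\mathcal{P}_\perp$ is pushed through ${\sigma}_0^{(s,t_j)}$, contributing $(1-g)^{N_s-j-1}$ after integration, while the single $(B_j-\id)$ contributes $\varepsilon\tau_{j+1}$.

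The main obstacle I anticipate is implementing the factorisation cleanly for the remainder terms, where the rightmost block ${\sigma}_\varepsilon^{(t_l,u)}$ is \emph{not} contractive pointwise in $\underline\xi$: one must arrange the index insertions so that, once the set $E$ has been dropped, every surviving $\xi$-integral is of the harmless form $\int|V_{\xi_i}(\nu)|^2\,\d\mu(\xi_i)=1$ (or a geometric factor $\le1-g$) and never of the uncontrolled form $\int|V_{\xi_i}(\nu)|^4\,\d\mu(\xi_i)$, and then invoke the integrated contraction bound for the $\varepsilon$-block. Compared with that, the $\mathcal{P}$-bookkeeping for the $A$-terms is routine.
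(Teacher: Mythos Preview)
Your plan for the $A$--terms \eqref{eq:a2_1}, \eqref{eq:a2_2}, \eqref{eq:a2_4} is essentially the paper's: expand in the basis $\{\ket\nu\bra{\nu'}\}$, use that each $\Phi_\xi$ is diagonal there, take absolute values, drop $E$, and collect the factors $\int|V_\xi(\nu)\bar V_\xi(\nu')|\,\d\mu\le 1$ (or $\le 1-g$ on $\mathcal{P}_\perp$). The $\mathcal{P}/\mathcal{P}_\perp$ bookkeeping you describe matches the paper exactly.

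The gap is in the remainder terms \eqref{eq:a2_3}, \eqref{eq:a2_5}. You correctly flag the obstacle---${\sigma}_\varepsilon^{(t_l,u)}(\underline\xi)$ is not diagonal in $\{\ket\nu\bra{\nu'}\}$ and not pointwise contractive---but your proposed resolution does not close it. After inserting a resolution of the identity just to the left of the $\varepsilon$-block and taking absolute values, the factor you must control is
\[
\int \bigl|\langle F_\alpha,\,{\sigma}_\varepsilon^{(t_l,u)}(\underline\xi)\,X\rangle\bigr|\,\d\mu^{\otimes\mathbb N}(\underline\xi),
\]
and the integrated contraction bound $\|{\sigma}_\varepsilon^{(t_l,u)}(\mathcal{X}^{\mathbb N})\|_{2,\mathrm{op}}\le 1$ from \Cref{lem:CPDS} says nothing about this: it bounds $\bigl|\langle F_\alpha,\int{\sigma}_\varepsilon\,X\,\d\mu\rangle\bigr|$, not the integral of the modulus. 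Your scalar matrix-element insertion cannot produce a positive integrand that, upon extending $E$ to the whole space, \emph{reassembles} into the integrated super-operator.

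The paper's device is different and is the missing idea. One first rewrites each $(B_m-\id)Z=L_mZ+U_mZL_m^{*}$ with $L_m=e^{-i\varepsilon\tau_{m+1}H}-1$, $U_m=e^{-i\varepsilon\tau_{m+1}H}$, so that each of the resulting $2^3$ (resp.\ $2^2$) summands has the operator form $\tr\!\bigl(X^{*}\,M\,{\sigma}_\varepsilon(Y)\,N\bigr)$ with $M,N$ products of Kraus operators of the $\sigma_0$-blocks and of $L_m,U_m$. Cauchy--Schwarz at the \emph{operator} level, using the Kraus decomposition of the CP map ${\sigma}_\varepsilon$, then gives
\[
\bigl|\tr(X^{*}M{\sigma}_\varepsilon(Y)N)\bigr|
\le\bigl(\tr(XX^{*}\,M{\sigma}_\varepsilon(\id)M^{*})\bigr)^{1/2}
\bigl(\tr(N^{*}{\sigma}_\varepsilon(Y^{*}Y)N)\bigr)^{1/2},
\]
and $M(\cdot)M^{*}$, $N^{*}(\cdot)N$ are again compositions of the CP maps $\sigma_0$, $B_m$, and $S_m(\cdot)=L_m(\cdot)L_m^{*}$. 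Both traces are therefore \emph{genuinely positive} as functions of $\underline\xi$, so one may integrate over $E$, drop $E$, and the $\underline\xi$-integral over the $\varepsilon$-block reassembles into the doubly-stochastic CP map ${\sigma}_\varepsilon^{(t_l,u)}(\mathcal{X}^{\mathbb N})$, to which \Cref{lem:CPDS} applies in the $\|\cdot\|_{1,\mathrm{op}}$ and $\|\cdot\|_{\infty,\mathrm{op}}$ norms. This is what yields \eqref{eq:a2_3} and \eqref{eq:a2_5} without picking up dimension factors.
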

\begin{proof} The boundary cases for which any index takes value $N_s$ or $N_u$ need to be treated separately.  For a generic index $j$, the perturbation 
$B_j = e^{-i \varepsilon \tau_{j+1} \ad_H }  $ $ 
  = e^{-i \varepsilon ( t_{j+1} - t_j) \ad_H }    $, but for $j=N_s$ the time $t_{j+1}$ is replaced by $s$ and for $j=N_u$ the time $t_j$ is replaced by $u$. For the sake of clarity we exclude these cases from the proof (they can be analyzed with the method me present below). 

 We start with a proof Eq.~(\ref{eq:a2_3}). We claim that for any matrices $X, Y$ and fixed $j,k,l$ there holds the inequality 
$$
\Big{|}  \tr\Big (X^* \int_E  R^{(3)}_{j,k,l}(\underline{ {\tau} }, \underline{\xi}) \d \mu^{\otimes \mathbb{N}}(\underline{\xi}) Y \Big ) \Big{|}   \leq 8 \varepsilon^3
  \tau_{j+1} \tau_{k+1}  \tau_{l+1}  \|X \|_2 \|Y\|_2 \|H \|^3,
$$
which is equivalent to the bound Eq.~(\ref{eq:a2_3}). To prove the inequality we use the identity $(B_j - \mathsf{1})X = (e^{-i \varepsilon \tau_{j+1}  H} -1) X + e^{-i \varepsilon \tau_{j+1} H} X (e^{i \varepsilon \tau_{j+1}  H} -1)$ and expand the expression under the trace to corresponding eight terms. Denoting  $S_j(\cdot) = (e^{ -i \varepsilon \tau_{j+1}  H} -1) \cdot (e^{i \varepsilon \tau_{j+1}  H} -1)$ we obtain by the Cauchy-Schwartz inequality
\begin{align*}
\Big | \int_E \tr(X^* R^{(3)}_{j,k,l} Y) \Big | \leq 
 \sum_{Z_j,Z_k,Z_l} & \left[ \int_{E} \tr( X X^* {\sigma}_0^{(s, t_j)} Z_j {\sigma}_0^{(t_j, t_k)} Z_k {\sigma}_0^{(t_k, t_l)} Z_l {\sigma}_\varepsilon^{(t_l, u)} \id) \right. \\
& \times \left. \int_{E}  \tr( {\sigma}_0^{(s, t_j)} Z^\#_j {\sigma}_0^{(t_j, t_k)} Z^\#_k {\sigma}_0^{(t_k, t_l)} Z^\#_l {\sigma}_\varepsilon^{(t_l, u)} (Y^* Y)) \right]^{\frac{1}{2}},
\end{align*}
where the sum consists of eight terms in which pairs $(Z_\alpha, Z_\alpha^\#)$ take either value $( B_\alpha, S_\alpha)$ or $(S_\alpha,1)$, $\alpha=j,\,k,\,l$. 

Since all traces are now positive, we are in position to extend the integrals to all $\mathcal{X}^{\mathbb{N}} $ to obtain 
\begin{align*}
\Big | \int_E \tr(X^* R^{(3)}_{j,k,l} Y) \Big | \leq 
 &\sum_{Z_j,Z_l,Z_l}  \left[ \tr( X X^* {\sigma}_0^{(s, t_j)}
 (   \mathcal{X}^{\mathbb{N}}   ) Z_j {\sigma}_0^{(t_j, t_k)} Z_k {\sigma}_0^{(t_k, t_l)}(  \mathcal{X}^{\mathbb{N}}  ) Z_l {\sigma}_\varepsilon^{(t_l, u)} (  \mathcal{X}^{\mathbb{N}}   ) \id) \right. \\
& \times \left. \tr( {\sigma}_0^{(s, t_j)}(   \mathcal{X}^{\mathbb{N}}  ) Z^\#_j {\sigma}_0^{(t_j, t_k)}
(  \mathcal{X}^{\mathbb{N}}   ) Z^\#_k {\sigma}_0^{(t_k, t_l)}(   \mathcal{X}^{\mathbb{N}}  ) Z^\#_l {\sigma}_\varepsilon^{(t_l, u)}(   \mathcal{X}^{\mathbb{N}}  ) (Y^* Y)) \right]^{\frac{1}{2}},
\end{align*}
where we denoted 
$$
{\sigma}_\varepsilon^{(s,t)}(   \mathcal{X}^{\mathbb{N}}  ) = \int_{       \mathcal{X}^{\mathbb{N}}     } {\sigma}_\varepsilon^{(s,t)}(\underline{\tau}, \underline{\xi}) \d \mu^{\otimes \mathbb{N}}(\underline{\xi}).
$$
By its definition, ${\sigma}_\varepsilon^{(s,t)}(  \mathcal{X}^{\mathbb{N}}   )$ is doubly-stochastic completely positive map, and hence $\|{\sigma}_\varepsilon^{(s,t)}(  \mathcal{X}^{\mathbb{N}}   )\|_{1,op} = \|{\sigma}_\varepsilon^{(s,t)}(   \mathcal{X}^{\mathbb{N}}   )\|_{\infty,op} =1$ (see Lemma \ref{lem:CPDS}). Also $\|B_j\|_{1, op} = \|B_j\|_{\infty, op} =1$, and the operator bound $\|e^{i \varepsilon H} -\id\| \leq \varepsilon \|H\|$ implies $\|S_j\|_{1, op} \leq \varepsilon^2
 (  \tau_{j+1} )^2 \|H\|^2$ as well as $\|S_j\|_{\infty, op} \leq \varepsilon^2 (\tau_{j+1} )^2 \|H\|^2$. Noting that in each eight terms in the sum there appears exactly one $S_\alpha$ for each $\alpha = j,k,l$ we conclude, from the inequalities stated above, that each of the eight terms  is bounded by $\varepsilon^3   \tau_{j+1}    \tau_{k+1}   \tau_{l+1}  \|X\|_2 \|Y\|_2 \|H\|^3$ (we use that for any trace class operator $A$ and every bounded operator $B$: $ | \tr(AB) | \leq 
\| AB \|_1  \leq \| B \| \|  A\|_1 $). 

The proof of Eq.~(\ref{eq:a2_5}) is analogous. The only difference is that there are only two perturbations. This eliminates the product $ \tau_{l+1}  \varepsilon \|H\|$ and a combinatorial factor of $2$.

We now proceed with the proof of Eqs.~(\ref{eq:a2_1}, \ref{eq:a2_4}). We express the super-operator $A_j^{(1)}$ in the basis consisting of operators $\ket{\nu}\bra{\nu'}$ formed from the eigenvectors of $\mathcal{N}$. For eigenvalues $\nu_1, \dots, \nu_4$ we have
\begin{multline*}
\tr(  \ket{\nu_1}\bra{\nu_2} A_j^{(1)} \ket{\nu_3} \bra{\nu_4}) = V_{\xi_{N_s}}(\nu_1) \overline{V}_{\xi_{N_s}}(\nu_2) \dots V_{\xi_{j+1}}(\nu_1) \overline{V}_{\xi_{j+1}}(\nu_2) \\ \times \tr(  \ket{\nu_1}\bra{\nu_2} (B_j - \id)\ket{\nu_3} \bra{\nu_4}) V_{\xi_{j}}(\nu_4) \overline{V}_{\xi_{j}}(\nu_3) \dots V_{\xi_{N_u+1}}(\nu_4) \overline{V}_{\xi_{N_u+1}}(\nu_3).
\end{multline*}
Hence for any matrices $X,\, Y$ we have
\begin{multline}
\label{eq:214}
\Big |\tr\Big ( X^* \int_E A_j^{(1)} Y  \Big )\Big |  \leq \sum_{\nu_1, \dots, \nu_4} |  \overline{ X_{\nu_2 \nu_1}} Y_{\nu_3 \nu_4} \tr(\ket{\nu_1}\bra{\nu_2}(B_j - \id) \ket{\nu_3} \bra{\nu_4})| \\
\times 
\int_{\Xi} |V_{\xi_{N_s}}(\nu_1)  \overline{V}_{\xi_{N_s}}(\nu_2) \dots  V_{\xi_{j+1}}(\nu_1)  \overline{V}_{\xi_{j+1}}(\nu_2) V_{\xi_{j}}(\nu_4)  \dots  \overline{V}_{\xi_{N_u+1}}(\nu_3)| \d \mu^{\otimes \mathbb{N}}(\underline \xi),
\end{multline}
where $X_{\nu \nu'} = \bra{\nu'} X \ket{\nu}$. Note that we extended the integration region after estimating the integrand by its absolute value.

For $X, Y$ diagonal in the eigenbasis basis of $\mathcal{N}$ we have $\nu_1 = \nu_2$ and $\nu_3 = \nu_4$ in Eq.~(\ref{eq:214}). The second line in that equation can be then estimated by $1$ and hence we get
$$
\tr\Big ( X^* \int_E\mathcal{P} A_j^{(1)} \mathcal{P} Y \Big ) \leq \sum_{\nu, \nu' \in \sigma(\mathcal{N})} |X_{\nu \nu}| |Y_{\nu' \nu'}| |\tr(\ket{\nu'} \bra{\nu'}  \mathcal{P}  (B_j - \id)  \mathcal{P}  \ket{\nu} \bra{\nu})|.
$$
 Using Cauchy-Schwartz inequality we then conclude
$$
\tr \Big ( X^* \int_E\mathcal{P} A_j^{(1)} \mathcal{P} Y  \Big ) \leq \|X\|_2 \|Y\|_2 \dim(\mathcal{H})^2 \|\mathcal{P}(B_j - \id)\mathcal{P}\|_{2,op}
$$
and Eq~(\ref{eq:a2_1}) is established using the bound in Lemma~\ref{lem:delta}.

On the other hand, if $X$ is off-diagonal, i.e. $\bra{\nu} X \ket{\nu} = 0$ for all $\nu \in \sigma(\mathcal{N})$, then $\nu_1 \neq \nu_2$ and the integrand on the second line of Eq.~(\ref{eq:214}) is bounded by $(1-g)^{N_s - j -1}$ in view of Eq.~(\ref{Eq:g_def}). This gives
$$
\tr \Big ( X^* \int_E\mathcal{P}_\perp A_j^{(1)} Y \Big ) \leq (1-g)^{N_s -j -1} \sum_{\nu_1 \dots \nu_4 } |\overline{ X_{\nu_2 \nu_1}}|  |Y_{\nu_3 \nu_4}| |\tr(\ket{\nu_1} \bra{\nu_2} (B_j - \id) \ket{\nu_3} \bra{\nu_4})|.
$$
 Using Cauchy-Schwartz inequality we then conclude
$$
\tr \Big ( X^* \int_E\mathcal{P} A_j^{(1)} \mathcal{P} Y \Big ) \leq (1-g)^{N_s -j -1} \|X\|_2 \|Y\|_2 \dim(\mathcal{H})^4 \|(B_j - \id)\|_{2,op}
$$
and Eq~(\ref{eq:a2_4}) is established using the bound in Lemma~\ref{lem:delta}.

It remains to prove Eq.~(\ref{eq:a2_2}). We insert a decomposition of identity, $\id = \mathcal{P} + \mathcal{P}_\perp$, to express the LHS of the equation as 
\begin{align*}
  \mathcal{P}A_{j,k}^{(2)} \mathcal{P} = &\mathcal{P} {\sigma}_{0}^{(s,t_{j})}  (B_{j} - \mathds{1}) \mathcal{P} {\sigma}_{0}^{(t_{j},t_{k})}   \mathcal{P} (B_{k} - \mathds{1}) {\sigma}_{0}^{(t_{k},u)} \mathcal{P} \\
  &+\mathcal{P} {\sigma}_{0}^{(s,t_{j})}  (B_{j} - \mathds{1}) \mathcal{P_\perp} {\sigma}_{0}^{(t_{j},t_{k})}   \mathcal{P_\perp} (B_{k} - \mathds{1}) {\sigma}_{0}^{(t_{k},u)} \mathcal{P}.
\end{align*}
Proceeding as in the proof of Eq.~(\ref{eq:a2_1}) we get an estimate for the integral of the first line,
\begin{multline*}
\Big \|\int_{E} \mathcal{P} {\sigma}_{0}^{(s,t_{j})}  (B_{j} - \mathds{1}) \mathcal{P} {\sigma}_{0}^{(t_{j},t_{k})}   \mathcal{P} (B_{k} - \mathds{1}) {\sigma}_{0}^{(t_{k},u)} \mathcal{P} \Big \|_{2,op} \\ \leq 16 \varepsilon^4 \|H \|^4 \dim(\mathcal{H})^3 (\tau_{j+1})^2 (\tau_{k+1} )^2.
\end{multline*}
Proceeding as in the proof of Eq.~(\ref{eq:a2_4}) we get an estimate for the integral of the second line,
\begin{multline*}
\Big \|\int_{E} \mathcal{P} {\sigma}_{0}^{(s,t_{j})}  (B_{j} - \mathds{1}) \mathcal{P}_\perp {\sigma}_{0}^{(t_{j},t_{k})}   \mathcal{P}_\perp (B_{k} - \mathds{1}) {\sigma}_{0}^{(t_{k},u)} \mathcal{P} \Big \|_{2,op} \\ \leq 4 \varepsilon^2 \|H \|^2 \dim(\mathcal{H})^4 (1-g)^{j - k -1}   \tau_{j+1}  \tau_{k+1} .
\end{multline*}
This finishes the proof.
\end{proof}

\section{Bounds on expectation values over the Poisson process} \label{B1} 
The proof of Lemma~\ref{lem:5.4} required estimating expectation values of certain random variables over the Poisson process. Here we justify these estimates. Stopping time theory for sums of i.i.d. random variables is a convenient level of generality to present the bounds.

Let $(X_i)_{i=1}^{\infty}$ be positive i.i.d. random variables and $T$ a stopping time with respect to them. We assume that for any $k$ positive $\mathbb{E}[X_1^k] \leq 1$ and that all moments of $T$  are finite. By Walds equation (we use $\boldsymbol{E}$ to denote the corresponding expected value),
\begin{equation}
\label{eq:a1.p1}
\boldsymbol{E}[\sum_{i=1}^T X_i] = \boldsymbol{E}[X_1] \boldsymbol{E}[T] \leq \boldsymbol{E}[T].
\end{equation}
Now consider a bounded positive function $f : \mathbb{N} \to \mathbb{R}$. By optional stopping time theorem (for $T < 2 $ the sum below is set to be zero),
$$
\boldsymbol{E}[\sum_{i=1}^{T- 1} \sum_{j=i+1}^T f(j-i) X_i (X_j - \boldsymbol{E}[X_j])] = 0,
$$
and using the assumption that the mean value of $X_j$'s is bounded by $1$, we get
\begin{align*}
\boldsymbol{E}[\sum_{i=1}^{T-1} \sum_{j=i+1}^T f(j-i) X_i X_j] &\leq \boldsymbol{E}[\sum_{i=1}^{T-1} \sum_{j=i+1}^T f(j-i) X_i]\\
										&\leq \left(\boldsymbol{E}[\sum_{i=1}^{T-1} X_i^2]\boldsymbol{E}[\sum_{i=1}^{T-1} \left(\sum_{j=i}^T f(j-i)\right)^2 ]	 \right)^\frac{1}{2}.			
\end{align*}
Using Eq.~(\ref{eq:a1.p1}) with $X_i$ replaced by $X_i^2$, we then conclude
\begin{equation*}
\boldsymbol{E}[\sum_{i=1}^{T-1} \sum_{j=i+1}^T f(j-i) X_i X_j]  \leq (\boldsymbol{E}[T])^{\frac{1}{2}}\left(\boldsymbol{E}[\sum_{i=1}^{T-1} \left(\sum_{j=i}^T f(j-i)\right)^2 ]\right)^\frac{1}{2}.
\end{equation*}
This equation is used either for $f=1$ in which case it gives
\begin{equation}
\label{eq:a1.p2}
\boldsymbol{E}[\sum_{i=1}^{T-1} \sum_{j=i+1}^T X_i X_j]  \leq (\boldsymbol{E}[T])^{\frac{1}{2}} (\boldsymbol{E}[T^3])^\frac{1}{2},
\end{equation}
or for a summable function $f$ for which it gives
\begin{equation}
\label{eq:a1.p3}
\boldsymbol{E}[\sum_{i=1}^{T-1} \sum_{j=i+1}^T f(j-i) X_i X_j] \leq \boldsymbol{E}[T] \sum_{k=1}^\infty f(k).
\end{equation}
Finally, to bound sums of three point correlation functions we again use optional stopping time theorem and Cauchy-Schwartz inequality (for $T < 3 $ the sum below is set to be zero),
\begin{align*}
\boldsymbol{E}[\sum_{j=1}^{T-2} \sum_{k=j+1}^{T-1} \sum_{l= k+1}^T X_j X_k X_l] &\leq \boldsymbol{E}[\sum_{j=1}^{T-2} \sum_{k=j+1}^{T-1} \sum_{l= k+1}^T X_j X_k] \\
														&\leq \left([\boldsymbol{E}[\sum_{j=1}^{T-2} \sum_{k=j+1}^{T-1}  X_j^2 X_k^2]\right)^{\frac{1}{2}} \left([\boldsymbol{E}[\sum_{j=1}^{T-2} \sum_{k=j+1}^{T-1} (T -k)^2] \right)^\frac{1}{2}.
\end{align*}
Using Eq.~(\ref{eq:a1.p2}) with $X_j$ replaced by $X_j^2$ we conclude,
\begin{equation}
\label{eq:a1.p4}
\boldsymbol{E}[\sum_{j=1}^{T-2} \sum_{k=j+1}^{T-1} \sum_{l= k+1}^T X_j X_k X_l]  \leq (\boldsymbol{E}[T])^{\frac{1}{4}} (\boldsymbol{E}[T^3])^\frac{1}{4} (\boldsymbol{E}[T^4])^\frac{1}{2}.
\end{equation}
We note that in the proof of Lemma~\ref{lem:5.4} these inequalities are used in a setting where for all integers $k$, $\boldsymbol{E}[T^k] \leq C \boldsymbol{E}[T]^k$ holds for some constant $C$.

\section{Notation Index}
\label{App:D} 

In this section we compile all relevant notation we use in this paper. However, we do not report notation that is only locally used, i.e. notation that is only valid within the limits of a certain proof.         
\vspace{.5cm}

\begin{tabular}{ p{7cm} l     }
 $  \mathcal{H}, \, \mathcal{B}(\mathcal{H}), \ad_H $ & Above and Below Eq. \eqref{nomame} \\   
 $  \|  \cdot  \|  $, $  \|  \cdot  \|_2  $,  $  \|  \cdot  \|_1  $, $  \|  \cdot  \|_{2, {\rm op}}  $, $  \|  \cdot  \|_{1, {\rm op}}  $, $  \|  \cdot  \|_{\infty, {\rm op}}  $    & Eqs. \eqref{nomame} and \eqref{nomames} \\
 $  \mathcal{N}$, $P_\nu$  &  Below Eq. \eqref{normalis} and Eq. \eqref{eq:spectral_decomposition_of_C} \\
$  (\mathcal{X}, \sigma)$  & Above Eq. \eqref{normalis} \\
$ \mu, \, V_{\xi}(\nu)  $, $V_\xi$  & Above and below Eq. \eqref{normalis} \\ 
$\Phi_\xi $ & Below Eq. \eqref{eq:spectral_decomposition_of_C} \\  
$ (\Xi, {\mathcal{F}}),  \mathbb{P}, \mathbb{E} , \,  \underline{\tau}, \, \underline{\xi}, \,  t_j $ 
  & Beginning of Section \ref{ET} and Appendix \ref{App:0}  \\
 $N_s$, $ H $, $\varepsilon  $ & Above Eq. \eqref{prop} \\
$ \sigma_{\varepsilon}^{(s,u)} $ &  Eq. \eqref{prop} \\ 
$ \sigma_{\varepsilon}^{(s,u)} (E)$ &  Eq. \eqref{para} \\
$ \mathbb{P}^\varepsilon_{\rho}(E),  \,  \mathbb{E}^\varepsilon_{\rho}, \,   \mathbb{P}^0_{\rho}(E) $ &  Eq. \eqref{eq:2p} and below, Eq. \eqref{eq:5} \\  
 $ \rho_s (\underline{\tau}, \underline{\xi} )$ &  Eq. \eqref{E:rhos} \\ 
 $ f(\xi | \nu)  $ &  Eq. \eqref{pto} \\  
  $  \mu_\nu $ &  Eq. \eqref{E:munu} \\
$ l_s^T (\nu | \underline{\xi}) $ & Eq. \eqref{eq:likelihood} \\
$ \n_s(T) $ & Eq. \eqref{estin}\\
$ \mathcal{M}_t$ & Eq. \eqref{proc} and below \\
$ Y_s $ and $ Q	(\nu', \nu) $ & Eq. \eqref{Qdef} and above \\
$  \pi_{\rho}(\nu), \Gamma_t $  &  Above Eq. \eqref{Qdef} \\
$  T(\varepsilon )  $ & Theorem \ref{thm:case1} \\
  $\mathcal{L}_{\varepsilon}$, $ \Phi $  & Eq. \eqref{eq:Lin}   \\
   \end{tabular}

\begin{tabular}{ p{7cm} l     } 
$ \mathcal{P} $, $ \mathcal{P}_{\nu} $   &   Eqs. \eqref{pndjo},  \eqref{pen} \\ 
$  \mathcal{P}_{\perp}, \,  \langle \cdot, \cdot \rangle   $ &  Eq. \eqref{IN} and Above \\
$g_{sp}$  & Below Eq. \eqref{eq:decoupling}\\
$ g  $ &  Eq. \eqref{Eq:g_def} \\
$  \mathcal{L}_{\varepsilon}^{\perp}$
   & Above Eq. \eqref{tointegrate}  \\
$ 
 \, Q   $ & Eq.  \eqref{Qs} \\
$  A_{j}^{(1)}, \,  A_{j, k}^{(2)}, \, R_{j, k}^{(2)}, \, R_{j, k, l}^{(3)}, \, B_j 
  $ & Lemma \ref{lem:5.4} \\ 
$  E^{(T+s, s)}_{\nu}, \, \mathbf{E}_{\mu_{\nu}} $ &  Eq.  \eqref{Enu} and below \\ 
$  X_{\nu, \nu'}(\xi), \, I $ & Eq. \eqref{def_I} and above.  \\
$P, \, \upsilon, \, \Omega, \, \Xi_0, \mathcal{F}_0$ & Beginning of Appendix \ref{App:0} \\ 
$ \theta^{(s, u)}(\underline{\tau}, \underline{\xi}) $ & Eq. \eqref{great fun} and below \\ 
$\mathcal{A} $ & Eq \eqref{MA}\\
$\Theta$ &  Eq. \eqref{Thetat} 
\end{tabular}

\bibliography{NonDemolition}
\bibliographystyle{unsrt}

\end{document}